\newtheorem{theorem}{Theorem}
\newtheorem{lemma}[theorem]{Lemma}
\newtheorem{proposition}[theorem]{Proposition}
\newtheorem{example}{Example}
\newtheorem{remark}[theorem]{Remark}
\title{Subspace Controllability and Clebsch-Gordan Decomposition of Symmetric Quantum Networks}
\author{D. D'Alessandro\thanks{Department of Mathematics, Iowa State University, Ames, IA 50011, daless@iastate.edu}}
\begin{document}

\maketitle

\begin{abstract}
We describe  a framework for the controllability analysis of networks of $n$  quantum systems of an arbitrary dimension $d$, {\it qudits}, with  dynamics determined by Hamiltonians that are invariant under the permutation group $S_n$. Because of the symmetry, the underlying Hilbert space,  ${\cal H}=(\mathbb{C}^d)^{\otimes n}$,  splits into invariant subspaces for the Lie algebra of $S_n$-invariant elements in $u(d^n)$, denoted here by $u^{S_n}(d^n)$. The dynamical Lie algebra ${\cal L}$, which determines the controllability properties of the system,  is a Lie subalgebra of such a Lie algebra $u^{S_n}(d^n)$. If ${\cal L}$ acts as $su\left( \dim(V) \right)$ on each of the invariant subspaces $V$, the system is called {\it subspace controllable}.  Our approach is based on recognizing that such   a splitting of the Hilbert space ${\cal H}$ coincides with the {\it Clebsch-Gordan} splitting of $(\mathbb{C}^d)^{\otimes n}$ into {\it irreducible representations} of $su(d)$. In this view,  $u^{S_n}(d^n)$, is the direct sum of certain $su(n_j)$ for some $n_j$'s we shall specify, and its {\it center}  which is the Abelian (Lie) algebra generated by the {\it Casimir operators}. Generalizing the situation in \cite{confraJMP} and \cite{Marco},   we consider  dynamics with arbitrary local simultaneous control on the qudits 
and a symmetric two body interaction. Most of the results presented are for general $n$ and $d$ but we recast the results of \cite{confraJMP}, \cite{Marco} in this new general framework and provide a complete treatment and proof of subspace controllability for the new case of $n=3$, $d=3$, that is,  {\it three qutrits}. Our results are motivated by recent great interest in symmetric quantum states and systems (see, e.g.,  \cite{PIPPO1}, \cite{Harrow}, \cite{Migdal}) both at the theoretical and experimental level and by recent proposals in Geometric Quantum Machine Learning \cite{Marco1} \cite{Marco2} to exploit symmetries in the data and  in quantum circuits to improve the performance of  learning protocols.  
\end{abstract}

\section{Introduction}\label{Intro}

It  has been known since the beginning of the (modern) theory of quantum control (see, e.g., \cite{Tarn},  \cite{Jurde}, \cite{Murti})
that the controllability of closed, finite dimensional, quantum systems can  be assessed by examining the Lie algebra generated by the available Hamiltonians. Furthermore, it  follows from classical analysis results that controllability is {\it generic}, that is, almost any pair of Hamiltonians generate the whole Lie algebra $su(N)$ which implies that any special unitary evolution can be performed  by the system at hand (see, e.g., \cite{mikobook} for a proof).  In spite of the generic nature   of controllability, uncontrollable systems are of great interest both in theory and in applications. Uncontrollable quantum systems occur most often as a consequence of the presence of a {\it symmetry group}, that is, a  group which commutes with all the available Hamiltonians for the system.  Systems which admits symmetries may naturally arise as physical systems (e.g., a system of undistinguishable bosons) or may be artificially built  as for example in {\it Geometric Quantum Machine Learning} protocols (see, e.g., \cite{Marco2}) where one would like to take advantage of the symmetry of (quantum) data sets to overcome some of the limitations of quantum machine learning \cite{Marco1}. For these and other reasons, interest in symmetric quantum states and systems has been very large  in the last few years and the subject  of the work of many groups (see, e.g., \cite{Aloy}, \cite{Harrow}, \cite{Larocca}-\cite{Marco2}, \cite{Skolik}).

If a quantum system of dimension $N$ admits a finite group of symmetries $G$,  the largest possible dynamical Lie algebra is $u^{G}(N)$, defined as  the Lie subalgebra of $u(N)$ which commutes with all the elements of $G$. Every possible dynamical Lie algebra ${\cal L}$ has to be a Lie subalgebra of $u^{G}(N)$. In appropriate coordinates, $u^{G}(N)$  takes a block diagonal form where each block is an arbitrary element of $u(\bar n)$ for appropriate dimension $\bar n$. The dynamical Lie algebra ${\cal L}$  also takes a block diagonal form and the underlying Hilbert space of the system splits  into the direct sum of invariant subspaces. If it is possible to perform any (special) unitary operation on each of the invariant subspaces, the system is called {\it subspace controllable}. Subspace controllability has recently  been the topic  of several studies  for several types of systems (see, e.g., \cite{confraJMP}, \cite{Marco}, \cite{Wang1}, \cite{Wang2}). The splitting of the Hilbert space is a consequence of the relation between the representations of $u^{G}(N)$ and the representations of the finite group $G$, a property often referred to as {\it Schur-Weyl duality}  (see  \cite{conJonas} for a review in the context of control theory). Bases for the invariant subspaces (or equivalently the change of coordinates that puts the system in block diagonal form) can be obtained from the  knowledge of the so-called {\it Generalized Young Symmetrizers (GYS)} which are projection matrices belonging to the {\it group algebra} (see, e.g., \cite{FH})  of the group $G$. In alternative,  one can circumvent the use and knowledge of GYS's by a purely Lie algebraic approach as described in section 4.3.4. of \cite{mikobook}.

The case where $G=S_n$, the permutation group on $n$ {objects, and such objects are two dimensional quantum system (${\it qubits}$),  is the most studied \cite{confraJMP}, \cite{Marco}. In these studies,  it is assumed that arbitrary local transformations, which coincide on any qubit,   are available along with one or more symmetric $k$-body Hamiltonians, the standard case being the one of a single 
$2-$body Hamiltonian.\footnote{There are also studies where symmetric interaction is allowed only between certain subsystems, in which case the symmetry group is a proper subgroup of $S_n$ \cite{confraSCL}.}  The results in \cite{Marco} show that, in such cases, the dynamical Lie algebra generated by the given Hamiltonians is $su(\dim(V))$ for any invariant subspace $V$,  plus the span of  elements which are  a multiple of the identity on each invariant subspace. However  the dynamical Lie algebra is, except for special cases,  not the full $u^{S_n} (2^n)$,\footnote{nor its subalgebra of trace zero matrices $su^{S_n}(2^n)$.} because it lacks control on the relative phases between the various invariant subspaces. This corrected a previous computational mistake in \cite{confraJMP} where, for the case of $2$-body interaction, it was claimed this to be the case. Subspace controllability however was claimed both  in \cite{confraJMP} and \cite{Marco} and the proof is similar,  based on induction and direct calculations of the Lie brackets. 

In this paper, with the goal of generalizing these results to systems of arbitrary dimension $d \geq 2$, {\it qudits}, and arbitrary Hamiltonians, we take a different route. The starting point is recognizing that the above splitting of the Hilbert space 
${\cal H} =(\mathbb{C}^d)^{\otimes n}$ coincides with the Clebsch-Gordan (CG)  (see, e.g., \cite{Alex}) splitting into irreducible representations of $su(d)$ for the tensor product of  $n$ irreducible {\it standard  representations}. The invariant subspaces are therefore (possibly repeated) {\it irreducible modules} of $su(d)$. The Lie algebra 
$u^{S_n}(d^n)$ is the direct sum of all the $su(\dim(V))$'s where $V$  runs over all the irreducible modules that appear in the CG decomposition,  plus its center. Such a center will  be characterized as the Abelian algebra (or Lie algebra) generated by the {\it Casimir operators} which act as scalars on each irreducible module with the value of the scalars depending on (and labeling) the various (nonisomorphic) irreducible modules. With this characterization of $u^{S_n}(d^n)$, in order to describe  the dynamical Lie algebra generated by a certain set of symmetric Hamiltonians, one splits such Hamiltonians according to their components on the center and on the orthogonal complement\footnote{which is the direct sum of the $su(\dim(V))$'s}. Under a semisimplicity assumption (verified in the case of subspace controllability) the dynamical Lie algebra will be the direct sum of the component generated by such orthogonal complements and the span of the component of the generators on the center. This is a general fact valid for any set of generating symmetric Hamiltonians. Focusing on the situation where the dynamical Lie algebra is generated by symmetric {\it local} operations and a two body Hamiltonian we can recast in this framework the results of \cite{confraJMP} and \cite{Marco} and prove subspace controllability for a new case, the case of {\it three qutrits}.

The paper is organized as follows: In section \ref{Background} we recall some general facts about irreducible representations ({\it irreps}) of $su(d)$.  In section \ref{sudn}, we describe the structure of $u^{S_n}\left( d^n\right)$ and establish the connection with the Clebsch-Gordan decomposition of $(\mathbb{C}^d)^{\otimes n}$. This section contains a detailed discussion of the {\it Casimir algebra}, that is the algebra generated by the Casimir operators which is the center of $u^{S_n}(d^n)$.  In section \ref{dynLA}, we describe the general structure of a dynamical Lie algebra generated by symmetric Hamiltonians, that is, Hamiltonians that are invariant under the action of the  
permutation group. Within this framework,  we reinterpret the results of \cite{Marco} for the case of networks of $n$ qubits in section \ref{caseofnqubits}. The new case of three qutrits is treated in section \ref{threequtrits} where we show subspace controllability for this system. This section also presents some results for the case of general $n$ qutrits. In section \ref{CandD} we draw some conclusions. In particular we discuss how the treatment for the case of three qutrits could be used as a blueprint to prove the general subspace controllability property for a system of $n$ qudits. The appendix contains a number of auxiliary and complementary results and the proofs omitted from the main text.

\section{Generalities about $su(d)$ representations}\label{Background}

We recall some known facts about irreducible representations of $su(d)$ that we will need in the paper emphasizing the {\it Gelfand-Tsetlin (GT) formalism}. We shall mostly follow \cite{Alex} to which we refer for further details and for references to the original literature.\footnote{The paper \cite{Alex} is  linked to a  useful web-site where one can carry out computations concerning irreducible representations of $su(d)$. }  However, there will be some small adjustments to prepare for the theory of the following sections.

\subsection{Labeling irreducible representations of $su(d)$}
 
 An {\it  irreducible  representation ({irrep})}  of $su(d)$ is identified by a $d-$tuple of natural number, the {\it i-weight},  
 $(m_{1,d},  m_{2,d},...,m_{d,d})$ with $m_{1,d} \geq m_{2,d} \geq \cdots \geq m_{d,d}$. $d-$tuples that differ by an integer in all entries correspond to the same irrep. Therefore it is customary to set $m_{d,d}=0$. The representation $S$  corresponding to   $(m_{1,d},  m_{2,d},...,m_{d,d})$ has dimension given by the formula 
 \begin{equation}\label{dimensione}
 \dim(S)=\prod_{1 \leq r < s \leq d} \left( 1+ \frac{m_{r,d}-m_{s,d}}{s-r}\right). 
 \end{equation}

There are other ways to label irreps of $su(d)$ related to the i-weight above. For example, one often considers the {\it quantum numbers} $p_1:=m_{1,d}-m_{2,d}$, 
$p_2:=m_{2,d}-m_{3,d}$,...,$p_{d-1}:=m_{d-1,d}-m_{d,d}$, or equivalently, the quantum numbers $\hat p_1:=m_{1,d}-m_{d,d}$, 
$\hat p_2:=m_{2,d}-m_{d,d}$,...,$\hat p_{d-1}:=m_{d-1,d}-m_{d,d}$. Another  popular way, 
 is to use {\it Young diagrams}, where the representation  $S:= \left( m_{1,d},  m_{2,d},...,m_{d,d} \right)$ corresponds to a Young diagram with $m_{1,d}$ boxes in the first row, $m_{2,d}$ boxes in the second row and so on. For example we have for $d=3$ 
$$
\left( 4,2,0 \right) \longleftrightarrow \young(\quad \quad \quad \quad ,\quad \quad ). 
$$

An alternative equivalent labeling scheme is through the so called {\bf Casimir operators}.(see, e.g., \cite{Pais} and the references therein). They are defined as follows. Consider a module $V$ for an  irreducible  representation of $su(d)$ and an isomorphism  $V \rightarrow V$ which commutes with all the elements of the representation. Then according to 
Schur Lemma (see, e.g., \cite{FH}) such an isomorphism has to be a multiple of the identity. For $su(d)$ there are $d-1$ independent such operators which are called Casimir operators, $C_2,C_3,...,C_d$.  
They are constructed \cite{CasConstr}, \cite{Bieden} starting from an orthonormal basis of $isu(d)$, $\{F_1,...,F_{d^2-1}\}$  and forming homogeneous quadratic for $C_2$, cubic   for $C_3$, and so on polynomials in these operators  with appropriate coefficients. The simplest and most famous case is the case of $su(2)$ for which the only Casimir operator is the quadratic one $C_2$ given by 
\begin{equation}\label{Casimirsu2}
C_2:=S_x^2+S_y^2+S_z^2,
\end{equation}  
where $S_{x,y,z}$ are the angular momentum operators in the $x,y,z$  directions (corresponding in the standard representation to the Pauli matrices $\sigma_x$, $\sigma_y$, and $\sigma_z$).  Any Casimir operator acts as a multiple of the identity on a given irreducible representation. Such a multiple is called the 
{\it Casimir eigenvalue} and it depends on the representation. Different (nonisomorphic)  irreducible representations may have the same value for one Casimir eigenvalue. However the set of Casimir eigenvalues for 
$C_2,C_3,...,C_d$ ($d-1$ numbers $c_2,...,c_{d-1}$) uniquely determines the irreducible representation.

\subsection{Bases of irreducible modules}
Given an irreducible representation of $su(d)$,  $S=\left(m_{1,d},...,m_{d,d} \right)$, elements of a basis are in one to one correspondence with the so-called {\it  Gelfand-Tsetlin patterns}  which are triangular patterns with $d$ rows $j=d,d-1,...,1$ made up of $j$ elements $(m_{1,j},m_{2,j},...,m_{j,j})$ which are natural numbers. The rows are listed in a way that the row $d$ is the upper most, followed by the row $d-1$, and so on up to the lowest row which contains a single element $m_{1,1}$. The first row is the same for every element of the basis  and coincides with the signature labeling the representation, $S:=(m_{1,d},  m_{2,d},...,m_{d,d})$, that is, its i-weight. The elements $m_{k,l}$, 
$ 1 \leq l \leq d$, $1 \leq k \leq l$ in a GT pattern have to satisfy the {\it betweenness conditions}, i.e.,\footnote{This means from any element go one step down and then one step up moving from left to right, you will have a nonincreasing sequence of three numbers. See the examples (\ref{A1S})) below.}  
\begin{equation}\label{betweenness1}
m_{k,l} \geq m_{k,l-1} \geq m_{{k+1},l}
.
\end{equation}
This restricts the number of possible patterns (states) to a number given by $\dim(S)$ in (\ref{dimensione}). For example for $d=3$ and the representation $S=(2,1,0)$, from (\ref{dimensione}) 
we have $\dim(S)=8$ and we have the following GT patterns-states. 

\begin{equation}\label{A1S}
A_1:=\begin{pmatrix} 2 & \quad &  1 & \quad & 0 \cr 
\quad & 2 & \quad & 1 & \quad  \cr 
\quad & \quad & 2 & \quad & \quad \end{pmatrix}= \young(00,1),  \qquad w(A_1)=(2,1,0), 
\end{equation}

$$
A_2:=\begin{pmatrix} 2 & \quad &  1 & \quad & 0 \cr 
\quad & 2 & \quad & 1 & \quad \cr 
\quad & \quad & 1 & \quad & \quad \end{pmatrix}= \young(01,1),  \qquad w(A_2)=(1,2,0), 
$$

$$
A_3:=\begin{pmatrix} 2 & \quad &  1 & \quad & 0 \cr 
\quad & 2 & \quad & 0 & \quad\cr 
\quad & \quad & 2 & \quad & \quad \end{pmatrix}= \young(00,2),  \qquad w(A_3)=(2,0,1), 
$$

$$
A_4:=\begin{pmatrix} 2 & \quad &  1 & \quad & 0 \cr 
\quad & 2 & \quad & 0 & \quad \cr 
\quad & \quad & 1 & \quad & \quad \end{pmatrix}= \young(01,2),  \qquad w(A_4)=(1,1,1), 
$$

$$
A_5:=\begin{pmatrix} 2 & \quad &  1 & \quad & 0 \cr 
\quad & 2 & \quad & 0 & \quad  \cr 
\quad & \quad & 0 & \quad & \quad \end{pmatrix}= \young(11,2),  \qquad w(A_5)=(0,2,1), 
$$

$$
A_6:=\begin{pmatrix} 2 & \quad &  1 & \quad & 0 \cr 
\quad & 1 & \quad & 1 & \quad  \cr 
\quad & \quad & 1 & \quad & \quad \end{pmatrix}= \young(02,1),  \qquad w(A_6)=(1,1,1), 
$$

$$
A_7:=\begin{pmatrix} 2 & \quad &  1 & \quad & 0 \cr 
\quad & 1 & \quad & 0 & \quad  \cr 
\quad & \quad & 1 & \quad & \quad \end{pmatrix}= \young(02,2),  \qquad w(A_7)=(1,0,2), 
$$

$$
A_8:=\begin{pmatrix} 2 & \quad &  1 & \quad & 0 \cr 
\quad & 1 & \quad & 0 & \quad  \cr 
\quad & \quad & 0 & \quad & \quad \end{pmatrix}= \young(12,2),  \qquad w(A_8)=(0,1,2), 
$$

In (\ref{A1S}), we have also denoted the (alternative) representation of the state in terms of {\it semistandard Young tableaux (SSYT)} which are Young diagrams (corresponding to the given irreps) filled with integer numbers in $\{0,1,...,d-1\}$,  in nondecreasing order row-wise and in strictly increasing order column-wise.\footnote{In view of the development that will follow we have replaced the standard notation which uses  numbers $1,...,d$ with $0,1,...,d-1$.} There is a simple algorithm to go from a GT pattern to the corresponding SSYT and viceversa (see,e.g., \cite{Alex}).\footnote{Assume we have the GT pattern. The top row of the pattern $(m_{1,d},...,m_{d,d})$ indicates the shape of the Young tableaux, for which the first row has $m_{1,d}$ boxes, the second row has $m_{2,d}$ boxes and so on. That said, roughly speaking, the diagonals of the GT pattern correspond to rows of the SSYT and the rows in the GT pattern correspond to the numbers that fill the various rows in the SSYT. More specifically assume the GT pattern is given. The number at the bottom indicates the number of $1$'s in the first row. Then the next number up in the first diagonal minus the number on the bottom indicates the number of $2$'s in the first row. Then the next number up minus the current number indicates the number of $3$'s and so on. Then we move to the next GT diagonal (SSYT row). The number on the bottom indicates the number of $2$'s. The next number up minus the current number indicates the number of $3$'s and so on.  For example consider the pattern $A_5$ in (\ref{A1S}). $0$ at the bottom of the GT pattern says that there is no $1$ in the first row of the SSYT. Then the next $2$ indicates $2-0$ $2$'s in the first row and the following $2$ indicates $2-2=0$ $3$'s in the first row. Then we go to the next GT diagonal (SSYT row). $0$ indicates that there is no $2$'s. $1-0=1$ indicates that there is one $3$.  This gives the associate SSYT of $\young(22,3)$ which coincides with the one indicated in (\ref{A1S} ) since we make the notational change $1 \rightarrow 0$, $2 \rightarrow 1$ $3 \rightarrow 2$.  The process can be inverted following the principle that  diagonals in the GT pattern correspond to rows in the SSYT and rows in the GT pattern correspond to numbers $1$, $2$ ..., from bottom to top. The number of occurrences of $1$ in the first row of the SSYT is the number at the bottom of the GT pattern. The number above it along the diagonal is the bottom number + the number of occurrences of $2$'s in the first row of the SSYT. Call the number $m$. The number above in the first diagonal is $m$+ the number of occurrences of $3$ in the first row of the SSYT, and so on to complete the first diagonal. Then we move to the second diagonal of the GT pattern and second row of the SSYT. The number at the bottom is the number of occurrences of $2$'s in the second row of the SSYT. Call this number $m$, $m$ + the number of occurrences of $3$ in the second for of the SSYT gives the next number up, and so on. For example consider $A_5$ in (\ref{A1S}) and assume the SSYT $\young(22,3)$ is given. Consider the first row of the SSYT. Since there are no $1$'s the number at the bottom of the first diagonal is $0$. This number + the number of occurrences of $2$'s gives the second number on the diagonal which is $2$. This number + the number of occurrences of $3$'s which is $0$ gives the top number which is $2$. Moving to the next GT diagonal (SSYT row) the number of $2$'s which is $0$ gives the bottom number. This number + the number of occurrences of $3$ which is $1$ gives the next number up which is $1$. }  To each basis element $A$ in (\ref{A1S}) it is   associated a {\it weight vector}  $w(A)$ which is  defined as $w(A)=(w_1,w_2,...,w_d)$ where $w_1$ is the number of $0$'s in the SSYT, $w_2$ is the number of $1$'s and so on up to $w_d$ which is the number of $d-1$'s in the SSYT, or equivalently, $w_l=\sigma_l-\sigma_{l-1}$ where $\sigma_j$ is the sum of the entries of the $j$-th row, with $\sigma_0:=0$.  Weight vectors will be very important for us because, as we shall see,  they are the main tool to translate the basis elements as written in the GT or SSYT formalism to basis in terms of the computational basis in $(\mathbb{C}^d)^{\otimes n}$. It is also important to observe that weight vectors are {\it not} in one to one correspondence with basis elements. This happens only in the case of $su(2)$. We shall see, in the next subsection, that states with the same weight vector are eigenvectors of operators $S_z^l$, $l=1,...,d-1$ corresponding to degenerate eigenvalues (cf. (\ref{eigeig})).

Of particular interest to us is the {\it standard defining representation} which is represented by the i-weight $(1,0,0,...,0)$ and has dimension $d$.\footnote{We will take repeated tensor products of such representations.} This is the representation one normally write down when writing matrices in $su(d)$.  It is spanned  by patterns where only the first diagonal is different from zero. For example, in the case $d=3$, it is spanned by the states 
$$
S_1:=\begin{pmatrix}  
1        & \quad &      0    & \quad &   0 \cr 
\quad &      1   & \quad  &     0    &   \quad \cr
\quad &   \quad &    1     &    \quad & \quad 
\end{pmatrix}, \quad S_2=\begin{pmatrix}  
1        & \quad &      0    & \quad &   0 \cr 
\quad &      1   & \quad  &     0    &   \quad \cr
\quad &   \quad &    0     &    \quad & \quad 
\end{pmatrix}, \quad 
S_3=\begin{pmatrix}  
1        & \quad &      0    & \quad &   0 \cr 
\quad &      0   & \quad  &     0    &   \quad \cr
\quad &   \quad &    0     &    \quad & \quad 
\end{pmatrix}. 
$$ 
States of the standard representation $\{S_1,S_2,...,S_d\}$ are identified by their weight vectors, and we have $w(S_1)=(1,0,...,0)$, $w(S_{2})=(0,1,0,...,0)$,...,$w(S_d)=(0,0,...,0,1)$. This representation corresponds to a Young diagram with only one box, and the basis states are SSYT with the box filled by $0,1,...,d-1$.

\subsection{Operators $S_z^l$, and raising and lowering operators $S_{\pm}^l$, $l=1,...,d-1$} \label{duetre}
To describe a given representation one considers  the {\it operators $S_z^l$} and the {\it raising and lowering operators  $S_{\pm}^l$, $l=1,...,d-1$}. The operators $S_z^l$ are the operators such that every basis state $M$  is an eigenvector  of $S_z^l$ with eigenvalue given by $\frac{w_l(M)-w_{l+1}(M)}{2}$, where $w(M)$ denotes the weight vector of $M$, that is,   
\begin{equation}\label{eigeig}
S_z^l(M) =\frac{w_l(M)-w_{l+1}(M)}{2} M.
\end{equation}
 for $l=1,2,...,d-1$ (cf., e.g., section IV in \cite{Alex}). Notice that there is a one to one correspondence between the weight vectors and the eigenvalues of the operators $S_z^l$, $l=1,...,d-1$, that is, the weight vectors  determine the eigenvalues according to (\ref{eigeig}) and viceversa,  given the eigenvalues $\lambda_1(M),...,\lambda_{d-1}(M)$, they uniquely determine the weight vector.\footnote{In fact the system $w_l-w_{l+1}=2 \lambda_l $, $l=1,...,d-1$, $\sum_{l=1}^d w_l=\sigma_d$ (which is always verified.)  has a unique solution (the matrix of coefficients of the system of the $d$ equations above is easily seen by induction on $d$ to have determinant $d \not=0$).}

 The raising $S_+^l$ and lowering $S_-^l$ operator, $l=1,...,d-1$ act on a state $M$ by giving a linear combination of states. Fix $l\in \{ 1,...,d-1\}$.  The states that appear in the linear combination are the ones obtained by adding, for $S_+^l$, or subtracting, for $S_-^l$,  $1$  to one element of the $l$-th row of the GT pattern and eliminating  the patterns that are not feasible (that is,  do not satisfy the betweenness condition (\ref{betweenness1})).  The coefficients for such linear combinations are known explicitly (see formula (28) and (29) in \cite{Alex}). A special state called the {\it highest weight state} $\hat H$  of the representation is the unique state (GT pattern) such that each diagonal is constant (such as in $A_1$ in (\ref{A1S})). This  state $\hat H$ is such that the weight vector $w(\hat H)$ coincides with the i-weight labeling  the representation and it has the property that $S_+^l (\hat H)=0$, $\forall \, l=1,2,...,d-1$. The weight vector of any state appearing in $S_\pm^l(M)$ is 
 $\left( w_1,w_2,...,w_{l-1},w_l\pm 1, w_{l+1}\mp 1,w_{l+2},...,w_d \right)$ if $w(M)=\left( w_1,w_2,...,w_{l-1},w_l, w_{l+1},w_{l+2},...,w_d \right)$. The next two observations will be important for what follows.
 \begin{itemize}
 \item {\bf Fact 1} {\it The space ${\cal S}_j$ defined as the subspace spanned by the states $M$ with $w_d(M)=j$ for fixed $j$ is invariant under $S_z^l$, $S_{\pm}^l$, $l=1,...,d-2$. }
 
 \item {\bf Fact 2} {\it  For $l=1,2,...,d-2$, all the states with a given $d-1$ row 
 $\left(m_{1,d-1}, m_{2,d-1},...,m_{d-1,d-1} \right)$ together 
 with $S_{\pm}^l$, $S_z^l$, $l=1,...,d-2$ give the representation of $su(d-1)$ with  i-weight 
 $S= \left(m_{1,d-1}, m_{2,d-1},...,m_{d-1,d-1} \right)$.}
  \end{itemize}
  
The `modularity property' of {\bf Fact 2} says that representations of $su(d-1)$ are somehow embedded in representations of $su(d)$, a fact that is already apparent  if one looks at the standard representation. This fact  follows by examining the coefficients in the recalled formulas (28) and (29) of \cite{Alex}, where the coefficients involved in $S_{\pm}^l(M)$ only depend on the entries of $M$ up to row $l+1$.

\subsection{Tensor products of representations and Clebsch-Gordan decomposition}

When considering the tensor product of two irreducible representations 
$S \otimes S^{'}$, the resulting representation  is, in general, not irreducible but it is   the direct sum of certain irreducible representations which may appear with various multiplicities. To understand which irreps appear in the tensor product and with which multiplicity, one can use a variation of the Littlewood-Richardson rule as explained in \cite{Alex}:  

{\bf Algorithm 1} {\it Let $S \otimes S^{'}$ the tensor product representation of representations $S$ and $S^{'}$.  Take a basis of $S^{'}$ (the result would be the same by using  $S$ instead of  $S^{'}$) and to each pattern associate the corresponding $B$-pattern obtained replacing $m_{k,l}$, for $l=1,...,d$, $k=1,...,l$ with $b_{k,l}:=m_{k,l}-m_{k,{l-1}}$, by setting $m_{k,0}:=0$. For example, the $B$-pattern associated with $A_1$ in (\ref{A1S})  is  (notice that $B$-patterns do not need to   
satisfy the betweenness condition (\ref{betweenness1})) 
$$
B_1=\begin{pmatrix}  
0        & \quad &      0    & \quad &   0 \cr 
\quad &      0   & \quad  &     1    &   \quad \cr
\quad &   \quad &    2     &    \quad & \quad 
\end{pmatrix}, 
$$ 
Then consider each of these $B$-patterns and take the i-weight  of the other irrep, that is,  $S:=(m_{1,d}, m_{2,d},...,m_{d,d})$. Follow the diagonals of the $B$-pattern from left to right and its elements from top  to bottom. For  each diagonal, add each   element to the signature starting from the last element and proceeding towards left. If at each step the resulting i-weight is a legit i-weight, i.e., 
$m_{1,d} \geq m_{2,d} \geq \cdots \geq m_{d,d}$, the final i-weight is an i-weight  of an 
irrep appearing in $S^{'} \otimes S$.

 In more explicit  terms,  the algorithm is as follows. Starts with $(m_{1,d},...,m_{d,d})$ (of $S$) and add $(0,0,...,0,b_{1,d})$. If the result is legit, add $(0,0,...,b_{1,d-1},0)$.  If the result is legit, add $(0,0,...,b_{1,d-2},0,0)$, and so on up to adding $(b_{1,1},0,...,0)$. Then continue considering the second diagonal by adding $(0,0,...,0,b_{2,d})$ and if the result is legit add $(0,0,...,b_{2,d-1},0)$, and so on up to $(0,b_{2,2},0,...,0)$, and so on for all diagonals until the last one which is trivial (we add $(0,0,...,0)$ because $b_{d,d}$ is always zero since $m_{d,d}$ is normalized to be zero).

Each time an i-weight   results from such a process accounts for one time the corresponding representation appears in the tensor product. Therefore,  this method also allows to find the multiplicities of the representations.  
Consider for example the $B_1$ above and assume $S=(3,2,0)$. The process goes as follows 
$$
(3,2,0)+(0,0,0) \rightarrow (3,2,0)+ (0,0,0)\rightarrow (3,2,0)+ (2,0,0) \rightarrow (5,2,0) + (0,0,0) \rightarrow 
$$
$$
 (5,2,0) + (0,1,0) \rightarrow   (5,3,0) + (0,0,0)=  (5,3,0). 
$$}

\qed

\vspace{0.5cm}

If $S^{''}$ appears in the decomposition of $S \otimes S^{''}$, a basis state $M^{''}$ can be written as a linear combination 
$$
M^{''}=\sum_{M,M^{'}} C_{M,M^{'}}^{M^{''}} M \otimes M^{'}, 
$$
where $M$ and $M^{'}$ run over the bases of $S$ and $S^{'}$ respectively and the coefficients $C_{M,M^{'}}^{M^{''}}$ are the  {\it Clebsch-Gordan coefficients} to which much literature is devoted. They satisfy the so-called {\it selection rules}, concerning the weights $w$ of each of the states. These impose that $w(M^{''})\not=w(M)+w(M^{'}) \Rightarrow   C_{M,M^{'}}^{M^{''}} =0$, where the sum of the weights is defined to be  component-wise. 

In the case one considers the tensor products of more than two representations, the previous setting is applied several times by using the distributive property of the tensor product of representations with respect to the direct sum, that is, 
for irreducible modules $S_1,...,S_m$ and $\hat S$, we have 
$$
\left( S_1 \oplus S_2 \oplus \cdots \oplus S_m \right) \otimes \hat S=
 ( S_1 \otimes \hat S) \oplus  ( S_2 \otimes \hat S) \oplus \cdots \oplus ( S_m \otimes \hat S) 
$$

\section{The Lie algebra $u^{S_n}(d^n)$ and the Clebsch-Gordan decomposition of $(\mathbb{C}^d)^{\otimes n}$}\label{sudn}

\subsection{Clebsch-Gordan decomposition for tensor products of standard representations} 

When dealing with quantum information problems, the basis of choice for  the irreducible module 
$\mathbb{C}^d$ (for the standard representation) is spanned by $\{ |0\rangle, |1\rangle,...,|d-1 \rangle$ and the correspondence with the SSYT and (GT)  weight vector is the obvious  one\footnote{For the standard representation the correspondence between GT pattern states  and weight vectors is one to one.}  
$$\young(j) \leftrightarrow |j\rangle \leftrightarrow w=(0,0,..,0,1,0,..,0), $$ 
where the $1$ in the weight vector $w$ appears in the $(j+1)$-th position.  The operators $S_z^l$, $S_{\pm}^l$, for $l=1,...,d-1$ are defined as\footnote{Notice that to keep track of standard quantum information theory notation we made a shift of indexes where the state $|l-1\rangle$ represents the state with weight $w_l=1$ and all other weights equal to zero.} 
\begin{equation}\label{Szpmstandard}
S_z^l |l-1 \rangle =\frac{1}{2} |l-1 \rangle, \quad 
S_z^l |l \rangle =- \frac{1}{2} |l\rangle, \qquad S_+^l |l\rangle=
|l-1\rangle, \quad  S_-^l |l-1\rangle =|l \rangle, 
\end{equation}
with all the other applications of $S_{z,\pm}^l$ being zero.  In particular, notice that $iS_{\pm}^l$, $l=1,...,d-1$ are {\it not} in $su(d)$ but 
\begin{equation}\label{rrt}
iS_x^l:=i\frac{S_{+}^l+S_{-}^l}{2}, \qquad iS_y^l:=\frac{S_+^l -S_-^l}{2}
\end{equation} 
are.

When considering systems of {\it $n$ qudits},  we have  the Hilbert space ${\cal H}:=(\mathbb{C}^d)^{\otimes n}$ as the tensor product of $n$ standard representations.  In this case,  the operators $S_{z,\pm}^l$ are replaced by the corresponding 
\begin{equation}\label{totalOperators} 
\hat S_{z,\pm}^l:=S_{z,\pm}^l \otimes {\bf 1} \otimes \cdots \otimes {\bf 1}+ 
{\bf 1} \otimes S_{z,\pm}^l \otimes {\bf 1}  \cdots \otimes {\bf 1}+\cdots {\bf 1}\otimes \cdots \otimes {\bf 1} \otimes S_{z,\pm}^l. 
\end{equation}
This gives the {\it tensor product representation}  of $su(d)$ which  is not reducible and it splits into irreducible representations according to the {\it Clebsch-Gordan decomposition}, recursively. 

Let us examine which representations appear in this decomposition. Consider Algorithm 1 with a representation $\tilde S \otimes S^{'}$ where $S^{'}$ is the standard representation. For $S^{'}$ the standard representation, the $B$-patterns are particularly simple: only the first diagonal is different from zero and contains only one $1$'s and all zeros. In particular $B_{j}$, $j=1,...,d$,  is such that  $b_{1,d+1-j}=1$ and all the other entries  are equal to zero. If $\tilde S$ is a representation with i-weight  $S=(m_{1,d},m_{2,d},...,m_{d,d})$,  the representations  that appear in the tensor product with $S^{'}$, the standard representation, are the ones obtained from $S$ adding $1$ to one of the entries  in the i-weight  without violating  the rule $m_{1,d} \geq m_{2,d} \geq \cdots \geq m_{d,d}$. In particular every representation has necessarily multiplicity one in this case. We are interested in the tensor product of $n$ irreducible standard representations $S:=(1,0,...,0)$.  Therefore this procedure has to be iterated $n-1$ times. As a result, the same representation may appear more than once in the final (CG) decomposition. The following proposition clarifies the situation.

\begin{proposition}\label{clarif}
Let $S$ be the standard defining representation of $su(d)$. Then 
$$
S^{\otimes n}=\bigoplus_{ m_1+m_2+\cdots m_d=n} (m_1,m_2,...,m_d)^{\oplus k_{(m_1,m_2,...,m_d)}}, 
$$
where the sum runs over all the irreps of  $su(d)$ with i-weight  
$(m_1,m_2,...,m_d)$ such that $m_1+m_2+\cdots +m_d=n$,\footnote{Here we do not use the normalization convention where we set $m_d=0$, so that, for example, for $n=6$ $(2,2,2)$ is a legitimate representation  of $su(3)$ coinciding with $(0,0,0)$.}  and   $k_{(m_1,m_2,...,m_d)}$ denotes the {\it multiplicity} of $(m_1,m_2,...,m_d)$. The multiplicity $k_{(m_1,m_2,...,m_d)}$ is calculated recursively on $n$ and it is equal to  
\begin{equation}\label{multiplicity3}
k_{(m_1,m_2,...,m_d)}=k_{(m_1-1,m_2,...,m_d)}+k_{(m_1,m_2-1,...,m_d)}+\cdots+k_{(m_1,m_2,...,m_d-1)}, 
\end{equation}
where $k_{(j_1,...,j_d)}$ is set equal to zero for a non admissible $d-$tuple $(j_1,...,j_d)$, and $k_{(1,0,...0)}=1$.
\end{proposition}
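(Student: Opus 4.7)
The plan is a straightforward induction on $n$. The base case $n = 1$ is immediate, since $S^{\otimes 1} = S = (1, 0, \ldots, 0)$, so $k_{(1, 0, \ldots, 0)} = 1$ and all other multiplicities vanish. For the inductive step one writes $S^{\otimes n} = S^{\otimes (n-1)} \otimes S$, applies the inductive hypothesis to the first factor, and uses distributivity of the tensor product over the direct sum (the last displayed formula of Section \ref{Background}). This reduces the entire proof to a single calculation, namely the decomposition of an arbitrary irrep $(m_1, \ldots, m_d)$ of $su(d)$ tensored with the standard representation $S$.

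For this calculation I would invoke Algorithm 1 with $S' = S$. A basis state $S_j$ of $S$ has a $B$-pattern consisting of a single $1$ on the first diagonal (at position $(1, d+1-j)$) with all other entries zero; as $j$ ranges over $1, \ldots, d$, this $1$ occupies each of the $d$ positions of the first diagonal exactly once. Running Algorithm 1 on such a $B$-pattern amounts to incrementing exactly one entry of $(m_1, \ldots, m_d)$ by $1$, and the only admissibility check required is that the resulting tuple is weakly decreasing, that is $m_{l-1} \geq m_l + 1$ when $l > 1$ (the inequality $m_l + 1 \geq m_{l+1}$ holds automatically). Hence
$$
(m_1, \ldots, m_d) \otimes S \;=\; \bigoplus_{l} (m_1, \ldots, m_l + 1, \ldots, m_d),
$$
where the direct sum ranges over those $l \in \{1, \ldots, d\}$ for which $(m_1, \ldots, m_l + 1, \ldots, m_d)$ is a legitimate i-weight, and each admissible summand appears with multiplicity one.

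Substituting this decomposition into $S^{\otimes n} = S^{\otimes (n-1)} \otimes S$ and collecting terms, an irrep $(m_1, \ldots, m_d)$ with $\sum_i m_i = n$ arises in $S^{\otimes n}$ precisely from those irreps $(m_1, \ldots, m_j - 1, \ldots, m_d)$ (with sum $n - 1$) for which the latter is itself a legitimate i-weight, each contributing the multiplicity assigned to it at stage $n - 1$. Summing over $j = 1, \ldots, d$ with the convention $k_{(\cdots)} = 0$ for non-admissible tuples yields exactly the recursion (\ref{multiplicity3}). No step is technically deep; the only piece that requires a moment of care is checking that the single nonzero entry in the $B$-patterns of the standard representation makes every intermediate step of Algorithm 1 trivially admissible, so that the substantive admissibility condition is only the weakly-decreasing property of the final i-weight.
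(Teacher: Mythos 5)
Your proposal is correct and follows essentially the same route as the paper: induction on $n$, using the fact (from Algorithm 1 with the standard representation's single-entry $B$-patterns) that tensoring an irrep $(m_1,\ldots,m_d)$ with $S$ adds $1$ to one entry, each admissible increment occurring with multiplicity one, which yields the recursion (\ref{multiplicity3}). The only minor difference is that the paper also explicitly checks that every tuple with $m_1+\cdots+m_d=n$ actually occurs (by decrementing the rightmost strictly positive entry and invoking the inductive hypothesis), a point your argument leaves implicit but which follows at once from the recursion and the base case.
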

\begin{proof}
We use induction on $n$.  First we show that {\it every}  
$(m_1,m_2,...,m_d)$ representation with $m_1+m_2+\cdots m_d=n$ is admissible.  If we consider $(m_1,m_2,...,m_d)$ and starting from $m_d$ and moving leftwards we choose the first index $m_j$ which is strictly positive so that $(m_1,m_2,...,m_j-1,0,...,0)$ is a possible pattern in $S^{\otimes n-1}$ we know, by the inductive assumption, to be included  in $S^{\otimes n-1}$. Using the $B_j$, $B$-pattern of the standard representation we obtain $(m_1,m_2,...,m_d)$. Since $(m_1,m_2,...,m_d)$ is arbitrary, this proves that {\it every}  representation  $(m_1,m_2,...,m_d)$, with $m_1+m_2+\cdots + m_d=n$ is present in $S^{\otimes n}$. Such a representation is obtained as many times as it is possible to find $(j_1,j_2,...,j_d)$ in 
$S^{\otimes n-1}$ such that by adding $(0,0,..,0,1,0,...,0)$ according to which $B$-pattern is used one obtains $(m_1,m_2,...,m_d)$. This is given by  
equation (\ref{multiplicity3}). 

\end{proof}

%%%%%%%%%%%%%%%%%%%%%%%%%%%%%%%%%%

\begin{remark}\label{quantumnumbers}
If we use the quantum 
numbers $p_j:=m_j-m_{j+1}$, $j=1,...,d-1$, to identify a representation  $(m_1,...,m_d)$, the representations that appear  in the decomposition are all the ones corresponding to $(p_1,...,p_{d-1})$ with $p_j \geq 0$, $j=1,..,d-1$,  
$p_1=m-\sum_{j=2}^{d-1} j p_j $, 
for $m=n,n-d,n-2d,...,n-\lfloor \frac{n}{d}\rfloor d$.\footnote{Consider the representation $(m_1,...,m_d)$, according to the value of $m_d$ which can be $0,1,...,\lfloor\frac{n}{d}\rfloor$ we can normalize the signature by subtracting $m_d$ to each index. The sum of all the remaining   therefore becomes  $m=n,n-d,n-2d,...,n-\lfloor \frac{n}{d}\rfloor d$, according to whether $m_d=0,1,...,\lfloor \frac{n}{d}\rfloor$. Now fix $m=n,n-d,n-2d,...,n-\lfloor \frac{n}{d}\rfloor d$.  Inverting $p_j:=m_j-m_{j+1}\geq 0$, $j=1,...,d-1$, we obtain $m_j=\sum_{k=j}^{d-1} p_k$. This together with $\sum_{j=1}^{d-1} {m_j}=m$, gives $p_1=m-\sum_{j=2}^{d-1} j p_j$.} 

We shall consider  in particular, in the following,  the cases of $su(2)$ and $su(3)$. For $su(2)$,  the admissible representations occurring in $(\mathbb{C}^2)^{\otimes n}$  are, specializing the above formula, parametrized by a single parameter $p$ with 
\begin{equation}\label{ponly}
p=n,n-2,n-4,...,n-2\lfloor \frac{n}{2}\rfloor. 
\end{equation}
For $su(3)$,  we can write the admissible representations occurring in $(\mathbb{C}^3)^{\otimes n}$ in terms of quantum numbers as 
\begin{equation}\label{admissrep}
(m-2j,j),  \qquad  \texttt{with} \qquad  \, m=n,n-3,n-6,...,n-3 \lfloor\frac{n}{3}\rfloor \qquad j=0,1,...,\lfloor \frac{m}{2} \rfloor.  
\end{equation}
\end{remark}

The bases of the resulting  irreducible modules can be found by using formulas for the Clebsch-Gordan coefficients recursively. Such explicit formulas  are known for the case of the tensor product of two representations $S \otimes S^{'}$ when  one of the representations is the standard one \cite{Vilekin}, which is the situation of interest here.

\subsection{Relation between the Lie algebra $u^{S_n}(d^n)$ and the Clebsch-Gordan decomposition of 
$(\mathbb{C}^d)^{\otimes n}$}

It must be emphasized that the CG deconposition of $(\mathbb{C}^d)^{\otimes n}$ is unique only up to isomorphisms of the corresponding sub-representations, in the following sense: The matrices $i\hat S_{z,\pm}^l$ in (\ref{totalOperators})  form a representation of $su(d)$ with reducible module $(\mathbb{C}^d)^{\otimes n}$, which is by definition the tensor product representation.   Such a representation  is reducible into irreducible modules according to the Clebsch-Gordan decomposition. Consider now two decompositions of the module $(\mathbb{C}^d)^{\otimes n}$ into irreducible modules.  if we fix a certain representation $R$ and consider in decomposition 1 the sum of all the representations isomorphic to $R$, $\oplus_{j=1}^{\bar n_1} S_j$ and in decomposition 2 the sum of all the representations isomorphic to $R$, $\oplus_{j=1}^{\bar n_2} T_j$, we have $\bar n_1=\bar n_2$ and $\oplus_{j=1}^{\bar n_1} S_j=\oplus_{j=1}^{\bar n_1} T_j$. In particular, 
the irreducible representations that appear (up to isomorphism)
only once in the decomposition are uniquely determined. This is a standard fact in representation theory (cf. e.g., Theorem 4.2.7 in \cite{mikobook}). We shall in the following refer to any (isomorphic) decomposition of 
$(\mathbb{C}^d)^{\otimes n}$, reducible module of the tensor product representation of $su(d)$,  as a {\it Clebsch-Gordan decomposition (of $(\mathbb{C}^d)^{\otimes n}$}
).

The vector space $(\mathbb{C}^d)^{\otimes n}$ is also a module of $S_n$ (the permutation group of $n$ objects) and for $u(d^n)$ (the Lie algebra of skew-Hermitian $d^n \times d^n$ matrices; in this case it would be the standard representation).\footnote{Recall that we denote by $u^{S_n}(d^n)$ ($su^{S_n}(d^n)$) the subalgebra of $u(d^n)$ ($su(d^n)$) of matrices in $u(d^n)$ commuting with (the given representation of) $S_n$.}  If $\{F_0,F_1, ....,F_{d^2}\}$ is an orthonormal  basis of $iu(d)$,  
 with $\{F_0,F_1,....,F_{d^2} \}$ Hermitian matrices, setting $F_0:=\frac{1}{\sqrt{d}} {\bf 1}$, then a 
 basis of  $u^{S_n}(d^n)$ is obtained as follows.  Consider a $d^2$-tuple of natural numbers $\left( j_0,j_1,...,j_{d^2-1} \right)$ with 
 $j_0+j_1+\cdots + j_{d^2-1}=n$ and denote by $F_{(j_0,j_1,...,j_{d^2-1})}$ the Hermitian matrix which is the sum of all tensor products with $j_0$ $F_0$'s,  
 $j_1$, $F_1$'s,...,$j_{d^2-1}$, $F_{d^2-1}$'s. These sums\footnote{Each of these sums contains $\frac{n!}{j_0!j_d!\cdots j_{d^2-1}!}$ tensor products.}  form a basis  for $iu^{S_n}(d^n)$.   The dimension of $u^{S_n}(d^n)$ is equal to the number of all the possible $d^2-$tuples, 
 $(j_0,...,j_{d^2-1})$ with $j_0+j_1+\cdots+j_{d^2-1}=n$, which is equal to $\begin{pmatrix} n+ d^2-1 \cr d^2-1 \end{pmatrix}$.

  We remark that 
 $$
 su(d)^{\otimes n} \subseteq su^{S_n}(d^n), \qquad  u(d)^{\otimes n} \subseteq u^{S_n}(d^n). 
 $$
 In particular $u(d)^{\otimes n}$ ($su(d)^{\otimes n}$) is the Lie subalgebra of $u^{S_n}(d^n)$ ($su^{S_n}(d^n)$) spanned by elements 
 $iF_{(j_0,j_1,...,j_{d^2-1})}$, where $F_{(j_0,j_1,...,j_{d^2-1})}$ is of the form  $F_{(n,0,...,0)}$,   $F_{(n-1,1,0,...,0)}$,  
 $F_{(n-1,0,1,0...,0)}$,...,  $F_{(n-1,0,0,...,0,1)}$ (or $F_{(n-1,1,0,...,0)}$,  $F_{(n-1,0,1,0...,0)}$,...,  $F_{(n-1,0,0,...,0,1)}$, only, respectively). It is a known fact and it follows from Schur-Weyl duality (see, e.g., \cite{conJonas}) that, in appropriate coordinates on $(\mathbb{C}^d)^{\otimes n}$, the Lie algebra 
 $u^{S_n}(d^n)$ gets a block diagonal form where each block can be an 
 arbitrary matrix in $u(\bar n)$ for certain $\bar n$, that is, we have 
 \begin{equation}\label{directsum2}
 u^{S_n}(d^n)=\bigoplus_j u\left( n_j \right).
 \end{equation}
 Therefore every (dynamical) Lie subalgebra of $u^{S_n}(d^n)$ also takes such a block diagonal form. The irreducible invariant vector spaces $V_j$ form a decomposition of 
 $(\mathbb{C}^d)^{\otimes n}:=\oplus_j V_j$, which, again, is not unique (see, for a discussion and a method to find {\it all} the decompositions Chapter 4 of \cite{mikobook} and \cite{confraQIC}). The crucial point of our approach is that such decompositions coincide with the Clebsch-Gordan decompositions. We have the following.

  \begin{theorem}\label{alwaysassumed}
 Consider a decomposition $(\mathbb{C}^d)^{\otimes n}=\oplus_l V_l$. It is a (Clebsch-Gordan) decomposition into irreducible representations for $su(d)^{\otimes n}$ ($u(d)^{\otimes n}$) if and only  it is an irreducible decomposition for $su^{S_n}(d^n)$ ($u^{S_n}(d^n))$). 
 \end{theorem}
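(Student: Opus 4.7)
The plan is to reduce the two irreducibility conditions to a single statement about invariant subspaces of one associative operator algebra on $(\mathbb{C}^d)^{\otimes n}$, identified via Schur-Weyl duality.

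First I would use the elementary fact that a complex subspace $W\subseteq (\mathbb{C}^d)^{\otimes n}$ is invariant under a real matrix Lie subalgebra $\mathfrak{g}\subseteq u(d^n)$ if and only if it is invariant under the unital associative algebra $A(\mathfrak{g})\subseteq M_{d^n}(\mathbb{C})$ generated by the complexification of $\mathfrak{g}$: the inclusion $\mathfrak{g}\,W\subseteq W$ is inherited by complex scalars, sums and products. Hence both invariance and irreducibility of $W$ depend only on $A(\mathfrak{g})$, and the problem reduces to showing $A(su(d)^{\otimes n})=A(u^{S_n}(d^n))$, with the same argument handling $u(d)^{\otimes n}$ versus $u^{S_n}(d^n)$.

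Next I would invoke Schur-Weyl duality to write $(\mathbb{C}^d)^{\otimes n}=\bigoplus_\lambda V_\lambda\otimes M_\lambda$, where $V_\lambda$ ranges over the distinct irreducible $su(d)$-modules appearing in the tensor product representation and $M_\lambda$ is the corresponding multiplicity space (an irreducible $S_n$-module). The double centralizer theorem applied to the $S_n$-action identifies the complexification of $u^{S_n}(d^n)$ with $\bigoplus_\lambda\mathrm{End}(V_\lambda)\otimes I_{M_\lambda}$, which is in particular the block form (\ref{directsum2}). On the other hand, the diagonal action of $sl(d,\mathbb{C})$ on the block $V_\lambda\otimes M_\lambda$ is $\rho_\lambda(sl(d,\mathbb{C}))\otimes I_{M_\lambda}$ with $\rho_\lambda$ an irreducible representation, so by the Jacobson density / Burnside theorem the image $\rho_\lambda(sl(d,\mathbb{C}))$ generates $\mathrm{End}(V_\lambda)$ as an associative algebra. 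Assembling over $\lambda$ gives $A(su(d)^{\otimes n})=\bigoplus_\lambda\mathrm{End}(V_\lambda)\otimes I_{M_\lambda}=A(u^{S_n}(d^n))$.

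To conclude, I would read off the irreducible subspaces of this common algebra: inside a block $V_\lambda\otimes M_\lambda$, an invariant subspace for $\mathrm{End}(V_\lambda)\otimes I_{M_\lambda}$ has the form $V_\lambda\otimes N$ for some subspace $N\subseteq M_\lambda$, and it is irreducible exactly when $\dim N=1$. Consequently a direct-sum decomposition $(\mathbb{C}^d)^{\otimes n}=\bigoplus_l V_l$ into joint irreducible subspaces is determined, block by block, by a choice of basis of each $M_\lambda$, and this is precisely the freedom available in a Clebsch-Gordan decomposition as defined earlier. The step requiring most care is the Jacobson density / Burnside input together with the clean identification of the Schur-Weyl blocks as the complexified commutant of $S_n$; once these are secured, the two irreducibility notions become synonymous and both directions of the theorem fall out simultaneously.
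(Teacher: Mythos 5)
Your proposal is correct, but it takes a genuinely different route from the paper's. The paper never invokes Schur--Weyl duality, Burnside, or Jacobson density inside the proof: it works with the explicit symmetrized basis $F_{(j_0,\dots,j_{d^2-1})}$ of $gl_C^{S_n}(d^n)$, filters it by the number $j$ of non-identity tensor factors (the spaces ${\cal F}_j$), and shows by induction on $j$, using the product relation (\ref{inclus3}), that any subspace invariant under the diagonal action $gl_C(d)^{\otimes n}$ is automatically invariant under all of $gl_C^{S_n}(d^n)$; combined with complete reducibility of both representations this yields the two implications. In effect the paper proves by hand precisely the statement you import wholesale, namely that the unital associative algebra generated by the diagonal $sl(d,\mathbb{C})$-action coincides with the commutant of $S_n$, i.e.\ with the complexification of $u^{S_n}(d^n)$. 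Your route buys brevity and a stronger, more structural conclusion --- equality of the two associative algebras, hence coincidence of \emph{all} invariant (not merely irreducible) subspaces, together with the explicit description of the irreducible ones as $V_\lambda\otimes(\text{line in }M_\lambda)$ --- at the price of assuming the full double-centralizer form of Schur--Weyl duality, which the paper uses only in the weaker block form (\ref{directsum2}) and otherwise keeps the argument elementary and self-contained. The one step in your write-up that needs care is ``assembling over $\lambda$'': blockwise Burnside surjectivity by itself does not exclude correlations between blocks (compare the diagonal subalgebra of $\mathrm{End}(V)\oplus\mathrm{End}(V)$, which surjects onto each factor); you must apply Jacobson density to the whole multiplicity-free module $\bigoplus_\lambda V_\lambda$, using that the distinct $V_\lambda$ are pairwise non-isomorphic, to conclude that the image is the full sum $\bigoplus_\lambda\mathrm{End}(V_\lambda)\otimes I_{M_\lambda}$; the tools you cite do cover this, but the pairwise non-isomorphism should be stated explicitly.
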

 This fact was used (without a formal proof) in \cite{confraSCL} for the case $d=2$. We give here a general proof for any $d \geq 2$.  To do that we have to elaborate a little bit more on the basis $\{ F_{(j_0,....,j_{d^2-1})} \}$ ($j_0+j_1+\cdots + j_{d^2-1}=n$) we have considered for $iu^{S_n}(d^n)$ recalling that the subset of matrices where $j_0=n$ and $j_0=n-1$ gives a basis of 
 $iu(d)^{\otimes n}$. The original basis of $iu(d)$ we have considered, that is,  $\{F_0=\frac{1}{\sqrt{d}}{\bf 1}, F_1,...,F_{d^2-1}\}$ is also a basis of $gl_C (d)$ as a complex Lie algebra (the complexification of $u(d)$) and for $gl_C(d)$ we can define (analogously to what we have done for $u(d)$) the complex Lie algebras $gl_C(d)^{\otimes n}$ and $gl_C^{S_n}(d^n)$.  Complex  representations of $u(d)^{\otimes n}$ coincide with representations of $gl_C(d)^{\otimes n}$ while complex representations of $u^{S_n}(d^n)$ coincide with representations of 
 $gl_C^{S_n}(d^n)$.\footnote{This is the known fact from 
 representation theory (see, e.g, \cite{FH}): Complex representations of a Lie algebra are in one to one correspondence with complex representations of its complexification.} Furthermore  $\{ F_{(j_0,....,j_{d^2-1})} \}$ ($j_0+j_1+\cdots + j_{d^2-1}=n$) also gives a basis of $gl_C^{S_n}(d^n)$. Define in $gl_C^{S_n}(d^n)$ the 
  following subspaces for $j=0,1,...,n$
$$
 {\cal F}_j=\texttt{span}_{j_0=n-j} \{ F_{(j_0,j_1,...,j_{d^2-1})} \},   
$$
that is, $j$ counts how many positions in the tensor products are different from the identity. 
By definition,  ${\cal F}_0 \oplus {\cal F}_1=gl_C(d)^{\otimes n}$. Furthermore,  assume $B \in {\cal F}_j$, for $j=1,2,...,n-1$. Then 
  \begin{equation}\label{inclus3}
  AB \in {\cal F}_j, \qquad \texttt{for }{A \in {\cal F}_0}, \qquad \qquad   AB \in {\cal F}_{j-1} \oplus {\cal F}_{j} \oplus {\cal F}_{j+1} , \qquad \texttt{for }{A \in {\cal F}_1}
  \end{equation}
  \begin{proof}
According to  the above discussion, we shall equivalently prove that $(\mathbb{C}^d)^{\otimes n}=\oplus_l V_l$ is an irreducible 
 Clebsch-Gordan decomposition of $gl_C(d)^{\otimes n}$ if and only if it is an irreducible decomposition for $gl_C^{S_n} (d^n)$.

 Assume first that $(\mathbb{C}^d)^{\otimes n}=\oplus_l V_l$ is an irreducible decomposition of  $gl_C^{S_n} (d^n)$. Since $gl_C(d)^{\otimes n} \subseteq gl_C^{S_n} (d^n)$,  each subspace $V_l$ is invariant 
 for $gl_C(d)^{\otimes n}$, that is, it is a module for $gl_C(d)^{\otimes n}$ (or equivalently $gl_C(d)$). Since it is known  that every representation of $gl_C(d)$ is completely reducible, we can write $V_l$ as $V_l=V_l^1 \oplus V_l^2$ with $V_l^1$ irreducible (and $V_l^2$ invariant). We want to show that $V_l^1$ is invariant for $gl_C^{S_n} (d^n)$ also, which will imply $V_l^1=V_l$ and therefore $V_l$ irreducible for $gl_C(d)^{\otimes n} $. Since this will hold for general $l$ each $V_l$ is irreducible and therefore $(\mathbb{C}^d)^{\otimes n}=\oplus_l V_l$  is a Clebsch-Gordan  decomposition for $gl_C(d)^{\otimes n}$, which will conclude the proof of this direction of the theorem. To show that $V_l^1=V_l$, since $gl_C^{S_n}(d^n)=\oplus_{j=0,1,...,n} {\cal F}_j$, for $\vec v \in V_l^1$, we prove by induction on $j$ that ${\cal F}_j \vec v \in V_l^1$.\footnote{Recall that $j=0,1,...,n$ counts how many places in the tensor products appearing in $F_{(j_0,...,j_{d^2-1})}$ are not equal to the identity.} For $j=0$ and $j=1$ this is true since $V_l^1$ is invariant for $gl_C(d)^{\otimes n}$ and ${\cal F}_0 \oplus {\cal F}_1=gl_C(d)^{\otimes n}$. Assume now the claim true for up to $j$ and pick a basis element $F_{(j_0,...,j_{d^2-1})}$ which is in ${\cal F}_{j+1}$. For any  
$r=1,...,d^2-1$, we have $
F_{(n-1,0,...,0,1,0,...,0)}F_{(j_0+1,j_1,...,j_{r-1}, j_r-1, j_{r+1},....,j_{d^2-1})}=F_{(j_0,j_1,...,j_r,...,j_{d^2-1})}+B, 
$
with $B \in {\cal F}_{j-1} \oplus {\cal F}_j$ (cf. (\ref{inclus3})). From the inductive assumption it follows, since 
$$F_{(n-1,0,...,0,1,0,...,0)}F_{(j_0+1,j_1,...,j_{r-1}, j_r-1, j_{r+1},....,j_{d^2-1})}\vec v \in V_l^1
$$  
that $F_{(j_0,...,j_{d^2-1})} \vec v \in V_l^1.$ This proves the claim.

To prove the reverse implication,  assume that $(\mathbb{C}^d)^{\otimes n}=\oplus_l V_l$ is an irreducible Clebsch-Gordan decomposition of $gl_{C}(d)^{\otimes n}$. Then, as proven 
above by induction on $j$, every $V_l$ is invariant for $gl_C^{S_n}(d^n)$, that is,  a module for $gl_C^{S_n}(d^n)$.\footnote{The representation of  $gl_C^{S_n}(d^n)$ is also completely reducible since it coincides with the representation of $u^{S_n}(d^n)$ which is a unitary representation and as such completely reducible \cite{FH}.} Fix now one $V_l$, and write $V_l$ as 
$V_l=V_l^1 \oplus V_l^2$ with $V_l^1$ irreducible. Since $V_l^1$ is   invariant under  $gl_C^{S_n}(d^n)$, it is in particular also invariant under $gl_C(d)^{\otimes n}$, which contradicts the irreducibility of $V_l$, unless $V_l^1=V_l$.  Thus $V_l$ is irreducible for $gl_C^{S_n}(d^n)$ as well.

 \end{proof}

 \subsection{Casimir operators and the center of $u^{S_n}(d^n)$.}
 
 According to Theorem  \ref{alwaysassumed},  in appropriate coordinates,    
$(\mathbb{C}^d)^{\otimes n}$ splits in  a Clebsch-Gordan direct sum of irreducible representations  of $su(d)$ which are invariant and irreducible  under $u^{S_n}(d^n)$. On each of them, any Casimir operator acts as a multiple of the identity, where the value of the `multiple' depends on the representation.

Expressions for the Casimir operators for any representation of $su(d)$  were given for example in \cite{Pais}. Let us specialize these expressions to the sub-representations of $u(d)^{\otimes n}$.

Given ${\cal B}:=\{ F_0=\frac{1}{\sqrt{d}}{\bf 1}, F_1 ,..., F_{d^2-1}\}$ an orthonormal basis of $iu(d)$ (in the standard representation), as defined above, we extend notation (\ref{totalOperators}) by saying 
$\hat F_j:=F_{n-1,0,0,...,0,1,0,...,0}$ where the $1$ appears in position $j$, that is the sum of all tensor products  having only one $F_j$ and identities everywhere else. 
The Casimir operators on $(\mathbb{C}^d)^{\otimes n}$ are defined in terms of these matrices. In particular,  we have (cf., e.g.,  formula (3.59) in \cite{Pais} and cf.  \cite{Bieden}) that the Casimir operators of order $2,...,d$, $C_2$, $C_3$,...,$C_d$ are (all indexes in the sums run from $1$ to $d^2-1$)

\begin{equation}\label{Casimirsu2plus}
 C_2=\sum_{l} \hat F_l^2,  
\end{equation}

$$ C_3=\sum_{l,m,q} d_{l,m}^q \hat F_l \hat F_m \hat F_q,  $$
$$ C_4=\sum_{l,m,r,s} \left( \sum_{q} d_{l,m}^q d_{q,r}^s\right) \hat F_l \hat F_m \hat F_r \hat F_s,  $$
$$C_5=\sum_{l,m,r,s,t} \left( \sum_{a,b} d_{l,m}^a d_{a,b}^rd_{b,s}^t\right) \hat F_l \hat F_m \hat F_r \hat F_s \hat F_t,  $$
$$\vdots $$
$$C_{d}=\sum_{l_1,l_2,...,l_d}  \left( \sum_{a_1,a_2,...,a_{d-2}} d_{l_1,l_2}^{a_1} d_{a_1, a_2}^{l_3} d_{a_2, a_3}^{l_4} \cdots d_{a_{d-2}, l_{d-1}}^{l_d}\right)   \hat F_{l_1} \hat F_{l_2}\cdots \hat F_{l_d}. $$
where $d_{j,k}^l$ is a {\it symmetric} $3-$ tensor which defines the 
anti-commutation  relations of the elements in the basis ${\cal B}$.\footnote{That is $\{ F_j, F_k \}:=FjF_k+F_k F_j=\gamma {\bf 1}+ \sum_{l} d_{j,k}^l F_l$, for some scalar $\gamma$.} From the symmetry of $d_{j,k}^l$,  it follows that $C_{2,3,...,d}$ also belong  to $iu^{S_n} (d^n)$ and so does any power $C_{2,3,...,d}^k$, for $k=0,1,2,...$, 
and, more in general, any product of the Casimir operators and therefore the commutative algebra generated by the Casimir operators.\footnote{Notice that such an algebra is commutative since the space 
$(\mathbb{C}^d)^{\otimes n}$ splits in the direct sum of irreps on which each Casimir operator acts as a multiple of the identity. }

 Consider now the  sum of representations that appear in the Clebsch-Gordan decomposition  of 
$(u(d))^{\otimes n}$ (or equivalently of $u^{S_n}(d^n)$ (as from Theorem  \ref{alwaysassumed})). Group them according to the Casimir eigenvalues of the  Casimir operators $C_l$, $l=2,3,...,d$, that is,  write, 
$$
(\mathbb{C}^d)^{\otimes n}=\bigoplus_{j=1}^{n_2} W_{2,j}=\bigoplus_{j=1}^{n_3} W_{3,j}=\cdots=\bigoplus_{j=1}^{n_d} W_{d,j}, 
$$
where, for a fixed  $l=2,...,d$, $W_{l,j}$, $j=1,...,n_l$ is the direct sum of spaces where $C_l$ acts as multiplication by $\lambda_{l,j}$ (the Casimir eigenvalue). with $\lambda_{l,j} \not= \lambda_{l,m}$ if $j \not=m$. Furthermore  
$$
(\mathbb{C}^d)^{\otimes n}=\bigoplus_s V_s, 
$$
where each $V_s$ is the {\it direct sum} of isomorphic irreps of $su(d)$. Fix one $s=\bar s$. $V_{\bar s}$ is a subspace  to only one of the  $W_{2,j}$. It is also a subspace to only one of the $W_{3,j}$'s and so on. Let us assume, without loss of generality and to simplify notations,  that it is always the first one, $W_{l,1}$. Since, the Casimir eigenvalues uniquely identify the irreps, we actually have 
\begin{equation}\label{inclusione5}
V_{\bar s}=\cap_{l=2}^d W_{l,1}. 
\end{equation}
Consider now $C_2$ and, to simplify notations, denote by $\lambda_{j}$, $j=1,...,n_2$ the eigenvalue for which  $W_{2,j}$ is the eigenspace with $\lambda_j \not= \lambda_h$ if $j \not= h$. Denoting by ${\bf 1}_{V}$ the identity on a space $V$, we have for $k=0,...,n_2-1$ 
\begin{equation}\label{C2}
C_2^k=\oplus_{j=1}^{n_2} \lambda_j^k {\bf 1}_{W_{2,j}}. 
\end{equation}
Now choose $(a_1,...,a_{n_2})$ so that 
\begin{equation}\label{tbu}
\sum_{k=1}^{n_2 } a_k \lambda_j^{k-1}=\delta_{1,j}, \qquad j=1,2,...,n_2. 
\end{equation}
This is possible because the determinant of the matrix of coefficients of this system, i.e., 
$$
\det \begin{pmatrix}  1 & \lambda_1 & \lambda_1^2 & \cdots & \lambda_1^{n_2 -1} \cr 
1 & \lambda_2& \lambda_2^2 & \cdots & \lambda_2^{n_2 -1}
\cr 
\cdot & \cdot & \cdot & \cdots & \cdot \cr
\cdot & \cdot & \cdot & \cdots & \cdot \cr
\cdot & \cdot & \cdot & \cdots & \cdot \cr
1 & \lambda_{n_2-1}& \lambda_{n_2-1}^2 & \cdots & \lambda_{n_2-1}^{n_2 -1}
\end{pmatrix}
$$
is a {\it Vandermonde determinant} different from zero since the $\lambda_j$'s are all different from each other. With this choice, we have, using (\ref{tbu}) in (\ref{C2}) 
$$
\sum_{k=1}^{n_2} a_k C_{2}^{k-1}=\oplus_{j=1}^{n_2} \left( \sum_{k=1}^{n_2} a_k \lambda_j^{k-1} \right) {\bf 1}_{W_{2,j}}= \oplus_{j=1}^{n_2} \delta_{1,j}  {\bf 1}_{W_{2,j}}. 
$$
Therefore the (commutative) algebra spanned by the Casimir operators contains the operator which is the identity on $W_{2,1}$ and zero on each of the other $W_{2,j}$'s, $j=2,...,n_2$.

Repeating the same argument for all of the Casimir operators $C_3,...,C_d$, we find that the (commutative) algebra generated by these operators on 
$(\mathbb{C}^d)^{\otimes n}$ contains the operator which is the identity on the intersection of all the $W_{l,1}$, for $l=2,3,...,d$ and zero everywhere else. From (\ref{inclusione5}), we find that such a commutative algebra  contains the operator which is the identity on $V_{\bar s}$ and zero everywhere else. Since $\bar s$ is general the commutative algebra generated by the Casimir operators contains a basis of the center $iu^{S_n}(d^n)$. Since such an algebra is also included in the center, we conclude with the following. 
\begin{proposition}\label{center}
Let ${\cal C}$ be the commutative (Lie) algebra generated by the Casimir operators $C_2,...,C_d$ on $(\mathbb{C}^d)^{\otimes n}$  and ${\cal Z}$ be the center of $u^{S_n}(d^n)$. Then 
$$
i{\cal C}={\cal Z}. 
$$
\end{proposition}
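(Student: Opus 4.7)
The plan is to prove the equality $i\mathcal{C} = \mathcal{Z}$ by two inclusions. The easy direction is $i\mathcal{C} \subseteq \mathcal{Z}$: by construction each Casimir operator $C_l$ acts as a scalar on every irreducible submodule appearing in the Clebsch-Gordan decomposition of $(\mathbb{C}^d)^{\otimes n}$, and by Theorem \ref{alwaysassumed} these are exactly the irreducible invariant subspaces for $u^{S_n}(d^n)$. Hence each $C_l$ commutes with every element of $u^{S_n}(d^n)$, so $iC_l \in \mathcal{Z}$; the same holds for arbitrary polynomials in the $C_l$'s, since $\mathcal{Z}$ is a (commutative) subalgebra of $u^{S_n}(d^n)$.

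For the reverse inclusion $\mathcal{Z} \subseteq i\mathcal{C}$, I would first identify a natural basis of $\mathcal{Z}$. By the block-diagonal structure (\ref{directsum2}) together with Schur's lemma, the center of $u^{S_n}(d^n)$ consists exactly of operators acting as a scalar on each isotypic component $V_s$ (the direct sum of all irreducible summands isomorphic to a fixed irrep). In particular $i^{-1}\mathcal{Z}$ has a basis given by the orthogonal projectors $P_{V_s}$ onto these isotypic components. So the task reduces to exhibiting each $P_{V_{\bar s}}$ as a polynomial in $C_2, \ldots, C_d$.

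The core tool is the Vandermonde/Lagrange interpolation argument already sketched in the prose preceding the statement. Fix $l \in \{2,\ldots,d\}$ and the decomposition $(\mathbb{C}^d)^{\otimes n} = \bigoplus_{j=1}^{n_l} W_{l,j}$ into eigenspaces of $C_l$ with pairwise distinct eigenvalues $\lambda_{l,j}$. Nondegeneracy of the Vandermonde matrix $[\lambda_{l,j}^{k-1}]$ lets one solve a linear system producing coefficients $a_1,\ldots,a_{n_l}$ with $\sum_k a_k C_l^{k-1} = P_{W_{l,j_0}}$ for any chosen $j_0$. Thus every eigenspace projector for any single Casimir lies in $\mathcal{C}$.

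Finally I combine these projectors multiplicatively. By the key identity (\ref{inclusione5}), each isotypic component satisfies $V_{\bar s} = \bigcap_{l=2}^d W_{l, j(\bar s, l)}$ for suitable indices, so $P_{V_{\bar s}} = \prod_{l=2}^{d} P_{W_{l, j(\bar s, l)}}$ lies in $\mathcal{C}$ because $\mathcal{C}$ is closed under multiplication. This exhausts a basis of $i^{-1}\mathcal{Z}$ and closes the argument. The step deserving most care — and the only one that is not pure linear algebra — is the appeal to (\ref{inclusione5}), which relies on the classical fact that the full tuple of Casimir eigenvalues $(c_2,\ldots,c_d)$ separates nonisomorphic irreducibles of $su(d)$; this is quoted in Section \ref{Background} and is the reason the Casimirs collectively, rather than any single one, suffice to resolve the isotypic decomposition.
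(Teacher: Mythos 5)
Your proposal is correct and takes essentially the same route as the paper: Lagrange/Vandermonde interpolation in each single Casimir $C_l$ to obtain its eigenspace projectors inside ${\cal C}$, then the separation property encoded in (\ref{inclusione5}) (the tuple of Casimir eigenvalues distinguishes nonisomorphic irreps) to build the projectors onto the isotypic components, which span ${\cal Z}$. The only point you gloss that the paper makes explicit is that $iC_l$ actually lies in $u^{S_n}(d^n)$ in the first place (Hermiticity, via the symmetry of the tensor $d_{j,k}^l$, together with permutation invariance), which your inclusion $i{\cal C}\subseteq {\cal Z}$ tacitly requires.
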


%%%%%About the dimension
The {\it dimension}  of the center is equal to the number of non-isomorphic irreducible representations of $su(d)$ that appear in the CG decomposition of $(\mathbb{C}^d)^{\otimes n}$.  According to Proposition \ref{clarif} such irreducible representations are parametrized by the $d$-tuples  $(m_1,...,m_d)$ with $m_1 \geq m_2 \geq \cdots \geq m_d$ and $m_1+m_2+\cdots m_d=n$ identifying $d$-tuples $(m_1,...,m_d)$ and $\left( m_1+\lambda ,m_2+\lambda, ...,m_d+\lambda \right)$ for an integer $\lambda$. Such a number of representations can be calculated recursively according to the following proposition. 
\begin{proposition}\label{dimensioncenter}
Let $f(n,d)$ denote the  dimension of the center of $u^{S_n}(d^n)$. Then we have $f(n,1)=1$ for every $n$ and 
\begin{equation}\label{ricorsione}
f(n,d)=\sum_{j=0}^{\lfloor \frac{n}{d}\rfloor} f(n-jd,d-1). 
\end{equation} 
\end{proposition}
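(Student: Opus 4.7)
The plan is to reduce the statement to a purely combinatorial identity about partitions and then verify it by induction on $d$. By Proposition \ref{center}, the center ${\cal Z}$ coincides (up to the $i$ factor) with the commutative algebra generated by $C_2,\ldots,C_d$, and such an algebra acts as a scalar on each isotypic component of the Clebsch-Gordan decomposition, with the scalar uniquely identifying the isomorphism class. Hence $\dim {\cal Z}$ equals the number of pairwise non-isomorphic irreducible $su(d)$-modules that appear in $(\mathbb{C}^d)^{\otimes n}$.

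The first step is to rewrite $f(n,d)$ as a count of partitions. By Proposition \ref{clarif}, the irreps appearing in $(\mathbb{C}^d)^{\otimes n}$ are indexed by the $d$-tuples $(m_1,\ldots,m_d)$ with $m_1\geq m_2\geq\cdots\geq m_d\geq 0$ and $m_1+\cdots+m_d=n$. Two such tuples yield isomorphic $su(d)$-representations iff they differ by a common additive constant $\lambda$; but if both tuples sum to $n$, then $d\lambda=0$, forcing $\lambda=0$. Therefore distinct partitions give non-isomorphic irreps, and
\begin{equation*}
f(n,d)=\#\bigl\{(m_1,\ldots,m_d) : m_1\geq\cdots\geq m_d\geq 0,\; m_1+\cdots+m_d=n\bigr\},
\end{equation*}
i.e., $f(n,d)$ is the number of partitions of $n$ into at most $d$ parts.

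The second step is the induction. The base case $f(n,1)=1$ is immediate since the only partition of $n$ into a single part is $(n)$ itself (equivalently, $u^{S_n}(1^n)$ is one-dimensional and equal to its own center). For the inductive step, stratify the partitions of $n$ into at most $d$ parts by the value of the smallest part $m_d=j$. The condition $m_d=j\geq 0$ together with $m_1\geq\cdots\geq m_d$ and $m_1+\cdots+m_d=n$ forces $jd\leq n$, so $j\in\{0,1,\ldots,\lfloor n/d\rfloor\}$. The map $(m_1,\ldots,m_{d-1},j)\mapsto (m_1-j,\ldots,m_{d-1}-j)$ is a bijection between partitions of $n$ into at most $d$ parts with smallest part equal to $j$ and partitions of $n-jd$ into at most $d-1$ parts. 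Summing over $j$ and invoking the inductive hypothesis that $f(n-jd,d-1)$ equals the number of partitions of $n-jd$ into at most $d-1$ parts yields (\ref{ricorsione}).

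There is no serious obstacle: the only point that requires care is the uniqueness argument ruling out that two distinct partitions of $n$ with $\leq d$ parts could correspond to isomorphic irreps, which, as noted, follows from the fact that the equivalence $(m_1,\ldots,m_d)\sim(m_1+\lambda,\ldots,m_d+\lambda)$ is incompatible with preserving the sum unless $\lambda=0$.
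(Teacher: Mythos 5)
Your proof is correct and follows essentially the same route as the paper: identify $f(n,d)$ with the number of partitions of $n$ into at most $d$ parts, then stratify by the value $m_d=j\in\{0,\ldots,\lfloor n/d\rfloor\}$ and subtract $j$ from every entry to obtain a bijection with the $(d-1)$-tuples counted by $f(n-jd,d-1)$. Your explicit remark that the identification $(m_1,\ldots,m_d)\sim(m_1+\lambda,\ldots,m_d+\lambda)$ is vacuous among tuples summing to $n$ is a small added precision the paper leaves implicit, but it does not change the argument.
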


\begin{proof}
If $d=1$, independently of what $n$ is, it is clear that there is only a single sequence $(n)$. Thus $f(n,1)=1$. Now for a general $d$, the index $j$ in the sum in (\ref{ricorsione}) represents the value of $m_d$ in $\left( m_1,m_2,...,m_d \right)$. Such a value can go from $0$ to $\lfloor \frac{n}{d}\rfloor$ since a higher value will make it impossible to arrange the remaining $\{ m_1,...,m_{d-1}\}$ all greater or equal to $m_d=j$ and with sum equal to $n-j$.\footnote{More formally, since $m_1$,...,$m_{d-1}$ are all $\geq m_d$, we have 
$n=m_1+m_2+\cdots +m_d\geq d m_d$, so that $m_d \leq \frac{n}{d}$.} Now for a value $m_d=j$, such a value can be subtracted from all entries $(m_1,...,m_d)$ to make the last entry equal to zero again. Once $j=m_d$ is fixed and normalized to zero, the number of possible  $d$-tuples is equal to the number of possible $(d-1)$-tuples  $(m_1,...m_{d-1})$ with $m_1 \geq m_2 \geq \cdots \geq m_{d-1}$ with $m_1+m_2+\cdots + m_{d-1}=n-jd$, that is, exactly $f(n-jd,d-1)$ that appears in the sum in (\ref{ricorsione}). 
\end{proof}

In conclusion and summarizing, we can state that $u^{S_n}(d^n)$ in (\ref{directsum2}) has an orthogonal decomposition 
\begin{equation}\label{orthodec}
u^{S_n}(d^n)=i{\cal C} \oplus su_{cless}^{S_n}(d^n), \qquad su_{cless}^{S_n}(d^n):=\bigoplus_{j=1}^{f(n,d)} su(n_j), 
\end{equation}
where ${\cal C}$ is the center which coincides with the algebra generated by the Casimir operators and $n_j$ denotes the dimension of the $j$-th irreducible module in the Clebsch-Gordan decomposition 
which is given by formula (\ref{dimensione}), while the number of different (non-isomorphic) representations (which coincides with the dimension of ${\cal C}$)  is given by $f(n,d)$ in Proposition \ref{dimensioncenter} and their nature is described in Proposition \ref{clarif}. Decomposition (\ref{orthodec}) is a Levi's type (see, e.g., \cite{Knapp}) of decomposition for the Lie algebra $u^{S_n}(d^n)$ as a direct sum of its center and a semisimple Lie algebra.

\section{The dynamical Lie algebra generated by symmetric Hamiltonians}\label{dynLA}

Our interest in $u^{S_n}(d^n)$ is due to the fact that the dynamical Lie 
algebra ${\cal L}$ of any quantum control system described by Hamiltonians $\{H_1, H_2,...,H_m\}$ which are  invariant under the action of the  permutation group, that 
is,  any ${\cal L}=\{ H_1,...,H_r\}_{Lie}$,\footnote{Here and in the following, we denote by $\{A_1,..,A_r\}_{Lie}$ the Lie algebra generated by a set $\{A_1,...,A_r\}$.}   is a Lie subalgebra of  
$u^{S_n}(d^n)$. It seems also appropriate to recall at this point that a  (quantum) dynamical Lie algebra on ${\cal H}=(\mathbb{C}^d)^{\otimes n}$, which is necessarily a subalgebra of $u(d^n)$,  is always {\it reductive}, that is, it is the direct (commuting) sum ${\cal L}={\cal L}_C \oplus {\cal L}_S$, of its center ${\cal L}_C$ and a semisimple part ${\cal L}_S$, generalizing (\ref{orthodec}).\footnote{This is a known fact in the theory of non-controllable quantum systems  (see, e.g., \cite{mikobook}).}   In characterizing the dynamical Lie algebra ${\cal L}$ we shall use the facts in the following two  theorems for Lie algebras, which we state in general terms.

\begin{theorem}\label{genfacLem}
Consider a general (real or complex) (ambient) Lie algebra ${\cal G}$ which has a Levi direct sum decomposition 
\begin{equation}\label{deco57}
{\cal G}={\cal C} \oplus \hat {\cal S}, 
\end{equation}
where ${\cal C}$ is the center and $\hat {\cal S}$ a (semisimple) Lie algebra.
%\footnote{semisimplicity is not necessary for the Lemma but it will be verified in our case because we will consider   the decomposition (\ref{TTy}).} 
Consider a set of linearly independent elements in ${\cal G}$ written as $\{ H_k:= C_k+S_k, \, k=1,...,r\}$ with $C_k \in {\cal C}$ and $S_k \in \hat {\cal S}$ for every $k=1,...,r$. Assume $\{S_1,...,S_r\}_{Lie}$ is semisimple. Then 
\begin{equation}\label{POIL}
\{ C_1+S_1, ..., C_r+S_r\, \}_{Lie}=\texttt{span}\{C_1,...,C_r\} \oplus \{ S_1,...,S_r\}_{Lie}. 
\end{equation}
\end{theorem}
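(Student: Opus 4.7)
Set $\mathcal{S} := \{S_1,\ldots,S_r\}_{Lie}$, $\mathcal{L} := \{H_1,\ldots,H_r\}_{Lie}$, and $V := \mathrm{span}\{C_1,\ldots,C_r\}$; the goal is to show $\mathcal{L} = V \oplus \mathcal{S}$. The key elementary observation is that, because each $C_k$ lies in the center $\mathcal{C}$, every iterated Lie bracket of depth at least two among the $H_k$'s collapses onto the corresponding bracket of the $S_k$'s: for instance $[H_i,H_j]=[S_i,S_j]$, and inductively $\mathrm{ad}_{H_{i_1}}\cdots \mathrm{ad}_{H_{i_{k-1}}}H_{i_k}=\mathrm{ad}_{S_{i_1}}\cdots \mathrm{ad}_{S_{i_{k-1}}}S_{i_k}$ for all $k\geq 2$. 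Since $\mathcal{L}$ is the linear span of the generators together with all such nested brackets, this yields $\mathcal{L}=\mathrm{span}\{H_1,\ldots,H_r\}+[\mathcal{S},\mathcal{S}]$, the derived ideal $[\mathcal{S},\mathcal{S}]$ being exactly the span of nested brackets of length at least two among the generators of $\mathcal{S}$ (any such bracket lies in $[\mathcal{S},\mathcal{S}]$ by definition, and Jacobi reduces any commutator of iterated brackets of the $S_k$'s to a linear combination of the same).

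Next I invoke the hypothesis that $\mathcal{S}$ is semisimple and hence perfect, giving $[\mathcal{S},\mathcal{S}]=\mathcal{S}$. Thus $\mathcal{S}\subseteq \mathcal{L}$; in particular each $S_k\in\mathcal{L}$, so $C_k=H_k-S_k\in\mathcal{L}$ and $V\subseteq \mathcal{L}$. This establishes $V+\mathcal{S}\subseteq \mathcal{L}$. The reverse inclusion is immediate: $V+\mathcal{S}$ is itself a Lie subalgebra of $\mathcal{G}$, because $V\subseteq \mathcal{C}$ makes all cross-brackets vanish and $\mathcal{S}$ is closed under the bracket, and it contains each generator $H_k=C_k+S_k$, so $\mathcal{L}\subseteq V+\mathcal{S}$. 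Directness of the sum then follows from the Levi decomposition (\ref{deco57}): $V\subseteq \mathcal{C}$, $\mathcal{S}\subseteq \hat{\mathcal{S}}$, and $\mathcal{C}\cap \hat{\mathcal{S}}=0$ force $V\cap \mathcal{S}=0$; since $[V,\mathcal{S}]=0$ as well, this is also a direct sum of Lie algebras, proving (\ref{POIL}).

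The single non-cosmetic step is the implication $[\mathcal{S},\mathcal{S}]\subseteq \mathcal{L}\Rightarrow \mathcal{S}\subseteq \mathcal{L}$, which uses precisely the semisimplicity (i.e.\ perfectness) of $\mathcal{S}$. Without this hypothesis one could only recover the $S_k$ components of the generators modulo the derived ideal of $\mathcal{S}$, and the trick of subtracting $S_k$ from $H_k$ to extract the central part $C_k$ would fail; everything else in the argument is bookkeeping with the Levi splitting.
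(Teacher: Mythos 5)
Your proof is correct and follows essentially the same route as the paper: both arguments rest on the observation that, since the $C_k$ are central, all nested brackets of the $H_k$'s collapse to the corresponding brackets of the $S_k$'s, and then use semisimplicity of $\{S_1,\ldots,S_r\}_{Lie}$ in the form of perfectness ($[{\cal S},{\cal S}]={\cal S}$) to recover ${\cal S}$, and hence the central components $C_k=H_k-S_k$, inside the generated algebra. Your explicit identification of the span of length-$\geq 2$ brackets with the derived ideal and the final directness check via ${\cal C}\cap\hat{\cal S}=0$ are just slightly more detailed bookkeeping of the same argument.
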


\begin{proof}
We have 
\begin{equation}\label{yyt2}
\{ C_1+S_1,...,C_r+S_r\}_{Lie}= \texttt{span}\left\{ C_1+S_1,C_2+S_2,...,C_r+S_r, {\cal N}_1 \right\}, 
\end{equation}
where ${\cal N}_1$ is a basis in the space of (nested) commutators of $\{ C_1+S_1,...,C_r+S_r\}$. Analogously, we have 
\begin{equation}\label{yyt}
\{ S_1,...,S_r\}_{Lie}= \texttt{span}\left\{ S_1,S_2,...,S_r, {\cal N}_2 \right\}, 
\end{equation}
where ${\cal N}_2$ is a basis of (nested) commutators of  $\{ S_1,...,S_r\}$. Since $\{C_1,...,C_r\}$ belong to the center of ${\cal G}$, we can choose ${\cal N}_1={\cal N}_2:={\cal N}$. Furthermore, since $\{ S_1,...,S_r\}_{Lie}$ is assumed to be semisimple\footnote{Recall (see, e.g., \cite{Knapp} that one of the equivalent definitions of semisimple Lie algebra  ${\cal L}$, is that 
$\left[ {\cal L}, {\cal L}  \right]= {\cal L}$, that is, there exists a basis of commutators.} we have that for each $j=1,...,r$, $S_j \in \texttt{span}\{{\cal N} \}$ and therefore, from (\ref{yyt}), 
\begin{equation}\label{uiop}
\{ S_1,...,S_r\}_{Lie}=\texttt{span} \{ {\cal N} \}. 
\end{equation}

Semisimplicity of $\{S_1,...,S_r\}_{Lie}$ also implies that in (\ref{yyt2}) we have\footnote{Since $\{S_1,...,S_r\} \subseteq   {\cal N} $.}
$$
\{ C_1+S_1,...,C_r+S_r\}_{Lie}= \texttt{span}\{ C_1,C_2,...,C_r, {\cal N} \}=
\texttt{span} \{ C_1,C_2,...,C_r \} \oplus \texttt{span} \{ {\cal N} \}=
$$
$$
\texttt{span} \{ C_1,C_2,...,C_r \} \oplus \{ S_1,...,S_r \}_{Lie}, 
$$
using (\ref{uiop}), as desired. 
\end{proof}

\begin{theorem}\label{another}
Reconsider the situation of Theorem \ref{genfacLem}. The following statements are equivalent. 
\begin{enumerate}
\item 
$\hat {\cal S} \subseteq \{ C_1+S_1,...,C_r+S_r\}_{Lie}$
\item  $\{ S_1,...,S_r\}_{Lie}=\hat {\cal S}$
\item 
$\{ C_1+S_1,...,C_r+S_r \}_{Lie}=\texttt{span} \{C_1,...,C_r\} \oplus \hat {\cal S}$.
\end{enumerate}

\end{theorem}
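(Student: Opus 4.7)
The plan is to establish the equivalence by a short cycle of implications $(2)\Rightarrow(3)\Rightarrow(1)\Rightarrow(2)$, leveraging Theorem \ref{genfacLem} for the first step and a projection argument for the last.

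For $(2)\Rightarrow(3)$, assume $\{S_1,\dots,S_r\}_{Lie}=\hat{\cal S}$. Since $\hat{\cal S}$ is semisimple by hypothesis, the Lie algebra generated by the $S_k$'s is semisimple, so Theorem \ref{genfacLem} applies and directly yields
\[
\{C_1+S_1,\dots,C_r+S_r\}_{Lie}=\mathrm{span}\{C_1,\dots,C_r\}\oplus\{S_1,\dots,S_r\}_{Lie}=\mathrm{span}\{C_1,\dots,C_r\}\oplus\hat{\cal S},
\]
which is statement (3). The implication $(3)\Rightarrow(1)$ is immediate, since the right-hand side of (3) contains $\hat{\cal S}$ as a direct summand.

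The substantive step is $(1)\Rightarrow(2)$. The plan here is to use the projection $\pi:{\cal G}\to\hat{\cal S}$ along the decomposition (\ref{deco57}), and to observe that $\pi$ is a Lie algebra homomorphism. Indeed, because ${\cal C}$ is the center, for any $X=C+S,\,X'=C'+S'$ in ${\cal G}$ one has $[X,X']=[S,S']$, so $\pi([X,X'])=[S,S']=[\pi(X),\pi(X')]$. Consequently $\pi$ maps the Lie algebra generated by any set to the Lie algebra generated by the images, giving
\[
\pi\bigl(\{C_1+S_1,\dots,C_r+S_r\}_{Lie}\bigr)=\{S_1,\dots,S_r\}_{Lie}.
\]
Assuming (1), applying $\pi$ to both sides of $\hat{\cal S}\subseteq\{C_1+S_1,\dots,C_r+S_r\}_{Lie}$ and using $\pi|_{\hat{\cal S}}=\mathrm{id}_{\hat{\cal S}}$ yields $\hat{\cal S}\subseteq\{S_1,\dots,S_r\}_{Lie}$. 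The reverse inclusion $\{S_1,\dots,S_r\}_{Lie}\subseteq\hat{\cal S}$ holds trivially since each $S_k\in\hat{\cal S}$ and $\hat{\cal S}$ is a subalgebra. This proves (2).

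I do not expect any serious obstacle: the only delicate point is recognizing that the centrality of ${\cal C}$ turns the algebraic projection into a genuine Lie algebra homomorphism, which is what allows one to descend from the statement about $\{C_k+S_k\}_{Lie}$ in (1) to the statement about $\{S_k\}_{Lie}$ in (2). Once this homomorphism is in place, the cycle closes mechanically and the semisimplicity hypothesis on $\hat{\cal S}$ is used only in the invocation of Theorem \ref{genfacLem}.
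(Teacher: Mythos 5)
Your proof is correct, and while the implications $(2)\Rightarrow(3)$ (via Theorem \ref{genfacLem}) and $(3)\Rightarrow(1)$ coincide with the paper's, your argument for $(1)\Rightarrow(2)$ is genuinely different. The paper argues by contradiction with an orthogonality argument: if some $S\in\hat{\cal S}$ were orthogonal (in the Frobenius inner product, so implicitly in the setting of skew-Hermitian matrices) to $\{S_1,\dots,S_r\}_{Lie}$, it would also be orthogonal to the generators $C_k+S_k$ and hence to all of $\{C_1+S_1,\dots,C_r+S_r\}_{Lie}$, contradicting statement (1); this forces $\{S_1,\dots,S_r\}_{Lie}=\hat{\cal S}$. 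You instead observe that, because ${\cal C}$ is central and $\hat{\cal S}$ is a subalgebra, the linear projection $\pi:{\cal G}\to\hat{\cal S}$ along the decomposition (\ref{deco57}) is a Lie algebra homomorphism, so it carries $\{C_1+S_1,\dots,C_r+S_r\}_{Lie}$ onto $\{S_1,\dots,S_r\}_{Lie}$; applying $\pi$ to the inclusion in (1) and using $\pi|_{\hat{\cal S}}=\mathrm{id}$ gives $\hat{\cal S}\subseteq\{S_1,\dots,S_r\}_{Lie}$, with the reverse inclusion trivial. Your route buys something concrete: it needs no invariant inner product and no appeal to matrix realizations, so it proves the theorem at the level of generality at which it is stated (an arbitrary real or complex Lie algebra with a center-plus-semisimple splitting), whereas the paper's orthogonality argument quietly relies on the skew-Hermitian matrix setting spelled out only in a footnote. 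The paper's argument, on the other hand, stays within the same ``orthogonal components'' language used throughout the controllability analysis, which is why the author phrases it that way; conceptually the two are close, since orthogonal projection onto $\hat{\cal S}$ is exactly your $\pi$ in that setting.
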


\begin{proof}
$2 \Rightarrow 3$ follows as a special case of  the statement of Theorem \ref{genfacLem} since $\hat {\cal S}$ is semisimple,  while $3 \Rightarrow 1$ is obvious. To prove that $1 \Rightarrow 2$ assume that there is $S$ in $\hat {\cal S}$ orthogonal\footnote{We are assuming here that we are dealing with skew-Hermitian matrices  with the Frobenius inner product $\langle A, B \rangle:=Tr(AB^\dagger)$.} to $\{S_1,...,S_r\}_{Lie}$ and t	herefore orthogonal to  $\{S_1,...,S_r\}$. This element would be orthogonal to $\{C_1+S_1,...,C_r+S_r\}$ also and therefore to whole  $\{C_1+S_1,...,C_r+S_r\}_{Lie}$ which contradicts $1$. 
\end{proof}

The use of Theorems \ref{genfacLem} and \ref{another} in our case is as follows: For us ${\cal G}$ in 
(\ref{deco57}) is the Lie algebra $u^{S_n}(d^n)$ in (\ref{directsum2}), (\ref{orthodec}), 
and $\hat {\cal S}$ is $su_{cless}^{S_n}(d^n)$  in (\ref{orthodec})  which is the  direct sum of the simple $su(n_j)$ for different $n_j$ and therefore semisimple. The center ${\cal C}$ is $i {\cal C}$ in (\ref{orthodec}), that is, the Casimir subalgebra. When checking the 
controllability for a set of generators $\{H_1,...,H_r\}$, we can split the generators 
as in (\ref{POIL}), $H_j=C_j+S_j$, $j=1,...,r$, and check, as in Theorem \ref{genfacLem}, that 
$\{S_1,...,S_r\}$ generates $su_{cless}^{S_n}(d^n)$ or equivalently, without doing the splitting, verify condition 1 of Theorem \ref{another}. We shall refer to this condition as {\bf subspace controllability}. If subspace controllability is verified, then the dynamical Lie algebra is the direct sum of  $su_{cless}^{S_n}$ in (\ref{orthodec}) and an Abelian subalgebra spanned by the components of the generators onto the Casimir Lie algebra (the center of $u^{S_n}(d^n)$). Therefore the two main results of \cite{Marco},\footnote{These results concern special types of  permutation invariant Hamiltonians (cf. next Section \ref{caseofnqubits}),},  that is, the subspace controllability and the, so called, `{\it failure of universality}', that is, the impossibility to control the relative phases between evolutions   on two different invariant subspaces are not really independent results. They are connected to each other according to  Theorems \ref{genfacLem} and \ref{another}. In the following section we revisit these results within our framework. Afterwards, in  section \ref{threequtrits} we shall treat a new case of three qutrits.

In closing this section, we elaborate a bit more on the concept of subspace controllability. In general, for a system in block diagonal form, say with $m$ blocks, it may be sufficient that once we fix a block, of dimension say $n_1$, the dynamical Lie algebra ${\cal L}$ is such that every element of $su(n_1)$ can be obtained on that block. We may call such property {\it weak} subspace controllability. Weak subspace controllability is not the same as subspace controllability as for example a Lie algebra ${\cal L}$ of matrices of the form 
$\begin{pmatrix} A & 0 \cr 0 & A \end{pmatrix}$ with $A$ arbitrary in $su(n_1)$ is weak subspace controllable but not subspace controllable, since the two diagonal blocks are related,\footnote{Another example, in the case of the CG decomposition,  blocks that correspond to isomorphic irreducible representations also are related.} they are equal. However, the following Lemma shows that in the case of blocks of different dimensions, the two properties coincide. 

\begin{lemma}\label{lemmaneeded}
Consider  a Lie algebra ${\cal R}$ of block diagonal matrices 
in $su(n_1+n_2+\cdots+n_r)$ with blocks of dimension $n_1$, $n_2$,...,$n_r$.  Assume that weak subspace controllability is verified and  
$$
n_j \not= n_l \quad \texttt{ if} \quad   j \not=l. 
$$
Then 
$$
{\cal R}=\bigoplus_{j=1}^r su(n_j). 
$$

\end{lemma}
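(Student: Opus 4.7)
The plan is induction on $r$ combined with a Goursat-type structure theorem for Lie subalgebras of direct sums of simple Lie algebras. The base case $r=1$ is immediate: weak subspace controllability forces ${\cal R}=su(n_1)$. For the inductive step, I assume the claim for $r-1$ blocks. Weak subspace controllability gives that each block projection $\pi_j:{\cal R}\to su(n_j)$ is surjective, so ${\cal R}\subseteq\bigoplus_{j=1}^r su(n_j)$. Applying the inductive hypothesis to the image of $\pi_{>1}:{\cal R}\to\bigoplus_{j=2}^r su(n_j)$ (which itself inherits weak subspace controllability on $r-1$ blocks of distinct dimensions) yields $\pi_{>1}({\cal R})=\bigoplus_{j=2}^r su(n_j)$.

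The core of the argument is then the following Goursat-type observation for Lie algebras. Set $L_1:=su(n_1)$ and $L_2:=\bigoplus_{j=2}^r su(n_j)$, so that ${\cal R}\subseteq L_1\oplus L_2$ with both projections surjective. Define the ideals $I_1:=\{x\in L_1:(x,0)\in{\cal R}\}$ in $L_1$ and $I_2:=\{y\in L_2:(0,y)\in{\cal R}\}$ in $L_2$; the rule $\pi_1(M)+I_1\mapsto\pi_{>1}(M)+I_2$ produces a well-defined Lie algebra isomorphism $\phi:L_1/I_1\to L_2/I_2$, and ${\cal R}$ is exactly the preimage of the graph of $\phi$. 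Because $L_1=su(n_1)$ is simple (for $n_1\geq 2$), $I_1$ must be either $0$ or $L_1$. If $I_1=L_1$, then $I_2=L_2$ and ${\cal R}=L_1\oplus L_2=\bigoplus_{j=1}^r su(n_j)$, which is the desired conclusion. If instead $I_1=0$, then $L_2/I_2\cong su(n_1)$ is simple; but $L_2$ is a direct sum of the simple Lie algebras $su(n_j)$ for $j\geq 2$, so its ideals are sums of these simple summands, forcing $L_2/I_2\cong\bigoplus_{j\in J}su(n_j)$ for some $J\subseteq\{2,\dots,r\}$. Simplicity then requires $|J|=1$, giving $su(n_1)\cong su(n_k)$ for some $k\geq 2$, which by the distinctness assumption $n_k\neq n_1$ is impossible; so this case cannot occur.

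The principal obstacle is verifying the Goursat-type correspondence for Lie algebras, in particular checking that $\phi$ is well-defined and a Lie algebra homomorphism; this follows routinely from surjectivity of the two projections together with the Lie bracket identity applied to elements of ${\cal R}$. A minor bookkeeping item is that blocks of dimension $n_j=1$ contribute the trivial algebra $su(1)=0$ and should be discarded at the outset; after this pruning every $su(n_j)$ appearing is a nontrivial simple Lie algebra, all pairwise non-isomorphic by the distinctness hypothesis, and the argument above applies directly.
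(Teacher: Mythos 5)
Your proof is correct, but it follows a genuinely different route from the paper. The paper argues directly with matrices: it fixes a basis of a Cartan subalgebra of the largest block $su(n_1)$, splits into the cases where the companion blocks in $su(n_2)$ commute or not, uses the rank inequality $n_2-1<n_1-1$ (this is where the hypothesis $n_j\neq n_l$ enters) to produce an element of the form $\texttt{diag}(\tilde A,0,*,\dots,*)$ with $\tilde A\neq 0$, and then invokes simplicity of each $su(n_j)$ to generate the full block via nested $ad$-actions, iterating block by block to isolate $\texttt{diag}(L_1,0,\dots,0)$, $\texttt{diag}(0,L_2,0,\dots,0)$, etc. You instead prove the statement structurally: induction on $r$, a Goursat-type correspondence identifying ${\cal R}\subseteq L_1\oplus L_2$ (with both projections surjective) as the preimage of the graph of an isomorphism $L_1/I_1\to L_2/I_2$, simplicity of $su(n_1)$ to reduce to $I_1=0$ or $I_1=L_1$, and the standard fact that ideals (hence quotients) of the semisimple $L_2=\bigoplus_{j\ge 2}su(n_j)$ are sums of its simple factors, so that $I_1=0$ would force $su(n_1)\cong su(n_k)$, excluded by $\dim su(n)=n^2-1$ and the distinct-dimension hypothesis; your checks of well-definedness of $\phi$, the ideal property of $I_1,I_2$ (which indeed uses surjectivity of both projections), and the pruning of $n_j=1$ blocks are all in order. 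What each approach buys: yours is shorter, conceptually cleaner, and immediately generalizes to any subalgebra of a direct sum of pairwise non-isomorphic simple Lie algebras with surjective factor projections (the distinctness of the $n_j$ is used only through non-isomorphism of the factors), at the cost of importing the Goursat correspondence and the ideal-structure theorem for semisimple Lie algebras; the paper's argument is more elementary and constructive, stays within the explicit matrix/commutator toolkit used elsewhere in the paper (e.g.\ Lemma \ref{generators}), and exhibits concretely the elements $\texttt{diag}(L_1,0,\dots,0)$ etc.\ that are later used in Section \ref{threequtrits}. One phrasing slip worth fixing: surjectivity of the block projections does not ``give'' ${\cal R}\subseteq\bigoplus_j su(n_j)$ — that containment is part of the hypothesis (the blocks are taken in $su(n_j)$), while weak subspace controllability is what gives surjectivity; as written the implication is backwards, though nothing downstream depends on it.
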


\begin{proof} According to the assumptions of the Lemma, matrices in ${\cal R}$ have the form $\texttt{diag} \left( A_1, A_2,...,A_r \right)$ with $A_j \in su(n_j)$ for $j=1,...,r$,  and,  if we fix an arbitrary $j\in \{1,2,...,r\}$ and an arbitrary  $Z \in su(n_j)$,  we can find a matrix  
$\texttt{diag}\left( A_1,A_2,...,A_r\right) \in {\cal R}$ with $A_j=Z$. Without loss of generality, let us assume that the blocks of the matrices in ${\cal R}$ are ordered according to the dimensions  in strictly decreasing order, from top to bottom,  that is, 
$$
n_1> n_2>\cdots > n_r. 
$$  
We shall prove that ${\cal R}$ contains matrices of the form 
$$
\texttt{diag} \left(L_1,0,...,0 \right), \qquad \texttt{diag} \left( 0,L_2,0,...,0 \right),..., 
\texttt{diag} \left( 0,0,...,0,L_r \right), 
$$
for arbitrary $L_j \in su(n_j)$, which proves the claim.

Consider the first block (the one with dimension $n_1$ which is the highest) and consider $n_1-1$ linearly independent matrices forming a basis of the  Cartan subalgebra (CSA) (maximal Abelian subalgebra) of $su(n_1)$, $\bar A_1,...,\bar A_{n_1-1}$.\footnote{Recall that associated with $su(N)$ is the {\it rank},  $N-1$, which is the largest number of mutually commuting linearly independent matrices in $su(N)$, which forms a basis of a CSA (see, e.g., \cite{Knapp} for details).} Therefore ${\cal R}$ contains matrices 
$$
K_1:=\texttt{diag}(\bar A_1,B_1,*,...,*), \quad K_2:=\texttt{diag}(\bar A_2,B_2,*,...,*),...,\quad 
K_{n_1-1}:=\texttt{diag}(\bar A_{n_1-1},B_{n_1-1},*,...,*). 
$$  
There are two possibilities 

\begin{itemize}
\item {\bf Case 1} There exist two indexes $j$ and $k$ such that $B_j$ and $B_k$ do {\bf not} commute. 

\item {\bf Case 2} The matrices $\{B_1,...,B_{n_1-1}\}$ in $su(n_2)$ {\bf all} commute. 

\end{itemize}

In {\bf Case 1}, we have 
$$
[K_j,K_k]=\texttt{diag}(0, [B_j,B_k],*,...,*) . 
$$
Therefore ${\cal R}$ contains the vector space sum as $l$ goes from $0$ to $\infty$ of 
$\texttt{diag} \left( 0, ad_{su(n_2)}^l[B_j,B_k], *, ...,* \right)$. The sum as $l$ goes from $0$ to $\infty$ of $ad_{su(n_2)}^l[B_j,B_k]$ is the smallest ideal in $su(n_2)$ containing $[B_j, B_k]$ and since $[B_j, B_k]\not=0$  and $su(n_2)$  is simple  it is equal to 
$su(n_2)$. Therefore ${\cal R}$ contains matrices of the form 
$
\texttt{diag} \left(0, B,*,...,* \right), 
$
with arbitrary $B \in su(n_2)$, and, since it contains also matrices $\texttt{diag} \left(\bar A, *,*,...,* \right)$ with arbitrary $\bar A \in su(n_1)$,  it contains matrices of the form $\texttt{diag}(\bar A,0,*,...,*)$ with arbitrary $\bar A \in su(n_1)$.

In {\bf Case 2}  with all matrices $\{ B_1,...,B_{n_1-1} \} \in su(n_2)$ commuting, since the rank of $su(n_2)$ is $n_2-1 < n_1-1=\texttt{rank}(su(n_1))$, $\{ B_1,...,B_{n_1-1} \} $ are linearly dependent and therefore there exists a nontrivial linear combination $\sum_{j=1}^{n_1-1} b_j B_j=0$. Therefore ${\cal R}$ contains all matrices
$$
\sum_{j=1}^{n_1-1} b_j K_j=\texttt{diag} \left(  \sum_{j=1}^{n_1-1} b_j \bar A_j, \sum_{j=1}^{n_1-1} b_j B_j, *,...,*  \right)= \texttt{diag} \left(  \sum_{j=1}^{n_1-1}  b_j \bar A_j, 0 , *,...,*  \right)=\texttt{diag} \left( \tilde A, 0 , *,...,*  \right), 
$$ 
 where we have defined $\tilde A:=\sum_{j=1}^{n_1-1} b_j \bar A_j$. Since the $\bar A_j$'s are linearly independent, and the $b_j$ are not all zeros, we have $\tilde A \not= 0$. Since $\cal R$ contains  
 diagonal blocks where the first block can be chosen 
 arbitrarily in $su(n_1)$, we have that ${\cal R}$ contains all matrices $\texttt{diag} \left( \bar A , 0, *,...,* \right)$ with arbitrary $\bar A$ in the sum as $l$ goes from $0$ to $\infty$ of the $ad_{su(n_1)}^l  \tilde A$ which is $su(n_1)$. Therefore we reach the same conclusion as in Case 1.

\vspace{0.25cm} 

Repeating the same argument with block 1 and block 3, with block 3 taking the role of block $2$, we obtain that ${\cal R}$ contains all matrices of the form $\texttt{diag} \left(\bar A, 0, 0, *, ...,* \right)$ with arbitrary $\bar A \in su(n_1)$. Continuing this way for the following blocks, $4,5,...,r$,  if any, we reach the conclusion that ${\cal R}$ contains matrices of the form  
$\texttt{diag}\left( L_1,0,...,0 \right)$ with arbitrary $L_1 \in su(n_1)$. Since ${\cal R}$ contains matrices $\texttt{diag} \left(A_1,A_2,*,...,*\right)$ with arbitrary $A_2 \in su(n_2)$, it follows that it also contains matrices of the form $\texttt{diag} \left( 0, \bar A_2,*,...,*\right)$ with arbitrary $\bar A_2 \in su(n_2)$. We can therefore now repeat the same argument as above with block 2 replacing block 1 and going down the other blocks. We conclude that ${\cal R}$ contains matrices of the form $\texttt{diag}\left(0, L_2, 0,0,...,0  \right)$ with arbitrary $L_2 \in su(n_2)$. We then continue for blocks $3$, $4$, and so on (if any) to complete the proof of the Lemma.

\end{proof}

\begin{remark}\label{practical5} 
The practical use of Lemma \ref{lemmaneeded} is that, in trying to prove subspace controllability for a certain system, we can focus on each invariant subspace separately. Another use is that, if we want to prove controllability on a certain vector space $V$ and we are able to identify two invariant subspaces for a certain {\em subset}  of generators, say $V=V_1 \oplus V_2$ we can prove controllability on $V_1$ and then $V_2$, separately,  so that we have arbitrary blocks and then use the remaining generators to `connect' the two blocks, thus having the full $su(N)$ for arbitrary $N$. We shall use this idea, along with Facts 1 and 2 of subsection \ref{duetre} in section \ref{threequtrits} to prove subspace controllability for the case of three qutrits. 

\end{remark}

\section{The case of $su(2)$; $n$ qubits}\label{caseofnqubits}

In the case of $su(2)$, the Clebsch-Gordan  decomposition  coincides with the standard one obtained in quantum mechanics in the theory of decomposition of angular momentum (see, e.g., \cite{Sakurai}, \cite{Tung}). The non isomorphic representations that appear in the decomposition are indexed by a single quantum (spin) number $p$ and are the ones that appear in (\ref{ponly}). 
There are $n_2:=\lfloor\frac{n}{2}\rfloor+1$ of them and this is the dimension of the center   of $u^{S_n}(d^n)$, which, according to Theorem \ref{center}, is given by $i{\cal C}$, where a basis of 
${\cal C}$ is 
$$
C_2^0={\bf 1}, C_2, C_2^2,...,C_{2}^{n_2-1}=C_2^{\lfloor \frac{n}{2}\rfloor}, 
$$
with $C_2$ the quadratic Casimir operator  (\ref{Casimirsu2}), (\ref{Casimirsu2plus}). As an orthonormal basis of $iu(2)$, when calculating (\ref{Casimirsu2plus}), it is customary to take the normalized identity and normalized Pauli matrix. To ease notations,  we take such matrices without the normalization factor so as they form an {\it orthogonal}  rather than {\it orthonormal} basis of $iu(2)$; the results will only differ by an unimportant common factor.  We recall the definition of the Pauli matrices 
\begin{equation}\label{Paulimat}
\sigma_x=  \begin{pmatrix} 0 & 1 \cr 1 & 0 \end{pmatrix}, \qquad 
\sigma_y=  \begin{pmatrix} 0 & i \cr -i & 0 \end{pmatrix}, \qquad 
\sigma_z=  \begin{pmatrix} 1 & 0 \cr 0 & -1 \end{pmatrix}. 
\end{equation}
Let us also use the notation of Theorem \ref{alwaysassumed} by denoting by $F_{(j_0,j_1,j_2,j_3)}$ the symmetric sum of all the tensor products with $j_0$ identities, $j_1$, $\sigma_x$'s, $j_2$, $\sigma_y$'s, $j_3$, $\sigma_z$'s. Then formula (\ref{Casimirsu2plus}) gives 
\begin{equation}\label{newdefC2}
C_2:=F_{(n-1,1,0,0)}^2+F_{(n-1,0,1,0)}^2+F_{(n-1, 0 , 0 ,1)}^2.
\end{equation}
In particular 
for $F_{(n-1,1,0,0)}^2$ we have\footnote{We use here  the standard convention of denoting by $\sigma_{x,y,z}^j$ the matrix which is the tensor product of $n$ identities except in position $j$ which is occupied by $\sigma_{x,y,z}$.} 
\begin{equation}\label{56tb}
F_{(n-1,1,0,0)}^2=\left( \sum_{j=1}^n \sigma_x^j\right)^2
=\sum_{j=1}^n
(\sigma_x^j)^2+2 \sum_{1\leq j <k \leq n} \sigma_x^j \sigma_x^k=
n {\bf 1} + 2 \sum_{1\leq j <k \leq n} \sigma_x^j \sigma_x^k=n {\bf 1} + 2 F_{(n-2,2,0,0)}, 
\end{equation}
and analogous formulas for $F_{(n-1,0,1,0)}$, $F_{(n-1,0,0,1)}$, and summing the terms 
(for $x$, $y$, and $z$), according to (\ref{newdefC2}),  we obtain 
\begin{equation}\label{stilCasimir}
C_2=3n {\bf 1}+ 2 \left( F_{(n-2,2,0,0)}+ F_{(n-2,0,2,0)} +F_{(n-2,0,0,2)} \right)=3n{\bf 1} + 2 A, 
\end{equation}
defining $A:=F_{(n-2,2,0,0)}+ F_{(n-2,0,2,0)} +F_{(n-2,0,0,2)}$. Therefore, a new `direction' with respect to $C_2^{0}={\bf 1}$ is in $A$. The matrix $2A$ is the matrix $C_1$ given in formula (B.2) of \cite{Marco}. In general,\footnote{We can use the binomial formula since the matrices ${\bf 1}$ and $A$ commute.} for $k=0,1,...,\lfloor \frac{n}{2} \rfloor$ 
\begin{equation}\label{bino1}
C_2^k=\left(3n{\bf 1}+ 2A \right)^k=\sum_{j=0}^k {{k}\choose{j}} \left( 3n \right)^{k-j} 2^j A^j, 
\end{equation}
and directions not in $\texttt{span} \{ C_2^0, C_2,...,C_2^{k-1} \}$ are in $A^k$, that is, $A^k$ is the sum of a term in 
$\texttt{span} \{ C_2^0, C_2,...,C_2^{k-1} \}$ and an extra term which is    proportional to the  element $C_k$ in formula (B2), (B3)  of reference \cite{Marco}. This way one obtains the elements of the basis of the center given in reference 
\cite{Marco} which we repeat here 
\begin{equation}\label{FormulaB2}
\sum_{a+b+c=k}\frac{(2a)! (2b)! (2c)!}{a! b! c!} F_{(n-2k,2a,2b,2c)}, \qquad 0 \leq k \leq \lfloor\frac{n}{2}\rfloor
\end{equation}

The dynamical Lie algebra considered in \cite{Marco} (see also \cite{confraJMP}) is generated by the {\it local} Hamiltonians $F_{(n-1,1,0,0)}$, $F_{(n-1,0,1,0)}$, and $F_{(n-1,0,0,1)}$ and, for fixed $k$,  by $j$-body symmetric Hamiltonians for $j=2,...,k$,  $F_{(n-j,0,0,j)}.$ Since, according to (\ref{FormulaB2}) the center only contains symmetric $j$-body Hamiltonians with $j$ even, the only Hamiltonians which have nonzero component on the center of $u^{S_n}(d^n)$ are the  
$
F_{n-j,0,0,j}$'s, with $j$ even. Therefore,  assuming that the component of the generators in 
$su_{cless}^{S_n}(2^n)$ generate all 
of $su_{cless}^{S_n}(2^n)$, from Theorem  \ref{genfacLem}, 
we have that the dynamical Lie algebra is $su_{cless}^{S_n}(2^n) \oplus Q$, where $Q$ is the (commutative) span of the components of the generators along the center  which in the case of 
at most $k$-body Hamiltonians has dimension 
$\lfloor \frac{k}{2} \rfloor$. It is indeed true that the generators generate all of $su_{cless}^{S_n}(2^n)$. This fact is shown in \cite{Marco}.\footnote{The proof in \cite{Marco} uses   a similar   technique to what  was put forward in \cite{confraJMP}.}
%but also corrected  a significant mistake in \cite{confraJMP} which led to erroneously conclude that the dynamical Lie algebra is the whole $su^{S_n}(2^n)$.}
We record below  this result  in a different format for future use and within the framework of the present work. 

\begin{theorem}\label{theordiffformat}

Consider a Clebsch-Gordan (reducible) representation of $su(2)$ on $n$ qubits and 
let $\hat S_x,\hat S_y,\hat S_z$ the images (under the homomorphism (\ref{totalOperators}) (\ref{rrt})  defining the 
representation) of $\sigma_x,\sigma_y,\sigma_z$  (\ref{Paulimat}). Denote by $V^j$, 
$j=1,...,\lfloor \frac{n}{2}\rfloor +1$ the irreducible modules of the (nonisomorphic) representations, appearing in the (reducible) representation. Consider the Lie algebra ${\cal L}$ generated by $i \hat S_x,i \hat S_y,i \hat S_z$ and $i \hat S_z^2$. Then 
\begin{equation}\label{AnnikaLeo}
{\cal L}:= \left( \bigoplus_{j=1}^{\lfloor \frac{n}{2} \rfloor +1}  su \left( \texttt{dim} (V^j) \right) \right) 
\oplus \texttt{span} \{ \left(  in {\bf 1}+i2 F_c\right) |_{\oplus_j V^j}\},  
\end{equation}
where $F_c$ is the orthogonal component of $F_{({n-2}, 0, 0, 2)}$ onto the center ${\cal C}$, and $A|_V$ denotes the restriction of the operator $A$ on the subspace $V$.\footnote{Note that this is proportional to (\ref{FormulaB2}) with $k=1$ since this is the only part with $2$-body Hamiltonians in the center. Notice also that $in {\bf 1}+i2 F_c$ acts as a scalar operator on each of the subspaces $V^j$.} 
\end{theorem}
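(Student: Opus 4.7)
I would reduce the statement to a direct application of Theorem \ref{another} combined with the subspace controllability result of \cite{Marco}. The starting observation is that the three generators $i\hat S_x,\, i\hat S_y,\, i\hat S_z$ are $1$-body symmetric Hamiltonians, and by \eqref{FormulaB2} the Casimir subalgebra $i{\cal C}$ is spanned by $j$-body symmetric elements with $j$ even. Consequently these three generators have zero component on $i{\cal C}$ and already sit inside the semisimple summand $su_{cless}^{S_n}(2^n)$ of \eqref{orthodec}. Only the fourth generator $i\hat S_z^2$ requires splitting: computing as in \eqref{56tb} gives $\hat S_z^2=n{\bf 1}+2F_{(n-2,0,0,2)}$, and writing $F_{(n-2,0,0,2)}=F_c+F_s$ with $iF_c\in i{\cal C}$ and $iF_s\in su_{cless}^{S_n}(2^n)$, the center part of $i\hat S_z^2$ is $i(n{\bf 1}+2F_c)$ while its semisimple part is $2iF_s$.

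With this splitting in place, I would invoke Theorem \ref{another} with ${\cal G}=u^{S_n}(2^n)$, center $i{\cal C}$, and semisimple part $\hat{\cal S}=su_{cless}^{S_n}(2^n)$, with the four $C_k+S_k$ obtained as above. Condition (1) of Theorem \ref{another} then reads $su_{cless}^{S_n}(2^n)\subseteq {\cal L}$, which is exactly the subspace controllability statement for $n$ qubits with a single quadratic symmetric coupling, and which is the main content proved in \cite{Marco}: on each irreducible module $V^j$ appearing in the CG decomposition, ${\cal L}$ acts as the full $su(\dim V^j)$. Because distinct $su(2)$-modules $V^j$ in this decomposition have distinct dimensions (each is labelled uniquely by the spin parameter $p$ in \eqref{ponly}), the hypothesis of Lemma \ref{lemmaneeded} is met and blockwise surjectivity lifts to the direct-sum inclusion $\bigoplus_j su(\dim V^j) = su_{cless}^{S_n}(2^n)\subseteq {\cal L}$.

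Once condition (1) of Theorem \ref{another} is verified, its conclusion (3) gives ${\cal L} = \mathrm{span}\{C_1,\dots,C_4\}\oplus su_{cless}^{S_n}(2^n)$. Since only the fourth generator contributes a nontrivial center component, the first summand collapses to the one-dimensional span of $i(n{\bf 1}+2F_c)$ restricted to $\oplus_j V^j$, which is exactly the Abelian factor displayed in \eqref{AnnikaLeo}. The only substantive obstacle in this plan is the inclusion $su_{cless}^{S_n}(2^n)\subseteq{\cal L}$; everything else is formal bookkeeping within the orthogonal decomposition \eqref{orthodec}. That obstacle is handled by importing the inductive argument of \cite{Marco}, which climbs across the invariant subspaces by repeatedly bracketing $\hat S_z^2$ against the local operators $\hat S_x,\hat S_y$ and using the resulting nested commutators to fill in each $su(\dim V^j)$ block.
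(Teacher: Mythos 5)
Your proposal is correct and follows essentially the same route as the paper's own proof: both split $\hat S_z^2=n{\bf 1}+2F_{(n-2,0,0,2)}$ via (\ref{56tb}), write $F_{(n-2,0,0,2)}=F_s+F_c$, import the result of \cite{confraJMP}, \cite{Marco} to get that the semisimple components generate $\bigoplus_j su(\dim V^j)$, and then conclude with the general Levi-type statements (you via Theorem \ref{another}, the paper via the equivalent Theorem \ref{genfacLem}). Your extra detour through Lemma \ref{lemmaneeded} and the distinct dimensions of the $su(2)$-modules is harmless but not needed, since the cited result already gives the full direct sum rather than only blockwise surjectivity.
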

\begin{remark}\label{AnnikaLeoRem} It is important to remark, and it will be used in the next section,  that the above result holds even if we do not consider the {\it full} Clebsch-Gordan  decomposition of $(\mathbb{C}^2)^{\otimes n}$ but only some of the invariant irreducible modules. These are  (possibly) reducible representation where all the irreducible modules $V^j$ also appear in a  Clebsch-Gordan decomposition of $(\mathbb{C}^2)^{\otimes n}$, for some $n$. Formula (\ref{AnnikaLeo})  in these cases will give a direct sum of  $su \left( \texttt{dim} (V^j) \right) $ Lie algebras, with some positive integer $m$ replacing $\lfloor \frac{n}{2} \rfloor+1$. This is  also the reason why we chose the notation in (\ref{AnnikaLeo}) rather than simply write $\texttt{span} \{ \left(  in {\bf 1}+i2 F_c\right) \}$. 

\end{remark}

\begin{proof} The result of \cite{confraJMP}, \cite{Marco} says that the Lie algebra generated by $i \hat S_{x,y,z}$ and $iF_{(n-2,0,0,2)}$ on $(\mathbb{C}^2)^{\otimes n}$ is the direct sum 
of all the $su\left( \dim(V) \right)$'s for every $V$  irreducible Clebsch-Gordan module,  $\oplus$ the span of the operator in formula (\ref{FormulaB2})  for  $k=1$, which is $A$ in formula (\ref{stilCasimir}). Write $F_{(n-2,0,0,2)}$ according to the orthogonal decomposition 
$$
F_{(n-2,0,0,2)}:=F_s+F_c, 
$$
where $F_s$ is the component onto $\bigoplus_j  i su \left( \dim ( V^j) \right) $ 
where the sum is taken over {\it all} the (nonisomorphic) Clebsch-Gordan subspaces  $V^j$ and $F_c$ is the orthogonal complement in $u^{S_n}\left( d^n \right)$, that is, the component 
along the center. The result of \cite{confraJMP}, \cite{Marco} says that 
\begin{equation}\label{risultato}
\left\{ i F_{(n-2,0,0,2)}, i \hat S_x, i \hat S_y,i \hat S_z \right\}_{Lie}=\left\{ i F_s+i F_c, i \hat S_x, i \hat S_y, i \hat S_z \right\}_{Lie}= 
\texttt{span} \{i  F_c\} \oplus_{j} su \left( \dim  V^j\right). 
\end{equation}
We remark that this implies that\footnote{If this was not true,  then there would be a matrix 
$A \in \oplus_{j} su \left( \dim ( V^j  
) \right)$ orthogonal to $\{i  F_s,i \hat  S_x,i \hat S_y, i \hat S_z\}_{Lie}$ and 
therefore to $\{i F_c+ i F_s, i \hat S_x, i \hat S_y, i \hat S_z\}_{Lie}$ which contradicts the result (\ref{risultato}).}  
\begin{equation}\label{impoeq}
\{ i F_s, i \hat S_x,i \hat S_y,i \hat S_z\}_{Lie}=
\oplus_{j} su \left( \dim V^j\right).
\end{equation} 
Now using (\ref{56tb}) we get 
$$
\{ i \hat S_z^2, i \hat S_x, i \hat S_y, i \hat S_z\}_{Lie}=\{in{\bf 1} +2iF_s+2iF_c, i \hat S_x,i \hat S_y, i \hat S_z \}_{Lie}, 
$$
which, using Theorem \ref{genfacLem} with (\ref{impoeq}), and reducing (possibly) only to certain irreducible modules gives  (\ref{AnnikaLeo}). 
\end{proof}

\section{Subspace controllability for a symmetric network of three qutrits}\label{threequtrits}

We now consider, and prove, subspace controllability 
for a symmetric network of $n=3$ three 
dimensional quantum systems ({\it qutrits}). This is a new case. Some of the things we shall say are actually valid for general $n$ and we believe could be used for systems of general dimensions $d$ as well.  Generalizing the situation described for $su(2)$, the dynamical Lie algebra we shall study is generated by arbitrary `local' symmetric Hamiltonians of the form 
$$
H=L\otimes {\bf 1} \otimes \cdots \otimes {\bf 1} + {\bf 1} \otimes L \otimes {\bf 1} \cdots {\bf 1} + \cdots + {\bf 1} \otimes \cdots \otimes {\bf 1} \otimes L, 
$$
for general $L$ in $isu(3)$ and by a 2-body Hamiltonian which is a symmetric sum of all tensor products containing the identity  in every location except in two locations which are occupied by a given $\tilde E \in isu(3)$. For definiteness, we shall 
take  for $\tilde E=S_z^1:=E_3$ in (\ref{GellMann}), 
(\ref{Paulimatplus2}) below.\footnote{The case of $\tilde E$ 
any matrix in $i su(3)$ with a zero eigenvalue  can be reduced to this case.}

\subsection{Generalities about $su(3)$}
For $su(3)$, the role of the Pauli matrices as a basis of $isu(3)$ is taken by the {\it Gell-Mann} matrices  which we report here 
\begin{equation}\label{GellMann}
E_1:=\begin{pmatrix} 0 & 1 & 0 \cr 1 & 0 & 0 \cr 0 & 0 & 0  \end{pmatrix}, \, 
E_2:=\begin{pmatrix} 0 & -i & 0 \cr i & 0 & 0 \cr 0 & 0 & 0  \end{pmatrix}, \, 
E_3:=\begin{pmatrix} 1 & 0 & 0 \cr 0 & -1 & 0 \cr 0 & 0 & 0  \end{pmatrix}, \, 
E_4:=\begin{pmatrix} 0 & 0 & 1 \cr 0 & 0 & 0 \cr 1 & 0 & 0  \end{pmatrix},
\end{equation}
$$
E_5:=\begin{pmatrix} 0 & 0 & -i \cr 0 & 0 & 0 \cr i & 0 & 0  \end{pmatrix}, \, 
E_6:=\begin{pmatrix} 0 & 0 & 0 \cr 0 & 0 & 1 \cr 0 & 1 & 0  \end{pmatrix} \, 
E_7:=\begin{pmatrix} 0 & 0 & 0 \cr 0 & 0 & -i \cr 0 & i & 0  \end{pmatrix} \, 
E_8:=\frac{1}{\sqrt{3}} \begin{pmatrix} 1 & 0 & 0 \cr 0 & 1 & 0 \cr 0 & 0 & -2  \end{pmatrix}.  
$$
They satisfy the commutation and anti-commutation relations 
\begin{equation}\label{commuanti}
[E_j, E_k]=\sum_{l} i f_{j,k}^l E_l, \qquad \{ E_j, E_k \} =\frac{4}{3} {\bf 1}+ \sum_{l} d_{j,k}^l E_l,
\end{equation}
where $f_{j,k}^l$ are the {\it structure constants} which are {\it antisymmetric} with respect to any change of the indexes and $d_{j,k}^l $ are  {\it symmetric} with respect to any change of the indexes. The $f_{j,k}^l$ and $d_{j,k}^l$ constants are easily computed \cite{QMS} and we report them along with some of their properties in Appendix \ref{GM}.

 \vspace{0.25cm}
 
The {\it raising} and {\it lowering} operators  $S_+^l$ and $S_-^l$ $l=1,2$,   are defined specializing (\ref{Szpmstandard}) (\ref{totalOperators}). Furthermore, (cf. (\ref{rrt})) in analogy with the Pauli matrices (\ref{Paulimat}), we define\footnote{These definitions naturally extend to $su(d)$ (see, e.g., \cite{Alex}.} 
\begin{equation}\label{Paulimatplus}
S_x^1:=S_+^1+S_-^1=E_1, \, S_x^2:=S_+^2+S_-^2=E_6, \, 
S_y^1:=i\left(S_+^1-S_-^1 \right)=-E_2, \,  S_y^2:= i\left(S_+^1-S_-^1 \right)=-E_7, 
\end{equation}
 and
\begin{equation}\label{Paulimatplus2}
 S_z^1:=E_3, \qquad S_z^2=\begin{pmatrix} 0 & 0 & 0 \cr 0 & 1 & 0 \cr 0 & 0 & -1  \end{pmatrix}. 
\end{equation}

\subsection{The center of $u^{S_n}(3^n)$ and calculation of the quadratic Casimir operator}
 
 We know from Proposition \ref{center} that the center of $u^{S_n}(3^n)$ is the Abelian Lie algebra generated by the Casimir operators. In the case of $su(3)$, as compared with the case of $su(2)$ however, there are  two independent Casimir operators $C_2$ and $C_3$, needed to distinguish between two different irreps of $su(3)$. In our case, not all the possible irreducible representations appear in the Clebsch-Gordan decomposition of $(\mathbb{C}^3)^{\otimes n}$.  However, in general, one still needs both the  quadratic and cubic Casimir eigenvalues to distinguish between representations.  This follows from the analysis in Appendix \ref{quadraticCasi} (cf., in particular Example \ref{controesempio1}). In some cases, like the case of $n=3$ which will be treated next,  the quadratic Casimir  eigenvalue is sufficient to distinguish between different representations. Therefore,  in these cases (cf. the proof of Proposition \ref{center}), the center of $u^{S_n}(3^n)$ is given by the algebra generated by the quadratic Casimir operator, that is, it is given,  by the span of 
  $$
 C_2^0={\bf 1}, C_2,...,C_2^{n_2-1}. 
 $$
Here  $n_2$  is the dimension of the center, given by the number of non isomorphic irreducible representations in $(\mathbb{C}^3)^{\otimes n}$,  which is, using (\ref{admissrep}):\footnote{Obtaining a closed form expression for $n_2$ in the case of $su(3)$ is more difficult than in the case of $su(2)$. From formula (\ref{n2su3}), we obtain 
 $$
 n_2=\lfloor \frac{n}{3} \rfloor +1 +  \sum_{m=n,n-3,...,n-3 \lfloor \frac{n}{3}\rfloor, \, m \texttt{ even}} \left( \lfloor \frac{m}{2}  \rfloor \right)+\sum_{m=n,n-3,...,n-3 \lfloor \frac{n}{3}\rfloor, \, m \texttt{ odd}} \left( \lfloor \frac{m}{2}  \rfloor \right)=
  $$
  $$
  \lfloor \frac{n}{3} \rfloor +1 +  \sum_{m=n,n-3,...,n-3 \lfloor \frac{n}{3}\rfloor, \, m \texttt{ even}} \left(  \frac{m}{2}   \right)+\sum_{m=n,n-3,...,n-3 \lfloor \frac{n}{3}\rfloor, \, m \texttt{ odd}} \left(  \frac{m-1}{2}   \right)=
  $$
  $$
   \lfloor \frac{n}{3} \rfloor +1 + \frac{1}{2} \left(  \sum_{m=n,n-3,...,n-3 \lfloor \frac{n}{3}\rfloor} m -NUM(n) \right), 
  $$
  where $NUM(n)$ denotes  the number of odd numbers in the set $\{n,n-3,n-6,...,n-3\lfloor \frac{n}{3} \rfloor \}$. 
  This gives
  $$
  n_2=\lfloor\frac{n}{3}\rfloor +1 +\frac{1}{2} \left( \sum_{j=0}^{\lfloor \frac{n}{3}\rfloor}  (n-3j) -NUM(n)\right)=
  \lfloor\frac{n}{3}\rfloor +1 +\frac{1}{2} \left( n\left( \frac{n}{3} +1\right)-3 \sum_{j=1}^{\lfloor\frac{n}{3}\rfloor} j -NUM(n) \right) =
  $$
  $$
  \lfloor \frac{n}{3} \rfloor+1 +\frac{1}{2} \left( n \left( \lfloor \frac{n}{3} \rfloor+1 \right) -\frac{3}{2} \left( \lfloor \frac{n}{3}\rfloor+1\right)\lfloor\frac{n}{3} \rfloor -NUM(n) \right). 
  $$
This expression can be slightly simplified by writing $n=3 \lfloor \frac{n}{3} \rfloor+l$ for $l=0,1,2$. This gives  
  $$
n_2=  \lfloor \frac{n}{3} \rfloor+1 +\frac{1}{2} \left(    \left( \lfloor\frac{n}{3} \rfloor +1 \right) \left( \frac{3}{2} \lfloor\frac{n}{3} \rfloor +1\right) -NUM(n)\right).
$$
Furthermore, with this notation $NUM(n)=\frac{1}{2} \lfloor \frac{n}{3} \rfloor +\alpha$, where $\alpha=1$ if $\lfloor\frac{n}{3} \rfloor $ is even and $l$ is odd, 
 $\alpha=\frac{1}{2}$ if $\lfloor\frac{n}{3} \rfloor $ is odd,  
 $\alpha=0$ if $\lfloor\frac{n}{3} \rfloor$ is even and $l$ is odd.}
\begin{equation}\label{n2su3}
 n_2=\sum_{m=n,n-3,...,n-3\lfloor \frac{n}{3}\rfloor} \left( \lfloor \frac{m}{2} \rfloor +1  \right). 
 \end{equation}

\vspace{0.25cm} 

Since in some cases (like the $n=3$ case below) the quadratic Casimir eigenvalue is sufficient to generate the whole center of $u^{S_n}(3^n)$ and to be in parallel with the case of $d=2$ treated in the previous section, we proceed to the calculation of the quadratic Casimir operator  for the case $d=3$.

 %%%%%%%%%%%%%%%%%%%%%
 %%%%%%%%%%%%%%%%%%%%%
 %%%%%%%%%%%%%%%%%%%%%
 %%%%%%%%%%%%%%%%%%%%%
 %%%%%%%%%%%%%%%%%%%%%
 %%%%%%%%%%%%%%%%%%%%%
 %%%%%%%%%%%%%%%%%%%%%

Consider again the matrices $F_{(j_0,j_1,...,j_8)}$, the symmetric sum of all tensor products containing $j_0$ times $E_0:={\bf 1}$, $j_1$ times $E_1$ in (\ref{GellMann}), $j_2$ times $E_2$ and so on. 
%$\frac{\sqrt{2}}{\sqrt{3}} {\bf 1}$,\footnote{We make this choice to keep the same norm for the multiple of the identity and the other Gell-Mann matrices in (\ref{GellMann}).}  
With this notation, the definition of the quadratic Casimir operator (up to an unimportant common factor) is (cf. (\ref{newdefC2}))
\begin{equation}\label{Cassu3}
C_2:=F_{(n-1,1,0,...,0)}^2+F_{(n-1,0,1,0,...,0)}^2+ \cdots + F_{(n-1,0,0,0,...,1)}^2. 
\end{equation} 
Our goal now is to obtain for $C_2$ in (\ref{Cassu3}) an expression analogous to the expression (\ref{stilCasimir}) for the  $su(2)$ case. To this purpose, we calculate $F_{(n-1,1,0,...,0)}^2$, denoting by $E_k^j$ the tensor products of identities with only $E_k$ (in (\ref{GellMann})) in position $j$. Using the commutativity of $E_1^j$ and $E_1^l$, we have, 
$$
F_{(n-1,1,0,...,0)}^2=\left( \sum_{j=1}^n E_1^j \right)^2=\sum_{j=1}^n (E_1^j)^2 + 
2 \sum_{j<l} E_1^j E_1^l =\sum_{j=1}^n (E_1^j)^2+2 F_{(n-2,2,0,...,0)}, 
$$
and, using (\ref{commuanti}), we get 
$$
F_{(n-1,1,0,...,0)}^2=\frac{2n}{3} {\bf 1}+ \frac{1}{2} \sum_{j=1}^n \sum_{l=1}^8 d_{1,1}^l E_l^j +2 F_{(n-2,2,0,...,0)}. 
$$
The same calculation holds for $F_{(n-1,0,1,0,...,0)}^2$, $F_{(n-1,0,0,1,...,0)}^2$, and so on,up to 
$F_{(n-1,0,,...,0,1)}^2$, giving analogous  result as above except that the $d_{1,1}^l$ coefficients are replaced by $d_{2,2}^l$, $d_{3,3}^l$,...,$d_{8,8}^l$. 
Combining all these in (\ref{Cassu3}), we obtain:
$$
C_2=\sum_{k=1}^8 \left( \frac{2n}{3}{\bf 1} + \frac{1}{2} \sum_{j=1}^n \sum_{l=1}^8 d_{k,k}^l F_l^j\right)+2\left( F_{(n-2,2,0,...,0)}+F_{(n-2,0,2,0,...,0)}+\cdots+ F_{(n-2,0,...,0,2)}\right)=
$$   
$$
\frac{16n}{3} {\bf 1} + \frac{1}{2}\sum_{l=1}^8 \left( \sum_{k=1}^8 d_{k,k}^l \right) \left( \sum_{j=1}^n F_l^j\right) +2\left( F_{(n-2,2,0,...,0)}+F_{(n-2,0,2,0,...,0)}+\cdots+ 
F_{(n-2,0,...,0,2)}\right)=
$$
$$
\frac{16n}{3} {\bf 1} + 2\left( F_{(n-2,2,0,...,0)}+F_{(n-2,0,2,0,...,0)}+\cdots+ F_{(n-2,0,...,0,2)}\right), 
$$
where in the last equality we used (\ref{zorro}) of  Lemma \ref{help} in the Appendix. Therefore, we obtain an expression similar to the expression (\ref{stilCasimir}), that is,  
$$
C_2= \frac{16n}{3} {\bf 1} +2A, \, \texttt{ with } A:=F_{(n-2,2,0,...,0)}+F_{(n-2,0,2,0,...,0)}+\cdots+ F_{(n-2,0,...,0,2)}, 
$$
and, as in (\ref{bino1}), we obtain 
$$
C_2^k:=\left( \frac{16 n}{3} {\bf 1} + 2A \right)^k=\sum_{j=0}^k 
\begin{pmatrix} k \cr j \end{pmatrix} \left( \frac{16n}{3}\right)^{k-j} 2^j A^j. 
$$
In general, the element $C_2^k$ is a linear combination of ${\bf 1},A,A^2,...,A^{k}$and the new `direction' introduced by $C_2^k$, with respect to $\texttt{span} \{ C_2^0, C_2,...,C_2^{k-1}\}$ is in $A^k$. 

\subsection{The special case of  $n=3$ qutrits}

 According to Proposition \ref{clarif} there are three non-isomorphic irreducible representations 
which appear in the CG decomposition of $(\mathbb{C}^3)^{\otimes 3}$.   They have i-weights 
$$
(3,0,0), \quad (2,1,0), \quad (1,1,1)\equiv (0,0,0). 
$$
The module for the representation $(3,0,0)$ is the so-called {\it symmetric sector} or {\it symmetric subspace}, that is,  the subspace of $(\mathbb{C}^3)^{\otimes 3}$ spanned by vectors which are invariant under the permutation group. This 
representation has dimension $10$ and it is spanned by the (normalized) {\it Dicke states} defined as  sums of all tensor products of three among $\{|0\rangle, |1\rangle, |2\rangle\}$ which 
have equal numbers of $|0\rangle$'s, $|1\rangle$'s and $|2\rangle$'s (see subsection \ref{LAotss} below). The representation $(2,1,0)$ is 
also interesting because it is the 8-dimensional {\it adjoint representation}  (see, e.g., \cite{Knapp}). The representation $(1,1,1)$ is the {\it trivial 
(one dimensional) representation}  of $su(3)$, which is zero on every vector of a one dimensional vector space. The full Lie algebra $u^{S_3}(3^3)$ is therefore   $u^{S_3}(3^3)=u(10)\oplus u(8) \oplus u(1)$  
which has dimension $165$, so that $su^{S_3}(3^3)$ has dimension $164$. 

We are interested in calculating the dynamical Lie algebra generated by local qutrit symmetric transformations corresponding to the generators $F_{(2,1,0,...,0)}$, 
$F_{(2,0,1,0,...,0)}$,...,$F_{(2,0,0,0,...,1)}$ where the positions are occupied by the Gell-Mann matrices in (\ref{GellMann}) or the identity,\footnote{For instance $F_{(2,1,0,0,0,0,0,0)}=E_1 \otimes {\bf 1} \otimes {\bf 1}+ {\bf 1} \otimes E_1 \otimes {\bf 1}+{\bf 1} \otimes {\bf 1} \otimes E_1$}, and a symmetric two-body Hamiltonian which we take with the matrix $E_3$ in (\ref{GellMann}), that is, the two body Hamiltonian is  $F_{(1,0,0,2,0,0,0,0,0)}$ where the $2$ appears in the fourth position corresponding to $E_3$, that is, 
\begin{equation}\label{twbHam}
H:=F_{(1,0,0,2,0,0,0,0,0)}=E_3 \otimes E_3 \otimes {\bf 1}+E_3 \otimes {\bf 1} \otimes E_3 + {\bf 1} \otimes E_3 \otimes E_3. 
\end{equation} 
The generators $iF_{(2,1,0,...,0)}$, 
$iF_{(2,0,1,0,...,0)}$,...,$iF_{(2,0,0,0,...,1)}$ span a Lie algebra of {\it local} simultaneous unitary  transformations which is isomorphic to $su(3)$. We denote this Lie algebras by ${\cal L}oc$.  
Define for brevity $\hat F_j:=F_{(2,0,..,0,1,0,...,0)}$ where the $1$ appears in the $j-$th position counting the first as the zero-th position and notice that 
$$
\hat F_3^2=E_3^2\otimes {\bf 1} \otimes {\bf 1}+{\bf 1} \otimes E_3^2 \otimes {\bf 1} + {\bf 1} \otimes {\bf 1} \otimes E_3^2+ 
2 \left( E_3 \otimes E_3 \otimes {\bf 1} + E_3 \otimes {\bf 1} \otimes E_3 + {\bf 1} \otimes E_3 \otimes E_3 \right)=
$$
$$
2{\bf 1} + \frac{1}{2} \sum_{l=1}^8 d_{3,3}^l \hat F_3 
+ 2\left( E_3 \otimes E_3 \otimes {\bf 1}+E_3 \otimes {\bf 1}  \otimes E_3+ {\bf 1} \otimes E_3 \otimes E_3 \right)=
$$
$$
2 {\bf 1} +\frac{1}{\sqrt{3}} \hat F_8+ 2 F_{(1,0,0,2,0,0,0,0,0)}, 
$$
where we used the tables for the $d_{j,k}^l$ coefficients  in Appendix \ref{GM}. 
Since $ \hat F_8$ is local, the (dynamical) Lie algebra generated by  
$\{ i \hat F_3^2, {\cal L}{oc}\}$ coincides with the one generated 
by $\{ i{\bf 1}+ i  F_{(1,0,0,2,0,0,0,0,0)}, {\cal L}{oc}\}$. According 
to Theorem \ref{another}, the Lie algebra $\{i \hat F_3^2, {\cal L}oc \}_{Lie}$ contains $su^{S_3}_{cless}(27)$ (cf., (\ref{orthodec})) if and only if ${\cal L}:=\{ i H, {\cal L}oc \}_{Lie}$ does. In this case,  the two Lie algebras coincide except for their 
(one dimensional) component onto the center. They are equivalent  in terms of subspace controllability.  Therefore, instead 
of studying the Lie algebra $\{ i H, {\cal L}oc \}_{Lie}$ we shall 
study  equivalently $\{ i {\hat F}_3^2, {\cal L}oc \}_{Lie} $. In the spirit 
of Lemma \ref{lemmaneeded}, we now study the Lie algebra generated on each of the sub-representations.

\subsubsection{Lie algebra on the symmetric sector (representation 
$(3,0,0)$)}\label{LAotss}  

The most convenient basis to be chosen in the symmetric sector, which is the module of the representation $(3,0,0)$, is made of the normalized Dicke's states\footnote{We give definitions in the case $n=3$, $d=3$ that can be however naturally extended to general values of $n$ and $d$.}
$$
|\tilde \phi_{[w_1,w_2,w_3]}\rangle=\sqrt{\frac{w_1! w_2! w_3!}{3!}} |\phi_{[w_1,w_2,w_3]} \rangle, 
$$
where $|\phi_{[w_1,w_2,w_3]} \rangle$ is the sum of all states which are tensor products of $w_1$, $|0\rangle$'s,  $w_2$, $|1\rangle$'s, and $w_3$, $|2\rangle$'s, and $\sqrt{\frac{w_1! w_2! w_3!}{3!}}$ is a normalization factor. For example, we have,  
$$
|\tilde \phi_{[2,0,1]} \rangle =\frac{1}{\sqrt{3}} \left( |0 0 2\rangle+ |2 0 0 \rangle+ |0 2 0\rangle \right). 
$$
Using the definitions (\ref{Szpmstandard}) (\ref{totalOperators}), for $l=1,2$, we get, by direct calculation, 
$$
\hat S_+^l |\tilde \phi_{[w_1,w_2,w_3]}\rangle =\sqrt{(w_l+1)w_{l+1}} |\tilde \phi_{[... w_{l}+1, w_{l+1}-1,...]}, 
$$
$$
\hat S_-^l |\tilde \phi_{[w_1,w_2,w_3]}\rangle =\sqrt{(w_{l+1}+1)w_{l}} |\tilde \phi_{[... w_{l}-1, w_{l+1}+1,...]}\rangle,  
$$
which also gives the corresponding relations for $\hat S_{x,y}^l$, using (\ref{Paulimatplus}). 

Set now $l=1$, and notice that $\hat S_{\pm}^1$, $\hat S_{x,y,z}^1$ do 
not modify the number of $|2\rangle$'s appearing in the Dicke states, i.e., 
$w_3$ in $|\tilde \phi_{[w_1,w_2,w_3]} \rangle $. This fact is in line with {\bf Fact 1} of subsection \ref{duetre} if we   identify the Dicke state $|\tilde \phi_{[w_1,w_2,w_3]}$ with the Gelfand-Tsetlin state with weight vector $w=(w_1,w_2,w_3)$.  Therefore we divide the symmetric sector of three qutrits into invariant subspaces ${\cal S}_j$, $j=0,1,2,3$ where each ${\cal S}_j$  is the span of $|\tilde \phi_{w_1,w_2,j]}\rangle$. We have $\dim\left( {\cal S}_j\right)=4-j$ and $i\hat S_{x,y,z}^1$ give on  ${\cal S}_j$ an irreducible representation of $su(2)$,\footnote{Notice we have $su(2)$ here and not $su(3)$.}  a fact that is a special case of {\bf Fact 2} in subsection \ref{duetre}.

%%%%%%%%%%%%%%%%%%%%
Using Theorem  \ref{theordiffformat}, since we have $i(\hat S_z^1)^2$ and $i\hat S_{x,y,z}^1$ and since ${\cal S}_0$ and ${\cal S}_2$ are CG modules, for the CG decomposition of $(\mathbb{C}^2)^{\otimes 3}$, the dynamical Lie algebra contains matrices that (in appropriate coordinates) have arbitrary blocks of dimension $4$ and $2$ with values in $su(4)$ and $su(2)$ respectively. 
Analogously,  since ${\cal S}_1$ and ${\cal S}_3$ are CG modules, for the CG decomposition of $(\mathbb{C}^2)^{\otimes 2}$,  the dynamical Lie algebra contains matrices that (in appropriate coordinates)  have arbitrary blocks of dimension $3$ and $1$ with values in $su(3)$ and $su(1)\equiv \{ 0\}$,  respectively. 
Since all these blocks have different dimensions, we can apply Lemma \ref{lemmaneeded} to conclude that the dynamical Lie algebra ${\cal L}$ contains matrices of the form (on the symmetric sector)
\begin{equation}\label{formonthe}
\begin{pmatrix} 
B_4 & 0 & 0 & 0 \cr 
0 & B_3 & 0 & 0 \cr 
0 & 0 & B_2 & 0 \cr 
0 & 0 & 0 & B_1
\end{pmatrix}
\end{equation} 
with $B_j$ arbitrary in $su(j)$. 
%%%%%%%%%%%%%%%%%%%%%
%%%%%%%%%%%%%%%%%%%%%%

Let us now consider $i \hat S_{x,y,z}^2$ on the symmetric sector for which 
we have identified the above subspaces.   Direct calculation of $\hat S_+^2$ in the given basis gives that all the entries are zero except $(\hat S_+^2)_{2,5}=1$, $(\hat S_+^2)_{3,6}=\sqrt{2}$, $(\hat S_+^2)_{4,7}=\sqrt{3}$, $(\hat S_+^2)_{6,8}=\sqrt{2}$, $(\hat S_+^2)_{7,9}=2$, $(\hat S_+^2)_{9,10}=\sqrt{3}$. The matrix $\hat S_-^2$ is $(\hat S_+^2)^T$ and $\hat S_{x,y}^2$ are obtained from the `hatted' version of (\ref{rrt}). Consider for example $\hat S_x^2$, which, if we consider the block partition in (\ref{formonthe}),  has the form 
\begin{equation}\label{formaSx}
\hat S_x^2:= \begin{pmatrix}
0 & F & 0 & 0 \cr 
F^\dagger & 0 & G & 0 \cr 0 & G^\dagger & 0 & H \cr 0 & 0 & H^\dagger & 0 
\end{pmatrix}, \qquad F=\begin{pmatrix} 0 & 0 & 0 \cr 1 & 0 & 0 \cr 0 & \sqrt{2} & 0 \cr 0 & 0 & \sqrt{3} \end{pmatrix}, \quad G=\begin{pmatrix} 0 & 0 \cr \sqrt{2} & 0 \cr 0 & 2 \end{pmatrix}, 
\quad H:=\begin{pmatrix} 0 \cr  \sqrt{3} \end{pmatrix}.
\end{equation}
Doing the Lie bracket of $i\hat S_x^2$ with the matrix in (\ref{formonthe}) with 
$B_4=\texttt{diag}(i,-i,0,0)$, we obtain, up to an unimportant (nonzero) 
proportionality factor,  a matrix which has all zero entries except an  $i$ in positions $(2,5)$ and $(5,2)$. Then we can use the following fact. 
\begin{lemma}\label{generators}
Assume $n_1\geq 1$ and $n_2 \geq 1$ with $n_1+n_2 \geq 3$. 
Consider the Lie algebra ${\cal L}$ generated  by block diagonal matrices 
$\begin{pmatrix} A & 0 \cr 0 & B \end{pmatrix}$ with $A\in su(n_1)$ and $B\in su(n_2)$ arbitrary,  and by a single off diagonal matrix with $i$ in position $(j,m)$ and $(m,j)$, with 
$1\leq j \leq n_1$ and $ n_1+1 \leq m \leq n_1+n_2$.  Then ${\cal L}=su(n_1+n_2)$. 
\end{lemma}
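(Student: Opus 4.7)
The plan is to use commutators of the single off-diagonal generator $X:=i(e_{j,m}+e_{m,j})$ with the block-diagonal generators $\texttt{diag}(A,0)$ and $\texttt{diag}(0,B)$ to spread $X$ into every off-block position of $su(n_1+n_2)$, then extract a bridging Cartan element by one commutator of two off-block elements, and finally observe that the resulting collection provides a full root-space decomposition of $su(n_1+n_2)$. For $v\in\mathbb{C}^{n_1}$ define the skew-Hermitian matrix
\[
Y(v):=i\sum_{p=1}^{n_1}v_p\,e_{p,m}+i\sum_{p=1}^{n_1}\bar v_p\,e_{m,p},
\]
so that $X=Y(e_j)$. A direct calculation gives $[\texttt{diag}(A,0),Y(v)]=Y(Av)$ for every $A\in su(n_1)$, and the analogous commutator with $\texttt{diag}(0,B)$ moves the column index $m$ under the action of $su(n_2)$.

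Assuming $n_1\geq 2$, I would first combine the original generator $X=Y(e_j)$ with the orbit $\{Y(Ae_j):A\in su(n_1)\}$ to conclude that $Y(v)\in\mathcal L$ for every $v\in\mathbb{C}^{n_1}$. In particular, $F_{p,m}:=i(e_{p,m}+e_{m,p})=Y(e_p)$ and $E_{p,m}:=e_{p,m}-e_{m,p}=-Y(ie_p)$ lie in $\mathcal L$ for all $p\leq n_1$. The analogous procedure with $\texttt{diag}(0,B)$ (valid once $n_2\geq 2$) then produces $F_{p,q},E_{p,q}\in\mathcal L$ for every $p\leq n_1<q\leq n_1+n_2$; the corner cases $n_1=1$ or $n_2=1$ are handled by starting on the non-trivial block, which the hypothesis $n_1+n_2\geq 3$ always allows. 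A short computation then gives $[F_{p,q},E_{p,q}]=2i(e_{q,q}-e_{p,p})$, supplying a Cartan element that bridges the two blocks. Combined with the Cartan subalgebras of $su(n_1)$ and $su(n_2)$ already sitting in $\mathcal L$, this yields a Cartan subalgebra of $su(n_1+n_2)$ of full rank $n_1+n_2-1$; together with the block-diagonal root spaces and the off-block generators $F_{p,q},E_{p,q}$, this accounts for a complete root-space decomposition of $su(n_1+n_2)$, forcing $\mathcal L=su(n_1+n_2)$.

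The only delicate point is in the first step: because $v\mapsto Y(v)$ is $\mathbb{R}$-linear but not $\mathbb{C}$-linear, one has to verify that the commutator orbit together with the seed $X$ really do exhaust $\{Y(v):v\in\mathbb{C}^{n_1}\}$. The commutators alone produce only the real hyperplane $\{Y(v):v_j\in i\mathbb{R}\}$ (since the $j$-th column of any $A\in su(n_1)$ has purely imaginary diagonal entry, yielding real codimension one), but the original generator $X=Y(e_j)$ supplies exactly the missing real direction $e_j$, so together they span all of $\mathbb{C}^{n_1}$. Apart from this small $\mathbb{R}$-versus-$\mathbb{C}$ bookkeeping, every other step is a mechanical matrix calculation.
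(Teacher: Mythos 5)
Your proposal is correct and follows essentially the same route as the paper's proof: propagate the single off-block generator to every off-block position by commutators with the block-diagonal subalgebras, obtain the one missing bridging Cartan direction from a commutator of two off-block elements (your $[F_{p,q},E_{p,q}]=2i(e_{q,q}-e_{p,p})$ is the paper's $[X_{k,l},Y_{k,l}]=Z_{k,l}$), and conclude by exhausting a basis of $su(n_1+n_2)$. Your packaging via the $\mathbb{R}$-linear map $Y(v)$, with the hyperplane-plus-seed spanning argument and the explicit treatment of the corner cases $n_1=1$ or $n_2=1$, is only a minor variation of the paper's explicit identities $[X_{a,b},Y_{a,c}]=X_{b,c}$ and $[Z_{j,k},X_{j,l}]=Y_{j,l}$.
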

\begin{proof}
(cf. Appendix \ref{Maxsubalg}) 
\end{proof}
\noindent This implies that all the block diagonal matrices in $su(7) \oplus su(2) \oplus su(1)$, belong to ${\cal L}$.  

We now proceed analogously by choosing $B_3=\texttt{diag}(i, -i,0)$ in (\ref{formonthe}) and taking the Lie bracket with $i\hat S_x^2$. Then we subtract a matrix which we already have proven to be in the Lie algebra above,  and we obtain a matrix with all zeros except in position $(6,7)$ and $(7,6)$, which is proportional to $i$. An application of Lemma \ref{generators} gives that ${\cal L}$ contains all block diagonal matrices in $su(9)\oplus su(1)$. Finally,  by choosing $B_4=0$, $B_3=0$, $B_2=\texttt{diag}(i,-i)$ and subtracting a matrix which was already obtained in the previous steps (that is a block diagonal matrix in $su(9) \oplus su(1)$), we have a matrix  with  zero everywhere except (up to  a proportionality factor) for the entry $(9,10)$ and $(10, 9)$ which are proportional to $i$. Applying for the last time Lemma \ref{generators}  we conclude with the following Proposition. 
\begin{proposition}\label{SSprop} (Subspace controllability on the symmetric sector)
The dynamical Lie algebra ${\cal L}$  contains matrices that are arbitrary matrices in $su(10)$ on the symmetric sector $(3,0,0)$. 
\end{proposition}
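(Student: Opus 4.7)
The plan is to combine the block-diagonal control of the form (\ref{formonthe}) already established on the four invariant subspaces $\mathcal{S}_0,\mathcal{S}_1,\mathcal{S}_2,\mathcal{S}_3$ (of dimensions $4,3,2,1$) with the coupling provided by $i\hat S_x^2$ in (\ref{formaSx}) to merge the four diagonal blocks into a single $su(10)$. Since the off-diagonal blocks $F,G,H$ of $i\hat S_x^2$ connect only consecutive $\mathcal{S}_j$'s, I would carry out the merging in three successive stages, each invoking Lemma \ref{generators}.

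For the first merge, I would pick $M\in\mathcal{L}$ of the form (\ref{formonthe}) with $B_4=\mathrm{diag}(i,-i,0,0)$ and all other blocks zero, and compute $[i\hat S_x^2, M]$. Because $B_4$ acts nontrivially only in the top block, the only surviving off-diagonal contribution will sit in the $F$-coupled positions between $\mathcal{S}_0$ and $\mathcal{S}_1$; a direct inspection of $F$ together with subtracting any block-diagonal residue already available in $su(4)\oplus su(3)$ should leave a single Hermitian matrix unit with nonzero entries only at $(2,5)$ and $(5,2)$. Lemma \ref{generators} with $n_1=4$, $n_2=3$ would then promote the existing $su(4)\oplus su(3)$ block-freedom to the full $su(7)$ on $\mathcal{S}_0\oplus\mathcal{S}_1$, so that $\mathcal{L}$ will contain all of $su(7)\oplus su(2)\oplus 0$ in the symmetric sector.

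I would then iterate exactly the same mechanism. Choosing an element of $\mathcal{L}$ whose new $7\times 7$ block carries the embedding of $\mathrm{diag}(i,-i,0)$ in the old $\mathcal{S}_1$ location, bracketing with $i\hat S_x^2$, and subtracting the part already in $su(7)$ would leave, via the $G$-coupling, a single Hermitian unit at $(6,7),(7,6)$, so that a second application of Lemma \ref{generators} (with $n_1=7$, $n_2=2$) promotes $\mathcal{L}$ to $su(9)\oplus 0$. A third and final round with $B_2=\mathrm{diag}(i,-i)$ and the $H$-coupling should produce a unit at $(9,10),(10,9)$; Lemma \ref{generators} with $n_1=9$, $n_2=1$ then delivers the $su(10)$ on the symmetric sector claimed in the proposition.

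The hard part will be the \emph{isolation} of a single off-diagonal matrix unit at each stage: the raw commutator $[i\hat S_x^2, M]$ will in general populate several entries across $F$, $G$, and $H$ simultaneously, so the argument has to exploit the fact that at stage $k$ the algebra already contains the full $su$ of the already-merged block, allowing one to subtract the unwanted diagonal-block contributions and, if needed, to refine by bracketing again with diagonal elements of the merged block so as to kill all but the one desired off-diagonal entry. Once this bookkeeping is carried out cleanly, the three successive merges $4+3\to 7$, $7+2\to 9$, $9+1\to 10$ follow as mechanical applications of Lemma \ref{generators}.
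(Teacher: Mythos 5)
Your proposal follows essentially the same route as the paper's own proof: starting from the block freedom in (\ref{formonthe}), bracketing $i\hat S_x^2$ with $\mathrm{diag}(i,-i,0,\dots)$-type blocks, subtracting elements already obtained, and isolating one off-diagonal coupling entry at each stage so that Lemma \ref{generators} performs the three merges $4+3\to 7$, $7+2\to 9$, $9+1\to 10$, exactly as in the text. The only small slip is the stated location of the second-stage leftover unit: the $G$-coupling places it at $(6,8)$ and $(8,6)$ (cf.\ the entry $(\hat S_+^2)_{6,8}=\sqrt{2}$), not at $(6,7)$, which would lie inside the already-merged block; since your argument explicitly relies on the coupling between the merged $7\times 7$ block and ${\cal S}_2$, this does not affect its validity.
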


\subsubsection{Lie algebra on the adjoint representation (representation  $(2, 1, 0)$)}\label{LAar}

A basis of the adjoint representation $(2,1,0)$ can be expressed in terms of the Gelfand-Tsetlin pattern or equivalently in terms of the semi-standard Young tableaux and it was given in (\ref{A1S}).   As in the case of the symmetric sector treated above we need to put such a basis in a  one to one correspondence with a basis of the corresponding subspace in $(\mathbb{C}^3)^{\otimes 3}$ written in the computational basis of $\{ |0 \rangle, |1 \rangle, |2\rangle \}$. We do that as follows: As in the case of the symmetric sector, we associate to a state with weight vector $w=(w_1,w_2,w_3)$ a linear combination of states that,  written in the computational basis, have $w_1$ $|0\rangle $'s, $w_2$ $|1\rangle$'s and $w_3$, $|2\rangle$'s.  Furthermore, the state associate with $(w_1,w_2,w_3)$ is chosen as an eigenstate of the Young symmetrizer associated with the the Young diagram for the representation $(2,1,0)$,\footnote{Notice the similarity with the treatment for the symmetric sector} which is in this case\footnote{We are using here the cycle notation for permutation. Since the representation $(2,1,0)$ has multiplicity $2$ there are {\it two} isomorphic representations and we are considering only one. In particular isomorphic irreducible representations are labeled by {\it standard} Young tableau (SYT)  which are SSYT where each integer appears only once. In the case of the representation $(2,1,0/$ we have the SYT's  $Y_1=\young(12,3)$ and  $Y_2=\young(13,2)$. The Young symmetrizer $\Pi$ in (\ref{Youngsymmetrizer}) refers to $Y_1$.} 
\begin{equation}\label{Youngsymmetrizer}
\Pi={\bf 1 }+\left( 1\, 2 \right)-\left (1 \, 3 \right )-\left(2 \,1 \, 3 \right). 
\end{equation}
We set (cf.  (\ref{A1S})) 
\begin{equation}\label{A1SCB}
A_1 \leftrightarrow \frac{1}{\sqrt{6}} \left( 2 |0 0 1 \rangle - |1 0 0 \rangle -|0 1 0 \rangle \right);
%checked that this is an eigenvector of the GYS 
\end{equation}
$$
A_2 \leftrightarrow \frac{1}{\sqrt{6}} \left( -2 |11 0 \rangle + |1 0 1 \rangle +|0 1 1 \rangle \right);
$$
$$
A_3 \leftrightarrow \frac{1}{\sqrt{6}} \left( 2 | 002 \rangle - |2 00 \rangle -|020 \rangle               \right)
$$
$$
A_4 \leftrightarrow \frac{1}{\sqrt{12}} \left( 2 |102 \rangle+ 2|012 \rangle -|120 \rangle - |210 \rangle - |201 \rangle - |021 \rangle  \right);
%checked that this is an eigenvector of the GYS 
$$
$$
A_5 \leftrightarrow  \frac{1}{\sqrt{6}} \left( 2 |112 \rangle -|121\rangle - |2 11 \rangle \right); 
$$
$$
A_6\leftrightarrow \frac{1}{2} \left( |021 \rangle - |120 \rangle + |201 \rangle - |210 \rangle \right);
%checked that this is an eigenvector of the GYS 
$$
$$
A_7\leftrightarrow \frac{1}{\sqrt{6}} \left( -2 |220 \rangle + |022\rangle + |202 \rangle  
\right)
$$
$$
A_8\leftrightarrow \frac{1}{\sqrt{6}} \left( |122 \rangle + |212 \rangle -2 |221 \rangle \right).
$$
The above rule determines the vectors in the computational basis associated with a 
given  weight vector without  ambiguity (except for a phase/normalization factor) in any case except for the weight vector $w=(1,1,1)$ for which the associate subspace (that is the space spanned by tensor products with one $|0\rangle$ one $|1\rangle$ and one $|2\rangle$) contains two linearly independent eigenvectors of the Young symmetrizer (\ref{Youngsymmetrizer}). In this case the choice is suggested by the action of the raising and lowering operators we shall consider next.\footnote{A different method to establish this correspondence is based on the use of Clebsch-Gordan coefficients \cite{Vilekin}.}

Similarly to the case of  the symmetric sector,  we shall  now 
consider the subspaces ${\cal S}_j$, $j=0,1,2$, where ${\cal S}_j$ is the span of the above vectors (\ref{A1SCB}) where there are $j$, $2$'s. Therefore 
$$
{\cal S}_{0}:=\texttt{span} \{A_1, A_2 \}, \quad {\cal S}_1:=\texttt{span}\{A_3, A_4, A_5, A_6 \},  \quad {\cal S}_2:=\texttt{span} \{ A_7, A_8 \}, 
$$
and we notice (in line with the {\bf Fact 1} of subsection \ref{duetre}) that each of these  subspaces is invariant under the action of 
$
\hat S_{\pm}^1:=S_{\pm}^1 \otimes {\bf 1} \otimes {\bf 1} + {\bf 1} \otimes S_{\pm}^1 \otimes {\bf 1} + {\bf 1} \otimes {\bf 1} \otimes S_{\pm}^1, 
$
and 
$
\hat S_{z}^1:=S_{z}^1 \otimes {\bf 1} \otimes {\bf 1} + {\bf 1} \otimes S_{z}^1 \otimes {\bf 1} + {\bf 1} \otimes {\bf 1} \otimes S_{z}^1, 
$ defined in  (\ref{Szpmstandard}), (\ref{totalOperators}). 
In particular, we have, for $l=1,...,8$, (cf. (\ref{eigeig}))  
$$
\hat S_z^1 A_l=\frac{w_1(A_l)-w_2(A_l)}{2} A_l. 
$$
On ${\cal S}_0$, 
$$
\hat S_-^1 A_1= A_2, \qquad \hat S_-^1 A_2= 0. 
$$
On ${\cal S}_1$
$$
\hat S_-^1 A_3=\sqrt{2} A_4, \qquad \hat S_-^1 A_4=\sqrt{2} A_5, \quad \hat S_-^1 A_5=0, \qquad \hat S_-^1 A_6=0.  
$$
On ${\cal S}_2$, 
$$
\hat S_-^1 A_7=A_8, \quad \hat S_-^1 A_8=0, 
$$
while $\hat S_+^1$ is the transpose. With these relations, $\texttt{span}\{A_1, A_2\}$ is the standard representation of $su(2)$, $\texttt{span}\{A_3, A_4,A_5\}$ is the $(2,0)$ (three dimensional) representation of $su(2)$, $\texttt{span}\{ A_6\}$ is the trivial representation of $su(2)$ and   $\texttt{span}\{ A_7, A_8\}$ is again the standard representation of $su(2)$, and $\hat S_z^1$ acts as $iS_z$ (for $su(2)$) on each of these representations. Applying Theorem \ref{theordiffformat} and Lemma \ref{lemmaneeded},  we find that any block matrix with the first block in $su(2)$, the second block in  $su(3)$ and the third block in $su(1)=0$ belongs to the dynamical Lie algebra. Furthermore the actions of $\hat S^1_{x,y,z}$ on $\{A_1, A_2\}$ coincide exactly with the ones on $\{A_7,A_8\}$, that is, the two standard representations of $su(2)$ coincide once we make the isomorphism $A_1 \leftrightarrow A_7$, $A_2 \leftrightarrow A_8$. Therefore, the dynamical Lie algebra contains  all  matrices of the form 
\begin{equation}
\begin{pmatrix}
B_1 & 0 & 0 & 0 \cr 
0 & B_2& 0 & 0 \cr 
0 & 0 & 0_{1 \times 1} & 0 \cr
0 & 0 & 0 & B_1
\end{pmatrix}, 
\end{equation}
with $B_1$ arbitrary in $su(2)$ and $B_2$ arbitrary in $su(3)$. 
  To these, we have to add $i\hat S_{x,y,z}^2$ and (possibly repeated) commutators.  
In the given basis, the matrix corresponding to $\hat S_x^2$ is (similarly to (\ref{formaSx})) 
\begin{equation}\label{formaSx2}
\hat S_x^2:=\begin{pmatrix}
0 & F & L & 0 \cr 
F^\dagger & 0 & 0 & G \cr L^\dagger  & 0 & 0 & H \cr 0 & G^\dagger & H^\dagger & 0 
\end{pmatrix}, \, F=\begin{pmatrix} 1& 0 & 0 \cr 0 & \frac{1}{\sqrt{2}}& 0 \end{pmatrix}, \, L:=\begin{pmatrix} 0 \cr \sqrt{\frac{3}{2}} \end{pmatrix}, \,  
G:=\begin{pmatrix} 0 & 0 \cr \frac{1}{\sqrt{2}} & 0 \cr 0 & 1 \end{pmatrix}, \, H:=\begin{pmatrix} \sqrt{\frac{3}{2}} & 0 \end{pmatrix}. 
\end{equation}
Now take the Lie bracket of $i \hat S_x^2$ with $\texttt{diag}(0,B,0,0)$ 
with $B=\begin{pmatrix} i  & 0 & 0 \cr 0 & 0 & 0 \cr 0 & 0 & -i \end{pmatrix}$ and then with 
$\texttt{diag}(0,C,0,0)$ with $C=\begin{pmatrix} i & 0 & 0 \cr 0 & -i  & 0 \cr 0 & 0 & 0 \end{pmatrix}$. We obtain,\footnote{Once again, we omit unimportant nonzero scalar factors.} a matrix which is zero everywhere except in position $(1,3)$ (and $(3,1)$)  which is proportional to  $i$. Using Lemma \ref{generators},  it follows that the dynamical Lie algebra contains all matrices of the form 
\begin{equation}\label{format56}
\begin{pmatrix} C & 0 & 0 \cr 0 & 0_{1 \times 1} & 0 \cr 0 & 0 & B \end{pmatrix}, 
\end{equation} 
with arbitrary $C \in su(5)$ and $B \in su(2)$ (cf. Lemma \ref{lemmaneeded}). For this reason,  matrices of the form  (\ref{formaSx2}) where $F$ is replaced by $0$, are also  belonging to the dynamical Lie algebra. Now consider the Lie bracket of such a matrix with a matrix in (\ref{format56}) with $C=\texttt{diag}(i,-i,0,0,0)$ and $B=0$. This gives a matrix which is zero everywhere except in position $(2,6)$ (and $(6,2)$) which is proportional 
to $i$. Again using Lemma \ref{generators} we have that the dynamical Lie algebra ${\cal{L}}$ contains all matrices of the form 
\begin{equation}\label{format57}
\begin{pmatrix}\
C & 0 \cr 0 & B
\end{pmatrix},  
\end{equation}
with $C \in su(6)$ and arbitrary $B \in su(2)$ arbitrary. Therefore, it also contains 
the matrix  of the form $i \hat S_x^2$, with  $ \hat S_x^2$ in (\ref{formaSx2}) where we replace $F$ and $L$ with the 
zero of the corresponding dimensions. Then the resulting matrix together with the matrices in (\ref{format57}) generate all of $su(8)$ from Lemma \ref{generators}. In conclusion, we have the  following proposition which is the corresponding of Proposition \ref{SSprop}, this time  for the representation $(2,1,0)$.  

\begin{proposition}\label{210prop} (Subspace controllability on the adjoint representation $(2,1,0)$)
The Lie algebra ${\cal L}$  contains matrices that are arbitrary matrices in $su(8)$ on the representation $(2,1,0)$. 
\end{proposition}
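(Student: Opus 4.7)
The plan is to mirror the strategy that worked on the symmetric sector in Proposition \ref{SSprop}: exploit \textbf{Fact 1} and \textbf{Fact 2} of subsection \ref{duetre} to slice the $8$-dimensional module $(2,1,0)$ into $\hat S_{z,\pm}^1$-invariant pieces indexed by the number of $|2\rangle$'s (i.e.\ by $w_3$), recognize each piece as an irreducible $su(2)$-module, then invoke Theorem \ref{theordiffformat} together with Lemma \ref{lemmaneeded} to extract block-diagonal matrices from ${\cal L}$, and finally use the ``off-diagonal'' generator $i\hat S_x^2$ (or $i\hat S_y^2$) and Lemma \ref{generators} to stitch the blocks together into a single $su(8)$.

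Concretely, I would first write down the three subspaces ${\cal S}_0=\mathrm{span}\{A_1,A_2\}$, ${\cal S}_1=\mathrm{span}\{A_3,A_4,A_5,A_6\}$, ${\cal S}_2=\mathrm{span}\{A_7,A_8\}$ and verify (from the formulas for $S_{\pm}^1$) that ${\cal S}_1$ splits further as the direct sum of a $3$-dimensional $su(2)$-irrep $\mathrm{span}\{A_3,A_4,A_5\}$ and a trivial one-dimensional $su(2)$-irrep $\mathrm{span}\{A_6\}$. Together with $i\hat S_{x,y,z}^1$ and $i(\hat S_z^1)^2$ (which, as shown just before the statement, is accessible from $iH$ and ${\cal L}oc$), Theorem \ref{theordiffformat} gives that ${\cal L}$ contains the $su$-part of every $\hat S^1$-invariant block of distinct dimensions. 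Since the irreducible $su(2)$-blocks have dimensions $2,3,1,2$, Lemma \ref{lemmaneeded} applies to the $2\oplus 3\oplus 1$ piece, while the two standard $su(2)$-blocks at positions ${\cal S}_0$ and ${\cal S}_2$ are linked by the same $su(2)$-action and thus must be treated as one ``repeated'' irreducible. I therefore obtain that ${\cal L}$ contains all matrices of the form $\mathrm{diag}(B_1,B_2,0_{1\times 1},B_1)$ with $B_1\in su(2)$ and $B_2\in su(3)$ arbitrary.

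Next I would compute $\hat S_x^2$ in the basis $\{A_1,\dots,A_8\}$, obtaining a block-off-diagonal matrix that connects ${\cal S}_0$ to ${\cal S}_1$ (through both the $3$-dim and the trivial piece) and ${\cal S}_1$ to ${\cal S}_2$. Following the symmetric-sector playbook, I would bracket $i\hat S_x^2$ against carefully chosen diagonal elements of the Cartan of $su(3)$ sitting inside the second block to isolate a single matrix unit $i(E_{jk}+E_{kj})$ connecting ${\cal S}_0$ to a one-dimensional direction of ${\cal S}_1$; Lemma \ref{generators} then promotes this to an arbitrary $su(5)$ on ${\cal S}_0\oplus{\cal S}_1^{(3)}\oplus{\cal S}_1^{(1)}$, decoupled from the $su(2)$ on ${\cal S}_2$. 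Iterating the same trick (isolate one off-diagonal, apply Lemma \ref{generators}) I would then couple in the $\{A_6\}$ block and finally the $\{A_7,A_8\}$ block, absorbing each new direction until $su(8)$ is reached.

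The main obstacle is the repeated irreducible: ${\cal S}_0$ and ${\cal S}_2$ are \emph{isomorphic} standard $su(2)$-modules under $\hat S^1_{x,y,z}$, so Lemma \ref{lemmaneeded} cannot by itself separate them. The whole weight of the argument therefore rests on showing that $\hat S_x^2$ genuinely mixes these two copies in a way that is not compatible with the ``diagonal embedding'' $A_j\leftrightarrow A_{j+6}$; equivalently, that after one commutator with a Cartan element of the middle $su(3)$ block, the residual off-diagonal piece has a nonzero entry in a ``non-symmetric'' position, so that Lemma \ref{generators} can be iterated. Once this is verified, the rest is a routine accounting of which matrix units are produced at each stage, paralleling exactly the argument for the symmetric sector.
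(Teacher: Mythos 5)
Your proposal follows essentially the same route as the paper's proof: the same splitting into ${\cal S}_0,{\cal S}_1,{\cal S}_2$, the same identification of the $su(2)$-irreps of dimensions $2,3,1,2$ leading to the tied blocks $\mathrm{diag}(B_1,B_2,0_{1\times 1},B_1)$ via Theorem \ref{theordiffformat} and Lemma \ref{lemmaneeded}, and the same stitching by bracketing $i\hat S_x^2$ with Cartan elements of the middle block and invoking Lemma \ref{generators}; the paper carries out explicitly the computation you flag as the crucial verification, producing a single off-diagonal entry at position $(1,3)$, then $(2,6)$, then the coupling to $\{A_7,A_8\}$, yielding $su(5)$, $su(6)$ and finally $su(8)$. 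Only a minor slip in your sketch: the intermediate $su(5)$ acts on ${\cal S}_0\oplus\mathrm{span}\{A_3,A_4,A_5\}$ (five dimensions), not on all of ${\cal S}_0\oplus{\cal S}_1$, the trivial direction $A_6$ being absorbed only at the following step.
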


\subsubsection{Conclusion}

Combining Propositions \ref{SSprop} and \ref{210prop},  we observe that both 
$\{ (S_z^1)^2, {\cal L}oc \}$ and $\{ i H, {\cal L}oc \}$ (cf. (\ref{twbHam}))  satisfy 
the conditions of Theorems \ref{genfacLem} and \ref{another}. Using again Lemma \ref{lemmaneeded}, 
we can conclude. 

\begin{theorem}\label{teoremafinale}
The system of three qutrits with two body interaction $H$ in (\ref{twbHam})  and all local operations available has the subspace controllability property. The dynamical Lie algebra generated by all local operation is the direct sum of $su(10) \oplus su(8)$ and the span of the component of the interaction onto the Casimir Lie algebra.  

\end{theorem}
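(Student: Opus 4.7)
The plan is to assemble the theorem from Propositions \ref{SSprop} and \ref{210prop} together with the general machinery of Theorems \ref{genfacLem}, \ref{another}, and Lemma \ref{lemmaneeded}. From Proposition \ref{clarif} specialized to $n=d=3$, the Clebsch-Gordan decomposition of $(\mathbb{C}^3)^{\otimes 3}$ has exactly three non-isomorphic irreducible summands, of dimensions $10$, $8$, and $1$, corresponding to the irreps $(3,0,0)$, $(2,1,0)$, and $(1,1,1)\equiv (0,0,0)$, so that $su_{cless}^{S_3}(27) = su(10) \oplus su(8)$.

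First I would argue that it suffices to work with $\{i\hat F_3^2, {\cal L}oc\}_{Lie}$ in place of $\{iH, {\cal L}oc\}_{Lie}$. The identity
\[
\hat F_3^2 = 2\mathbf{1} + \tfrac{1}{\sqrt{3}}\hat F_8 + 2 F_{(1,0,0,2,0,0,0,0,0)}
\]
established immediately before Proposition \ref{SSprop} shows that, modulo elements of ${\cal L}oc$ and multiples of the identity, $i\hat F_3^2$ and $iH$ differ only by a shift supported in the center. Decomposing the generators as $H_k = C_k + S_k$ with $C_k \in i{\cal C}$ and $S_k \in su_{cless}^{S_3}(27)$ as in (\ref{orthodec}), the two generator sets yield the same $S_k$-components. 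Hence by condition $1 \Leftrightarrow 2$ of Theorem \ref{another}, one system achieves subspace controllability (i.e.\ contains $su_{cless}^{S_3}(27)$) iff the other does.

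Next, I would assemble the blockwise results. Propositions \ref{SSprop} and \ref{210prop} say that the dynamical Lie algebra generated by $\{i\hat F_3^2, {\cal L}oc\}$ contains arbitrary $su(10)$ on the $(3,0,0)$-block and arbitrary $su(8)$ on the $(2,1,0)$-block, while the $(1,1,1)$-block is trivially zero in $su(1)$. In other words, \emph{weak} subspace controllability in the sense of Remark \ref{practical5} holds on each of the three invariant subspaces. Because the three block sizes $10$, $8$, $1$ are pairwise distinct, Lemma \ref{lemmaneeded} upgrades this to the direct sum: the dynamical Lie algebra contains $su(10) \oplus su(8) \oplus \{0\} = su_{cless}^{S_3}(27)$. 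Combined with the preceding paragraph, this gives subspace controllability for the original system $\{iH, {\cal L}oc\}$.

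Finally, to identify the full dynamical Lie algebra ${\cal L}$, I would invoke Theorem \ref{genfacLem} with ${\cal G} = u^{S_3}(27)$, $\hat{\cal S} = su_{cless}^{S_3}(27)$, and ${\cal C}$ the Casimir subalgebra from Proposition \ref{center}. The local generators ${\cal L}oc$ have zero component on the center (they are traceless and act nontrivially on non-scalar blocks), so the only generator contributing to ${\cal C}$ is $iH$. By Theorem \ref{genfacLem}, since $\{S_1,\ldots,S_r\}_{Lie} = su_{cless}^{S_3}(27)$ is semisimple, we obtain
\[
{\cal L} = \bigl(su(10) \oplus su(8)\bigr) \oplus \operatorname{span}\{C_H\},
\]
where $C_H$ is the projection of $iH$ onto the Casimir algebra, completing the proof. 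The only step requiring genuine work is the verification of subspace controllability on each irreducible block (Propositions \ref{SSprop} and \ref{210prop}); the present theorem is then a bookkeeping synthesis of those results with the general Lie-algebraic framework of Section \ref{dynLA}.
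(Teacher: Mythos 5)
Your proposal is correct and follows essentially the same route as the paper: the theorem is assembled by combining Propositions \ref{SSprop} and \ref{210prop} through Lemma \ref{lemmaneeded} (the block dimensions $10$, $8$, $1$ being distinct), with the reduction from $iH$ to $i\hat F_3^2$ and the final structure of $\mathcal{L}$ handled exactly as in the paper via Theorems \ref{genfacLem} and \ref{another}. One small sharpening: the local generators have zero component on the center because they are traceless on each irreducible block (being images of elements of the semisimple $su(3)$), not merely globally traceless, but this does not affect the argument.
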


\section{Discussion and conclusions}\label{CandD}
There are several ways to decompose the state space of a network of $n$ quantum systems whose dynamics admits a group of symmetries. The method based 
on the Schur-Weyl duality advocated in \cite{conJonas} is completely general but it has the problem of finding ways to construct the so called Young symmetrizers, which are the projections onto the various invariant subspaces. These symmetrizers are  known in the case of the symmetric group (the technique based on the use of Young tableau). The problem of the computation of Young symmetrizers can be overcome in general using techniques of matrix Lie  algebras (cf. Chapter 4 in \cite{mikobook}) at the  price of manipulations  with large matrices. In our framework, for the case of the symmetric group $S_n$, we have used the fact that the decomposition into invariant subspaces for $u^{S_n}(d^n)$ coincides with the Clebsch-Gordan decomposition of $(\mathbb{C}^d)^{\otimes n}$ into irreducible representations of $su(d)$ (Theorem \ref{alwaysassumed}). This opens up the possibility of using the rich  machinery of the representation theory of $su(d)$. In particular, i-weights for irreducible representations label the various invariant subspaces,  The algebra of Casimir operators span the center of $u^{S_n}(d^n)$ (Proposition \ref{center}) and a canonical basis can be identified.  In this basis, the action of the various operators (raising and lowering operators $S_{\pm}$ and $S_z$ operator) can be calculated and take a prescribed matrix form. In these  coordinates, we can calculate the dynamical Lie algebra for a given set of quantum Hamiltonians which determines the subspace controllability property (Theorem \ref{genfacLem} and \ref{another}). In this framework, we have recast the results of \cite{confraJMP}  \cite{Marco} and proved subspace controllability (and characterized the dynamical Lie algebra) for a new case, the case of three qutrits (Theorem \ref{teoremafinale}). 

It is expected  that the techniques and facts used in the proof for the case of three qutrits can be extended and used in other cases as well as generalized up to the full case of a symmetric network of $n$ qudits for general, $n$ and $d$. In particular, in all the invariant representations,   we identified subspaces on which we could  prove subspace controllability  using the result for qubits and then extend to the full invariant subspace using different generators. This naturally suggests that the procedure might recursively  extend to higher dimensions ($d > 3$) and that a similar decomposition will be found in the more complicate  case of $n>3$ where a richer set of irreducible representations appear in the decomposition according to (\ref{clarif}). Furthermore as a consequence of Lemma \ref{lemmaneeded}, we can always  analyze one invariant subspace at a  time and prove subspace controllability on each subspace to conclude that the full dynamical Lie algebra is $u^{S_n}(d^n)$ up to its center. 

The above discussion suggests  that within the framework of this paper, one can analyze and prove subspace controllability for symmetric networks of arbitrary dimensions. A more general research plan would be to analyze dynamics and controllability for networks of quantum systems  displaying symmetries different from the one of the full symmetric group $S_n$ for which however the tools of representation theory appear to be the key to a full and satisfactory treatment. This is motivated by the current flourishing interest in the analysis of {\it structured } quantum data and their processing. This is both at the theoretical and experimental level and it is the subject of the emerging discipline of Geometric Quantum Machine Learning \cite{Marco1}, \cite{Marco2}.

\section*{Acknowledgements} The author would like to thank Marco Cerezo and Jonas Hartwig for useful and stimulating discussions on the topic of this paper.  This research was supported by the ARO MURI grant W911NF-22-S-0007.

\begin{appendix}

\section{Structure constants for $su(3)$ and some of their properties}\label{GM} 

In the following tables, the relevant values for $f_{j,k}^l$ and $d_{j,k}^l$ in (\ref{commuanti}) are reported. All the other nonzero  values can be obtained using antisymmetry (for $f_{j,k}^l$) and symmetry (for $d_{j,k}^l$).  In particular, for the non vanishing $f_{j,k}^l$, we have 
\vspace{0.25cm}
\begin{center}
\begin{tabular}{ c | c||  c | c  }
            $jkl$ & $f^{l}_{jk}$  & $jkl$ & $f^{l}_{jk}$ \\
       \hline  
            123 & 2 & 257  & 1\\
            147 & 1& 345 & 1\\ 
                156 & -1 & 367 & -1\\       
            246 & 1 & 458 & $\sqrt{3}$\\
            678 & $\sqrt{3}$ &  &  \\
\end{tabular}
\end{center}
\vspace{0.25cm}
and  the nonvanishing  $d_{j,k}^l$ are given by 
   \vspace{0.25cm}
   \begin{center}
        
\begin{tabular}{ c | c||  c | c  }
            $jkl$ & $d^{l}_{jk}$  & $jkl$ & $d^{l}_{jk}$ \\
    \hline  
            118 & $\frac{2}{\sqrt{3}}$ & 247  & -1\\
            146 & 1& 256 & 1\\ 
                157 & 1 & 338 & $\frac{2}{\sqrt{3}}$\\       
            228 & $\frac{2}{\sqrt{3}}$ & 344 & 1\\
            355 & 1 & 558 & $-\frac{1}{\sqrt{3}}$ \\
            366 & -1 & 668 & $-\frac{1}{\sqrt{3}}$ \\
            377 & -1 & 778 & $-\frac{1}{\sqrt{3}}$ \\
            448 & $-\frac{1}{\sqrt{3}}$  & 888 & $-\frac{2}{\sqrt{3}}$ \\
        \end{tabular}        
\end{center}
 \vspace{0.25cm}
 From these values one can obtain the following Lemma. 
 \begin{lemma}\label{help}
 For every $l$
 \begin{equation}\label{zorro}
 \sum_{k=1}^d d_{k,k}^l=0
 \end{equation}
 \end{lemma}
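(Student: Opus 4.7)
The plan is to exploit the fact that the Casimir-like sum $\sum_{k=1}^{8} E_k^2$ is, in the defining representation, a scalar multiple of the identity, and then extract the desired vanishing relation from the linear independence of the Gell-Mann matrices. This avoids a case-by-case check against the tables (though such a check is always available as an independent sanity verification).

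First I would specialize the anti-commutation relation in (\ref{commuanti}) to $j=k$, which gives
\begin{equation*}
E_k^2 \;=\; \tfrac{2}{3}\mathbf{1} \;+\; \tfrac{1}{2}\sum_{l=1}^{8} d_{k,k}^{l}\, E_l,
\end{equation*}
and then sum over $k=1,\dots,8$ to obtain
\begin{equation*}
\sum_{k=1}^{8} E_k^2 \;=\; \tfrac{16}{3}\mathbf{1} \;+\; \tfrac{1}{2}\sum_{l=1}^{8} \Bigl(\sum_{k=1}^{8} d_{k,k}^{l}\Bigr) E_l.
\end{equation*}
Next I would observe, by direct computation in the $3\times 3$ defining representation (the diagonal entries of $\sum_k E_k^2$ are all $\tfrac{16}{3}$, and all off-diagonal entries vanish by the symmetries of the Gell-Mann matrices), that $\sum_{k=1}^{8} E_k^2 = \tfrac{16}{3}\mathbf{1}$. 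This is exactly the statement that $\sum_k E_k^2$ is the quadratic Casimir operator and, by Schur's lemma applied to the irreducible defining representation, must act as a scalar; the value $\tfrac{16}{3}$ is pinned down by a single diagonal entry.

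Comparing the two displays and using the linear independence of $\{E_1,\ldots,E_8\}$ as elements of $isu(3)$, the coefficient of each $E_l$ on the right-hand side must vanish, which is precisely $\sum_{k=1}^{8} d_{k,k}^{l}=0$ for every $l$. The only step that is not purely formal is the scalar computation of $\sum_k E_k^2$ in the defining representation, which is routine; I would expect no real obstacle. An alternative, entirely combinatorial, proof simply inspects the table of non-zero $d_{k,k}^{l}$ (which only populates the column $l=8$, whose entries are $\tfrac{2}{\sqrt 3}$ three times, $-\tfrac{1}{\sqrt 3}$ four times, and $-\tfrac{2}{\sqrt 3}$ once, summing to zero), but the Casimir argument above makes the structural reason transparent and generalizes verbatim to $su(d)$ for any $d$, where $\sum_k F_k^2$ is again a scalar on the defining representation.
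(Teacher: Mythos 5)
Your Casimir argument is correct and is a genuinely different route from the paper's proof. The paper verifies the identity directly from the table: using the full symmetry of $d_{j,k}^l$, the only nonzero diagonal coefficients $d_{k,k}^l$ occur for $l=3$ (namely $d_{4,4}^3=d_{5,5}^3=-d_{6,6}^3=-d_{7,7}^3$, obtained from the listed triples $344,355,366,377$ by symmetry) and for $l=8$, and each of the two sums vanishes by inspection. You instead set $j=k$ in the anti-commutation relation, sum over $k$, note that $\sum_k E_k^2=\tfrac{16}{3}\mathbf{1}$ in the defining representation (either by the routine diagonal computation or by Schur's lemma applied to the quadratic Casimir), and conclude from linear independence of the Gell-Mann matrices that every coefficient $\sum_k d_{k,k}^l$ must vanish. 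That argument is sound, more structural, and — as you note — extends verbatim to $su(d)$, which is a real advantage given that the paper uses the analogous cancellation in the general computation of $C_2$; the paper's table check is quicker but tied to $d=3$. One small inaccuracy in your closing side remark: the nonzero $d_{k,k}^l$ do \emph{not} populate only the column $l=8$; the symmetric completion of the table (e.g.\ $d_{3,4}^4=1$ gives $d_{4,4}^3=1$) produces nonzero entries with $l=3$ as well, and it is precisely the separate vanishing of the $l=3$ sum that the paper's proof also checks. This does not affect your main argument, which nowhere relies on that claim.
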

\begin{proof}
According to the above table, the only coefficients $d_{k,k}^l$ possibly different from zero are when $l=3$ or $l=8.$  Furthermore we have 
 $$
 d_{4,4}^3=d_{5,5}^3=-d_{6,6}^3=-d_{7,7}^3, 
 $$
 with all the other $d_{j,j}^3$'s equal to zero, and 
 $$
 d_{1,1}^8=d_{2,2}^8=d_{3,3}^8=-2d_{4,4}^8= -2 d_{5,5}^8= - 2 d_{6,6}^8=-2d_{7,7}^8 =-d_{8,8}^8. 
 $$
 From these, we directly verify 
$$
 \sum_{j=1}^8 d_{j,j}^3=\sum_{j=4}^7 d_{j,j}^3=0, \qquad  \sum_{j=1}^8 d_{j,j}^8=0. 
$$
 \end{proof}
 This Lemma can be generalized by considering sums 
$$
SUM:=\sum_{j=1}^8 d_{j,j}^{l_1} d_{j,j}^{l_2} \cdots d_{j,j}^{l_m}
$$
 for  $m \geq 2$  a positive integer and with  $(l_1,...,l_m)$  an $m$-ple of indexes.  The $SUM$ is always zero unless  $(l_1,...,l_m)$ contains only $3$'s and-or $8$'s. In these cases, it is still zero if it contains an odd number of $3$'s.

\section{Analysis of the quadratic Casimir eigenvalue for $su(3)$}\label{quadraticCasi}

We identify an irreducible representation of $su(3)$ by the two quantum numbers $(p,q)$ and study the quadratic Casimir eigenvalue $c_2:=c_2(p,q)$ as a function of $p$ and $q$. This is given in terms of quantum numbers $(p,q)$ (cf, e.g., formula (3.65) in \cite{Pais}\footnote{The paper \cite{Pais} uses for a representation $(m_1,m_2,m_3)$ the quantum numbers $p:=m_1-m_3$, $q:=m_2-m_3$, while we use the quantum numbers $p:=m_1-m_2$, $q:=m_2-m_3$, which explains the discrepancy between our formula (\ref{c2su3}) and formula (3.65) in \cite{Pais}.})
\begin{equation}\label{c2su3}
c_2(p,q)=p^2 +q^2 +3(p+q)+pq. 
\end{equation}
The integers $(p,q)$ are both nonnegative. We first observe from (\ref{c2su3}) that $c_2=c_2(p,q)$ is symmetric with respect to the  swapping  of $p$ and $q$. This implies that the quadratic Casimir eigenvalue is not sufficient, in general,  to distinguish between different representations, even if we restrict  ourselves to the representations appearing in the Clebsch-Gordan decomposition of $(\mathbb{C}^d)^{\otimes n}$ as described in Proposition \ref{clarif}. The following example shows this. 

\begin{example}\label{controesempio1}
Consider $n=12$. In the case  $m=9$ and $j=2$ (which is in the specified range)  in (\ref{admissrep}), we have, 
$$
(p,q)=(m-2j,j)=(5,2). 
$$
In the case $m=n=12$, $j=5$ (which is again in the specified range) we have 
$$
(p,q)=(m-2j,j)=(2,5). 
$$
Since the two pairs are connected through the exchange of $p$ and $q$, they give the same value of the quadratic Casimir eigenvalue although they correspond to different, nonisomorphic, irreducible representations.   
\end{example}

%%%%%%%%%%%%%%%%%%%% 
We  now outline an algorithm that, for general irreps (that is, not necessarily the ones described in Proposition \ref{clarif}),  allows one to find, for a given representation described by the quantum numbers $(p_0,q_0)$,  all the pairs $(p,q)$ which have the same value for $c_2$. 
 Because of the symmetry of the function,  it is enough to consider the area of the positive quadrant below (and including) the line $p=q$ since every point we find giving the same value for $c_2$ has a mirror image point giving the same value with respect to this line. Therefore, we shall consider only the area of the plane described by 
$$
A:=\{(p,q) \in \mathbb{N} \times \mathbb{N} \, | \, 0 \leq p, \, \, 0 \leq q \leq p \}. 
$$
Given a pair $(p_0, q_0)$,  because of the form of the function $c_2$ in (\ref{c2su3}), there is a finite number of  points $(p,q)$ such that $c_2(p,q)=c_2(p_0,q_0)$.\footnote{This will follow from the discussion below. However, a quick way to see this is to  consider the disc with radius $c_2(p_0,q_0)+\frac{9}{2}$ and centered at $\left( \frac{3}{2}, \frac{3}{2}\right)$. If $(p,q)$ is outside this disc we have 
$$c_2(p,q)=p^2+q^2+pq+3(p+q) \geq p^2 +q^2 +3\left( p+q \right)= 
\left( p+\frac{3}{2} \right)^2+ \left( q+\frac{3}{2} \right)^2-\frac{9}{2}> c_2(p_0,q_0). 
$$
Therefore only points $(p,q)$ with integer values of $p$ and $q$ inside this disc need to be considered.
} 
Calculating the partial derivative $\frac{\partial c_2}{\partial p}=2p+q+3 > 0$ it follows that $c_2$ increases along horizontal lines from left to right. Analogously, from the partial derivative with respect to $q$, it follows that $c_2$ increases along vertical lines upward. More generally, along a vector $\vec v=(1,-\alpha)$, $\alpha>0$, the directional derivative is 
$$
D_{(1,-\alpha)} c_2=(2p+3+q)-\alpha(2q+p+3). 
$$
In particular along the line with $\alpha=1$ $p>q$ implies $D_{(1,-\alpha)} c_2>0$ and the function is increasing (it is actually increasing for any $\alpha \leq 1$). Along the line with $\alpha=2$ we have $D_{(1,-\alpha)} c_2<0$.\footnote{Since $3q+3>0$, we have 
$4q+6+2p > q+3+2p$, that is $D_{(1,-2)} c_2= (q+3+2p)-2(2q+p+3)<0$.} Therefore the situation is the one  depicted in Figure \ref{figureF1}. Only points inside the two triangles have to be considered. Points below the line $q\equiv q_0$ and to the left  of the triangle $T_1$ have  to be excluded because from these points it is possible to reach the line corresponding to $\alpha=2$ and then move towards $(p_0,q_0)$ and the function will be always increasing. Therefore the starting point has a value of the function {\it strictly less} than $c_2(p_0,q_0)$. Analogously, points below the line $q \equiv q_0$ and to the right of the line corresponding to $\alpha=1$ will have a value of the function strictly higher than the value in $(p_0,q_0)$, since from $(p_0,q_0)$ it is possible to reach these points by following the line and then moving rightwards horizontally. A similar reasoning for points {\it above} the line $q \equiv q_0$ excludes all the points except the ones in the triangle $T_2$ in Figure \ref{figureF1}.

\begin{figure}[htb]
\centering
\includegraphics[width=1.3 \textwidth, height=0.55\textheight]{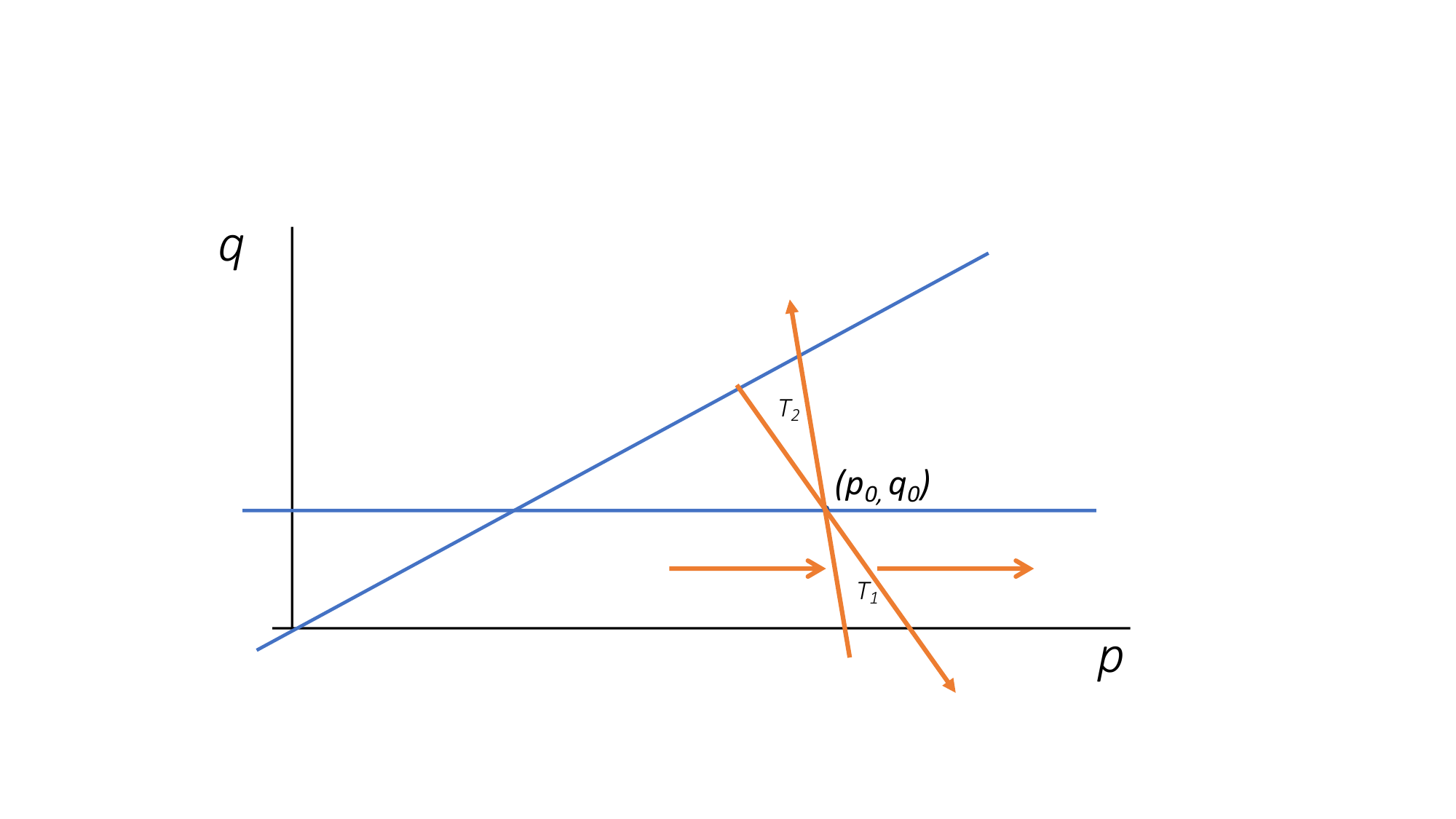}
\caption{Reducing the number of candidate 
points $(p,q)$ such that $c_2(p,q)=c_2(p_0,q_0)$.}
\label{figureF1}
\end{figure}

Having restricted to a (small) finite set the number of candidate points,   an algorithm to find all points with the same value of $c_2$ could be as follows. One checks the points in these triangles and if one  finds one point $(p_1,q_1)$ with $c_2(p_1,q_1)=c_2(p_0,q_0)$, the point $(p_1,q_1)$ also identifies two triangles. Any other other 
point has to belong to the {\it intersection}  of the two double triangles, the one for $(p_0,q_0)$ and the one for $(p_1, q_1)$. This gives two (smaller) disjoint triangles.We continue this way to eventually  reach  triangle pairs that do not have any point with the same value of $c_2$,  and the algorithm stops.

\section{Proof of Lemma \ref{generators}}\label{Maxsubalg}

\begin{proof}

Let us denote by $X_{a,b}$ ($Y_{a,b}$) a matrix which is $0$ everywhere and occupied by the matrix $i\sigma_x$ ($i\sigma_y$) at the intersection of the rows and 
columns $a$ and $b$. Then  one readily verifies\footnote{In the whole discussion 
that follows we neglect unimportant proportionality factors.} that $[X_{a,b}, Y_{a,c}]=X_{b,c}$. Applying this under the assumptions of the Lemma,  we have,  for any $k$,  $1 \leq k \leq n_1$, and $l$, $n_1+1 \leq l \leq n_1+n_2$,\footnote{Notice that  $Y_{j,k}$ and $Y_{m,l}$ are available because of the assumption that $su(n_1) \oplus su(n_2) \subseteq {\cal L}$.}
$$
[X_{j,m}, Y_{j,k}]=X_{k,m}, \qquad [X_{k,m}, Y_{m,l}]=X_{k,l}. 
$$
Since $(k,l)$ is arbitrary, we can `move' the $i$ 
everywhere in the off-diagonal blocks. Furthermore, for $1 \leq j < k \leq n_1$, denote by $Z_{j,k}$ the 
matrix which is zero everywhere except for the $j$-th element on the diagonal and the $k$-th element 
on the diagonal which are occupied by $i$ and $-i$ respectively. For, $l$,  $n_1+1 \leq l \leq n_1+n_2$, it is easily verified that $[Z_{j,k}, X_{j,l}]=Y_{j,l}$. Therefore we can change every $X_{j,l}$ 
into a $Y_{j,l}$. Thus,  every off diagonal matrix in $su(n_1+n_2)$ belongs to the Lie algebra ${\cal L}$, which also contains the block diagonal matrices with blocks in $su(n_1)$ and $su(n_2)$. This accounts for a Lie subalgebra of $su(n_1+n_2)$ of dimension 
$(n_1^2-1)+(n_2^2-1)+2n_1n_2$. 
However, for $ 1 \leq k \leq n_1$ and $n_1+1 \leq l \leq n_1+ n_2$,  we also have the linearly independent 
matrix $[X_{k,l}, Y_{k,l}]=Z_{k,l}$ which gives an extra dimension. So ${\cal L}$ is an $(n_1+n_2)^2-1$ dimensional subalgebra of $su(n_1+n_2)$ and therefore it coincides with $su(n_1+n_2)$. 
 
\end{proof}
 
 \end{appendix}


\begin{thebibliography}{99}



\bibitem{confraJMP} F. Albertini and D. D'Alessandro, Controllability of symmetric spin networks, {\it Journal of Mathematical Physics}, {\bf 59}, 052102 (2018)


\bibitem{confraSCL}  F. Albertini and D. D'Alessandro, Subspace controllability of multi-partite  spin networks, {\it Systems and Control Letters},   151 (2021), 104913.

\bibitem{confraQIC} F. Albertini and D. D'Alessandro, Symmetric states and dynamics of three quantum bits, {\it Quantum Information and Computation}  Vol. 22, No. 7 \& 8 (2022) 0541-0568


\bibitem{Alex} A. Alex, M. Kalus, A. Hackleberry, and J. von Delft, 
A numerical algorithm for the explicit
calculation of SU(N) and Clebsch–Gordan coefficients, {\it J. Math. Phys.} 52, 023507 (2011);


\bibitem{Aloy}   A Aloy, M Fadel,
J Tura, The quantum marginal problem for symmetric states: applications to
variational optimization, nonlocality and self-testing,  {\it New Journal of Physics} 23 (3), 033026 (2011);


\bibitem{CasConstr} G. E. Baird, and L. C. Biedenharn, On the representations of the semisimple Lie groups {\it J. Math. Phys.}
4, 1449 (1963).

\bibitem{Bieden} L. C. Biedenharn, On the Representations of the
Semisimple Lie Groups. I. The Explicit
Construction of Invariants for the
Unimodular Unitary Group in N
Dimensions, {\it J. Math. Phys.} 4, 436 (1963)

\bibitem{Marco1} M. Cerezo, G.  Verdon, Hsin-Yuan Huang, Lukasz Cincio, and P. J. Coles, 
Challenges and Opportunities in Quantum Machine Learning, {\it Nature Computational Science}  10.1038/s43588-
022-00311-3 (2022).




\bibitem{mikobook} D.D'Alessandro, {\it Introduction to Quantum Control and Dynamics}, 2-nd Edition, CRC Press, Taylor and Francis, Boca Raton, FL, 2021. 



\bibitem{conJonas} D. D'Alessandro and J. Hartwig, Dynamical Decomposition of Bilinear Control Systems subject to Symmetries,  {\it Journal of Dynamical and Control Systems} 27 No. 1, 1-30,  (2021).  



\bibitem{PIPPO1} K. Eckert, J. Schliemann, D. Bruss, and M. Lewenstein, Quantum correlations in systems of indistinguishable particles, {\it Annals of Physics} 299 (1), 88-127 (2001);

\bibitem{FH} W. Fulton and J. Harris, {\it Representation Theory; 
A First Course}, Graduate Texts in Mathematics, No. 129, Springer, New York 2004. 



\bibitem{QMS} W. Greiner  and B. M\"uller,  {\it Quantum Mechanics} Symmetries 2nd edn, (Heidelberg: Springer), 1994. 



\bibitem{Harrow} A.W. Harrow, The church of the symmetric subspace, xxx.arXiv: 1308.6595v1. 

\bibitem{Tarn} G. M. Huang, T. J. Tarn and J. W. Clark, On the
controllability of quantum mechanical systems, {\it Journal of
Mathematical Physics}, {24} No. 11, 2608-2618, (1983).



\bibitem{Jurde} V. Jurdjevi\'c and H. Sussmann, Control systems on Lie
groups, {\it Journal of Differential Equations}, 12,  313-329,
(1972).


\bibitem{Marco} S. Kazi, M. Larocca, and M. Cerezo, Subspace controllability and failure of universality for $S_n$-equivariant $k$-body gates, technical report, 2023, submitted. 


\bibitem{Knapp} A. Knapp, {\it Lie Groups; Beyond an Introduction}, (2nd edition), Progress in Mathematics, Vol. 140, Birk\"auser, Boston (2002). 


\bibitem{Larocca} M. Larocca, F.  Sauvage, F. M. Sbahi, G. Verdon, P. J. Coles, and M. Cerezo
PRX Quantum 3, 030341, Group-Invariant Quantum Machine Learning,  – Published 19 September 2022

\bibitem{Marconi}  C Marconi, A
Aloy, J Tura, A Sanpera,  Entangled symmetric states and copositive matrices, {\it Quantum} 5, 561 (2021). 

\bibitem{Meyer} J. J. Meyer, M. Mularski, E. Gil-Fuster, A. A. Mele,
F. Arzani, A. Wilms, and J. Eisert, Exploiting symmetry in variational quantum machine learning, arXiv preprint
arXiv:2205.06217 (2022).

\bibitem{Migdal} P. Migdal,  J.  Rodriguez-Laguna, and M.  Lewenstein, Entanglement classes of permutation-symmetric qudit states: symmetric operations
suffice, {\it Phys. Rev. A},  88, 012335 (2013); 
  
\bibitem{Marco2} Q. T. Nguyen,  L.  Schatzki, P. Braccia, M.  Ragone, 
P. J. Coles, F. Sauvage, M. Larocca, and M. Cerezo, Theory for 
Equivariant Quantum Neural Networks, 
arXiv preprint arXiv:2210.08566 (2022).



\bibitem{Pais} A. Pais, Dynamical symmetry in particle physics, {\it Reviews of Modern Physics}, Volume 38, No. 2, April 1966, pp. 215-255. 


\bibitem{Murti} V. Ramakrishna, M. V. Salapaka, M. Dahleh, H. Rabitz
and  A. Peirce, Controllability of molecular systems, {\it Physical
Review A}, { 51}, No. 2, 960-966, (1995).


\bibitem{Sakurai} J. J. Sakurai, {\it Modern Quantum Mechanics},
Addison-Wesley Pub. Co., Reading MA, c1994.


\bibitem{Skolik} A. Skolik, M. Cattelan, S. Yarkoni, T. Bäck, and V. Dunjko, Equivariant quantum circuits for learning on weighted graphs, 
arXiv preprint arXiv:2205.06109 (2022).

\bibitem{Tung} W. Tung, {\it Group Theory in Physics}, 
World
Scientific, Singapore, 1985.


\bibitem{Vilekin}  N.J. Vilenkin and A.U.Klimyk, {\it Representations of Lie groups and Special Functions} (Kluwer  Academic Publishers, 1992). Vol.1-3. 

\bibitem{Wang1} X. Wang, D. Burgarth and S. G. Schirmer, Subspace controllability of spin $\frac{1}{2}$ chains with symmetries, {\it Physical Review A} { 94} 052319 (2016). 

\bibitem{Wang2} X. Wang, P. Pemberton-Ross and S. G. Schirmer, Symmetry 
and controllability for spin networks with a single node control, {\it IEEE Transactions on Automatic Control}, 57(8), 1945-1956 (2012)





\bibitem{Zimboras} Z. Zimboras, R. Zeier, T. Schulte-Herbr\"uggen and D. Burgarth, Symmetry criteria for quantum simulability of effective interactions, {\it Phys. Rev. A} {\bf 92}, 
042309 – Published 8 October 2015


\end{thebibliography}
\end{document}